\newtheorem {theorem}{Theorem}
\newtheorem {lemma}{Lemma}
\newtheorem {proposition}{Proposition}
\newtheorem {corollary}{Corollary}
\newcommand{\commentout}[1]{}
\renewcommand{\thefootnote}{\arabic{footnote}}
\author{Feodor F. Dragan
  \and Abdulhakeem Mohammed
  }
\title{Slimness of graphs}
\affiliation{
Algorithmic Research Laboratory, Department of Computer Science,  Kent State University, Kent, OH, USA}
\keywords{Metric tree-like structures, Slimness, Hyperbolicity, Layering Partition, Tree-length, Chordality}
\begin{document}
\publicationdetails{21}{2019}{3}{10}{4288}
\maketitle
\begin{abstract}
\emph{Slimness} of a graph measures the local deviation of its metric from a tree metric. In a graph $G=(V,E)$, a \emph{ geodesic triangle} $\bigtriangleup(x,y,z)$ with $x, y, z\in V$ is the union $P(x,y) \cup P(x,z) \cup P(y,z)$ of three shortest paths connecting these vertices. A geodesic triangle $\bigtriangleup(x,y,z)$ is called $\delta$-\emph{slim} if for any vertex $u\in V$ on any side $P(x,y)$ the distance from $u$ to $P(x,z) \cup P(y,z)$ is at most $\delta$, i.e. each path is contained in the union of the $\delta$-neighborhoods of two others. A graph  $G$ is called $\delta$-\emph{slim}, if  all geodesic triangles in $G$ are $\delta$-\emph{slim}. The smallest value $\delta$ for which $G$  is $\delta$-\emph{slim} is called \emph{ the slimness} of $G$.
In this paper, using the layering partition technique, we obtain sharp bounds on slimness of such families of graphs as (1) graphs with cluster-diameter $\Delta(G)$ of a layering partition of $G$, (2) graphs with tree-length $\lambda$, (3) graphs with tree-breadth $\rho$,  (4) $k$-chordal graphs, AT-free graphs and HHD-free graphs. Additionally, we show  that the slimness of every 4-chordal graph is at most 2 and characterize those 4-chordal graphs for which the slimness of every of its induced subgraph  is at most 1.
\end{abstract}
\section{Introduction}
Recently, there has been a surge of empirical works measuring and analyzing geometric characteristics of real-world networks, namely the \emph {hyperbo\-licity} (sometimes called also the \emph{global negative curvature}) of the network (see, e.g., \cite{DBLP:journals/networks/Abu-AtaD16,DBLP:conf/icdm/AdcockSM13,conf/isaac/ChenFHM12,Kennedy2013Arch,conf/nca/MontgolfierSV11,narayan2011large,Jonckheere:2008:SGH:1330411.1330417,DBLP:journals/im/JonckheereLBB11,DBLP:journals/ton/ShavittT08}). Hyperbolicity measures the local deviation of a metric from a tree metric. It has been shown that a number of data networks, including Internet application networks, web networks, collaboration networks, social networks, and others, have small hyperbolicity.
It has been confirmed (see \cite{DBLP:conf/soda/ChepoiDV17}) that the property, observed in real-world networks (see \cite{DBLP:journals/im/JonckheereLBB11,narayan2011large}),  in which traffic between vertices (nodes) tends to go through a
relatively small core of the network, as if the shortest path between them is curved inwards, is indeed due to global negative curvature of the network.

Fortunately, graphs and general geodesic metric spaces with small hyperbolicities have many algorithmic advantages. They allow efficient approximate solutions for a number of optimization problems. For example, Krauthgamer and Lee~\cite{KrLe} presented a PTAS for the Traveling Salesman Problem when the set of cities lie in a hyperbolic metric space. Chepoi and Estellon~\cite{DBLP:conf/approx/ChepoiE07} established a relationship between the minimum number of balls of
radius $r+2\delta$ covering a finite subset $S$ of a
$\delta$-hyperbolic geodesic space and the size of the maximum
$r$-packing of $S$ and showed how to compute such coverings and
packings in polynomial time. Edwards et al. \cite{Edwards2018FastAA} provided a quasilinear time algorithm for the $p$-center problem with an additive
error less than or equal to 3 times the input graph's hyperbolicity. Chepoi et al.~\cite{DBLP:conf/compgeom/ChepoiDEHV08}
gave efficient algorithms for quick and accurate estimations of diameters
and radii of $\delta$-hyperbolic geodesic spaces and graphs. Additionally, Chepoi et al.~\cite{ChDrEsRout} showed that every $n$-vertex $\delta$-hyperbolic
graph has an additive $O(\delta \log n)$-spanner with at most
$O(\delta n)$ edges and enjoys an $O(\delta\log
n)$-additive routing labeling scheme with $O(\delta\log^2n)$-bit
labels and $O(\log\delta)$ time routing protocol.
Efficient embeddings of hyperbolic graphs into hyperbolic spaces are provided by Verbeek and Suri \cite{DBLP:conf/compgeom/VerbeekS14}. Effect of hyperbolicity parameter on cuts and expansions in graphs with some algorithmic implications is considered by DasGupta et al. \cite{DBLP:journals/corr/DasGuptaKMY15}.

In case of graphs (and general geodesic metric spaces), there exist several "equivalent" definitions of $\delta$-hyperbolicity involving different but
comparable values of $\delta$ \cite{alonso1991notes,GhHa,Gromov87}. In this paper, we are interested in two of them, in Gromov's  4-point condition and in Rips'  condition involving geodesic triangles. Let $G=(V,E)$ be a graph, $d(\cdot,\cdot)$ be the shortest path metric defined on $V$, and $\delta\ge 0$.
\commentout{The \emph {Gromov product} of $y,z\in V$ with respect to $w$ is defined to be
	$$(y|z)_w=\frac{1}{2}(d(y,w)+d(z,w)-d(y,z)).$$
	Let $\delta\ge 0$. A $G=(V,E)$  is said to be $\delta$-\emph { hyperbolic} \cite{Gromov87} if
	$$(x|y)_w\ge \min \{ (x|z)_w, (y|z)_w\}-\delta$$
	for all $w,x,y,z\in V$. Equivalently, $G=(V,E)$  is $\delta$-hyperbolic
	if  for any four vertices $u,v,x,y$ of $V$, the two larger of the three distance sums
	$d(u,v)+d(x,y)$, $d(u,x)+d(v,y)$, $d(u,y)+d(v,x)$ differ by at most
	$2\delta \geq 0$. } 

A graph $G=(V,E)$  is said to be $\delta$-\emph{hyperbolic}\cite{Gromov87} if
for any four vertices $u,v,x,y$ of $V$, the two larger of the three distance sums
$d(u,v)+d(x,y)$, $d(u,x)+d(v,y)$, $d(u,y)+d(v,x)$ differ by at most
$2\delta$. The smallest value $\delta$ for which $G$  is $\delta$-hyperbolic is called \emph{the hyperbolicity} of $G$ and denoted by $hb(G)$.

A \emph{geodesic triangle} $\bigtriangleup(x,y,z)$ with $x, y, z\in V$ is the union $P(x,y) \cup P(x,z) \cup P(y,z)$ of three shortest paths connecting these vertices. A geodesic triangle $\bigtriangleup(x,y,z)$ is called $\delta$-\emph{slim} if for any vertex $u\in V$ on any side $P(x,y)$ the distance from $u$ to $P(x,z) \cup P(y,z)$ is at most $\delta$, i.e. 
each path is contained in the union of the $\delta$-neighborhoods of two others (see Figure \ref{fig:slimContained}). We say that a graph $G$ is  $\delta$-\emph{slim}, if  all geodesic triangles in $G$ are $\delta$-slim. The smallest value $\delta$ for which $G$  is $\delta$-slim is called \emph{ the slimness} of $G$ and denoted by $sl(G)$.

\begin{figure}[htbp]
	\centering
	\includegraphics[scale=0.21]{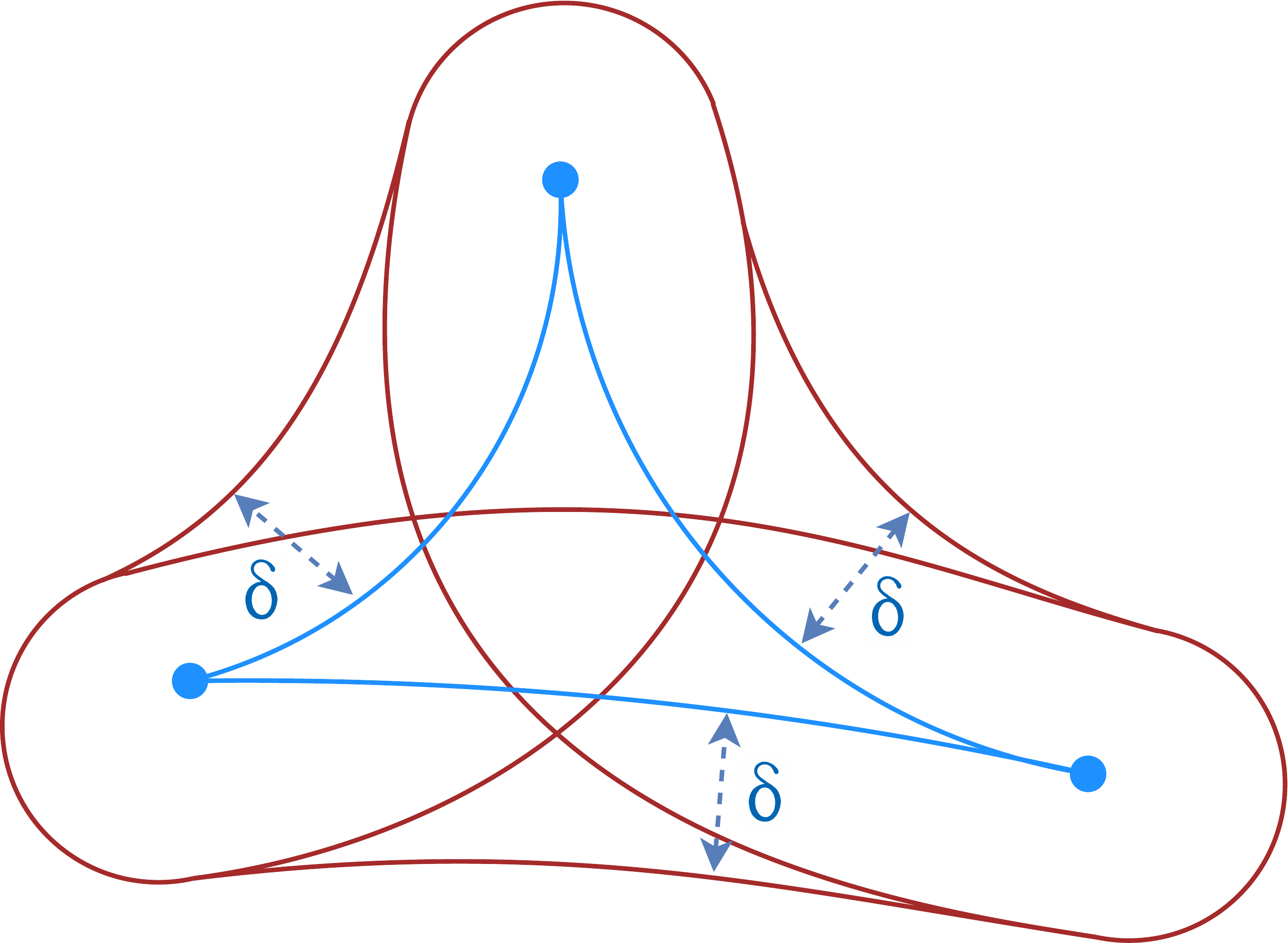}
	\caption{ A $\delta$-\emph{slim} triangle.}
	\label{fig:slimContained}
\end{figure}

It is known that for every graph $G$,  $hb(G) \leq 2sl(G)+ \frac{1}{2}$ \cite{Soto} and $sl(G) \leq 3hb(G)+ \frac{1}{2}$ \cite{alonso1991notes} and these inequalities are sharp for general graphs. It is clear from the definition that the hyperbolicity $hb(G)$ of an $n$-vertex graph $G$ can be computed in at most $O(n^4)$ time. It is less obvious, however, that the slimness $sl(G)$ of $G$ can also be computed in at most $O(n^4)$ time~\cite{chalopin2018fast_SoCG2018}.


Motivated by the abundance of real-world networks with small hyperbolicities and by many algorithmic advantages that small hyperbolicity provides, a few researchers started investigating possible bounds on the hyperbolicity $hb(G)$ of special classes of graphs (see \cite{DBLP:journals/networks/Abu-AtaD16,DBLP:journals/siamdm/BandeltC03,DBLP:journals/siamdm/ChalopinCPP14,DBLP:conf/compgeom/ChepoiDEHV08,koolen2002hyperbolic,uea21813,DBLP:journals/combinatorics/WuZ11}).  ($\frac{1}{2}$)-Hyperbolic graphs were characterized in \cite{DBLP:journals/siamdm/BandeltC03} and then later another characterization with more direct algorithmic applications was proposed in \cite{coudert2014recognition}. 
The hyperbolicity of chordal graphs was investigated in \cite{uea21813}. It was shown that the hyperbolicity of a chordal graph is at most 1. Furthermore,  chordal graphs with hyperbolicity at most $\frac{1}{2}$ were characterized by two forbidden isometric subgraphs \cite{uea21813}.  
Later, $k$-chordal graphs were studied in \cite{DBLP:journals/combinatorics/WuZ11}. It was proven that every $k$-chordal graph ($k\ge 4$) has hyperbolicity at most $\frac{\lfloor\frac{k}{2}\rfloor}{2}$ \cite{DBLP:journals/combinatorics/WuZ11}. Additionally, all 4-chordal (5-chordal) graphs with hyperbolicity at most $\frac{1}{2}$ were characterized by three (five, respectively) forbidden isometric subgraphs \cite{DBLP:journals/combinatorics/WuZ11}. The hyperbolicity of graphs with tree-length $\lambda$ was considered in \cite{DBLP:conf/compgeom/ChepoiDEHV08}. It was shown that it is at most $\lambda$.  Recently, in \cite{DBLP:journals/networks/Abu-AtaD16}, also graphs with bounded cluster-diameter $\Delta(G)$ (the minimum cluster-diameter over all layering partitions of $G$; see Section \ref{sec:relations} for more details) were considered. It was shown that their hyperbolicity is at most $\Delta(G)$.

\begin{table} [htbp]
	\centering
	\begin{tabular}{|l|l|l|}
		\hline
		\rowcolor{lightgray} Graph Class                & Hyperbolicity & Slimness \\ \hline\noalign{\smallskip}
		graphs with cluster-diameter $\Delta(G)$ & $\leq\Delta(G)$\cite{DBLP:journals/networks/Abu-AtaD16} & $\leq\lfloor\frac{3}{2}\Delta(G)\rfloor$ [here] \\ \hline\noalign{\smallskip}	
		graphs with tree-length $\lambda$ & $\leq\lambda$\cite{DBLP:conf/compgeom/ChepoiDEHV08} & $\leq\lfloor\frac{3}{2}\lambda\rfloor$ \cite{diestel2012connectedTW}, [here] \\ \hline\noalign{\smallskip}	
		graphs with tree-breadth $\rho$ & $\leq2\rho$ (follows from \cite{DBLP:conf/compgeom/ChepoiDEHV08}) & $\leq 3 \rho$ [here] \\ \hline			 k-chordal graphs ($k\ge 4$) & $\leq\frac{\lfloor\frac{k}{2}\rfloor}{2}$\cite{DBLP:journals/combinatorics/WuZ11} &$\leq\lfloor\frac{k}{4}\rfloor+1$ \cite{bermudo2016hyperbolicity}, [here] \\ \hline
		4-chordal graphs     & $\leq1$ \cite{DBLP:journals/combinatorics/WuZ11}              & $\leq2$ [here]         \\ \hline
		HHD-free  graphs          & $\leq1$  \cite{DBLP:journals/combinatorics/WuZ11}             & $\leq1$  [here]        \\  \hline
		AT-free graphs     & $\leq1$   \cite{DBLP:journals/combinatorics/WuZ11}          & $\leq1$ [here]         \\ \hline
		Chordal graphs            & $\leq1$ \cite{uea21813}            & $\leq1$  [here]        \\  \hline
		Block graphs              & 0 [folklore]            & 0  [folklore]       \\  \hline
		
	\end{tabular}
	\caption{Hyperbolicity and slimness of some structured graph classes.}
	\label{table:hb_N_sl_G}
\end{table}

Much less is known about the slimness of particular classes of graphs. One can get straightforward bounds using the general inequality $sl(G) \leq 3hb(G)+ \frac{1}{2}$ and known bounds on $hb(G)$, but the bounds on $sl(G)$ obtained this way are usually far from being sharp.
In \cite{diestel2012connectedTW}, a non-trivial result was obtained by Diestel and M\"uller. It was shown that the slimness of graphs with tree-length $\lambda$ is at most $\lfloor\frac{3}{2}\lambda\rfloor$, and the result is sharp for every $\lambda\ge 1$. Recently, Bermudo et al. \cite{bermudo2016hyperbolicity} proved that the slimness of a $k$-chordal graph is at most $\frac{k}{4}+1$.

In this paper, using the layering partition technique, in an unified way we obtain sharp bounds on slimness of such families of graphs as (1) graphs with cluster-diameter $\Delta(G)$ of a layering partition of $G$, (2) graphs with tree-length $\lambda$, (3) graphs with tree-breadth $\rho$,  (4) $k$-chordal graphs, AT-free graphs and HHD-free graphs. Additionally, we show  that the slimness of every 4-chordal graph is at most 2 and characterize those 4-chordal graphs for which the slimness of every of its induced subgraph  is at most 1. Table \ref{table:hb_N_sl_G} summarizes our results for slimness and known results for hyperbolicity in special graph classes.

\subsection{Basic notions and notations}
All graphs appearing here are connected, finite, unweighted, undirected, loopless and without multiple edges. 
For a graph $G=(V,E)$, we use $n$ and $|V|$ interchangeably to denote the number of vertices in $G$. Also, we use $m$ and $|E|$ to denote the number of edges. A path $P$ of length $k$ in a graph $G$ is a sequence of vertices $(v_0,v_1,\dots,v_k)$ such that $v_i$ is adjacent to $v_{i+1}$ for each $i, 0\leq i< k$. 
The {\em distance} $d_G(u,v)$ between vertices $u$ and $v$ is the length of a shortest path connecting $u$ and $v$ in $G$. The distance $d_G(v,M)$ between a vertex $v$ and a set $M\subseteq V$ is defined by $d_G(v,M)=\min\{d_G(v,u): u\in M\}$.   The {\em ball} $B_r(s,G)$ of a graph $G$ centered at a vertex $s \in V$ and with radius $r$ is the set of all vertices with distance no more than $r$ from $s$ (i.e., $B_r(s,G)=\{v\in V: d_G(v,s) \leq r \}$). We omit the graph name $G$ and write  $B_r(s)$ if the context is about only one graph. For any two vertices $u$, $v$ of $G$, $I(u,v)= \lbrace z\in V:d(u,v)=d(u,z)+d(z,v) \rbrace$ is the (metric) {\sl interval} between $u$ and $v$, i.e., all vertices that lay on shortest paths between $u$ and $v$.
A subgraph $H$ of a graph $G$ is called {\em isometric} if for every two vertices $u,v$ of $H$, $d_G(u,v)=d_H(u,v)$ holds.
%
%
The \emph{eccentricity} of a vertex $v$, denoted by $ecc_G(v)$, is the largest distance from that vertex $v$ to any other vertex, i.e., $ecc_G(v)=\max_{u \in V} d_G(v,u)$.  The \emph{radius} $rad(G)$ of a graph $G$ is the minimum eccentricity of a vertex in $G$, i.e., $rad(G)=\min_{v \in V} \max_{u\in V}d_G(v,u)$.

Definitions of other graph parameters considered, as well as notions and notation local to a section, are given in appropriate sections.

\section{Slimness and other tree-likeness parameters}	\label{sec:relations}
In this section, we establish a relation between the slimness and the cluster-diameter of a layering partition of a graph. As a corollary, we get a recent result of Diestel and M\"uller \cite{diestel2012connectedTW} on a relation between the slimness and the tree-length of a graph.

\subsection{Slimness and cluster-diameter of a layering partition} \label{subsec:cl-diam}
Layering partition is a graph decomposition procedure introduced in~\cite{DBLP:journals/jal/BrandstadtCD99,DBLP:journals/ejc/ChepoiD00} and used in~\cite{BaInSi,DBLP:journals/jal/BrandstadtCD99,DBLP:journals/ejc/ChepoiD00,ChepoiDNRV12} 
for embedding graph metrics into trees.

A \emph{layering} of a graph $G=(V, E)$ with respect to a start vertex $s$ is the decomposition of $V$ into $r+1$ layers (spheres), where $r=ecc_G(s)$ \begin{displaymath}L^i(s)=\{u\in V:d_G(s,u)=i\},i=0,1,\dots,r.\end{displaymath} A \emph{layering partition} \begin{displaymath}\mathcal{LP}(G,s)=\{L^i_1,\ldots,L^i_{p_i}:i=0,1,\dots,r\}.\end{displaymath} of $G$ is a partition of each layer $L^i(s)$ into clusters $L^i_1,\dots,L^i_{p_i}$ such that two vertices $u,v \in L^i(s)$ belong to the same cluster $L^i_j$ if and only if they can be connected by a path outside the ball $B_{i-1}(s)$ of radius $i-1$ centered at $s$. Here, $p_i$ is the number of clusters in layer $i$. See Figure \ref{fig:layering-partition} for an illustration. A layering partition of a graph can be constructed in $O(n+m)$ time (see~\cite{DBLP:journals/ejc/ChepoiD00}).

A \emph{layering tree} $\Gamma(G,s)$ of a graph $G$ with respect to a layering partition $\mathcal{LP}(G,s)$  is the graph whose nodes are the clusters of $\mathcal{LP}(G,s)$ and where two nodes $C=L_j^i$ and $C'=L_{j'}^{i'}$ are adjacent in $\Gamma(G,s)$ if and only if there exist a vertex $u \in C$ and a vertex $v\in C'$ such that $uv \in E$. It was shown in~\cite{DBLP:journals/jal/BrandstadtCD99} that the graph $\Gamma(G,s)$ is always a tree and, given a start vertex $s$,  it can be constructed in $O(n+m)$ time~\cite{DBLP:journals/ejc/ChepoiD00}. Note that, for a fixed start vertex $s\in V$, the layering partition $\mathcal{LP}(G,s)$ of $G$ and its tree $\Gamma(G,s)$ are unique.

The \emph{cluster-diameter $\Delta_s(G)$ of the layering partition $\mathcal{LP}(G,s)$ with respect to vertex $s$} is the largest diameter of a cluster in $\mathcal{LP}(G,s)$, i.e., \begin{displaymath}\Delta_s(G)=\max_{C \in \mathcal{LP}(G,s)} \max_{u,v\in C}d_G(u,v).\end{displaymath} 

The \emph{cluster-diameter $\Delta(G)$ of a graph $G$} is the minimum cluster-diameter over all layering partitions of $G$, i.e., \begin{displaymath}\Delta(G)=\min_{s \in V}\Delta_s(G).\end{displaymath} Let also  $\widehat{\Delta}(G)$ denote the maximum cluster-diameter over all layering partitions of $G$, i.e., \begin{displaymath}\widehat{\Delta}(G)=\max_{s \in V}\Delta_s(G).\end{displaymath}
	
	\begin{figure}[H]\footnotesize
		\centering                                                        
		\subfigure[][Layering of graph $G$ with respect to $s$.]
		{                    
			\scalebox{0.20}[0.20]{\includegraphics{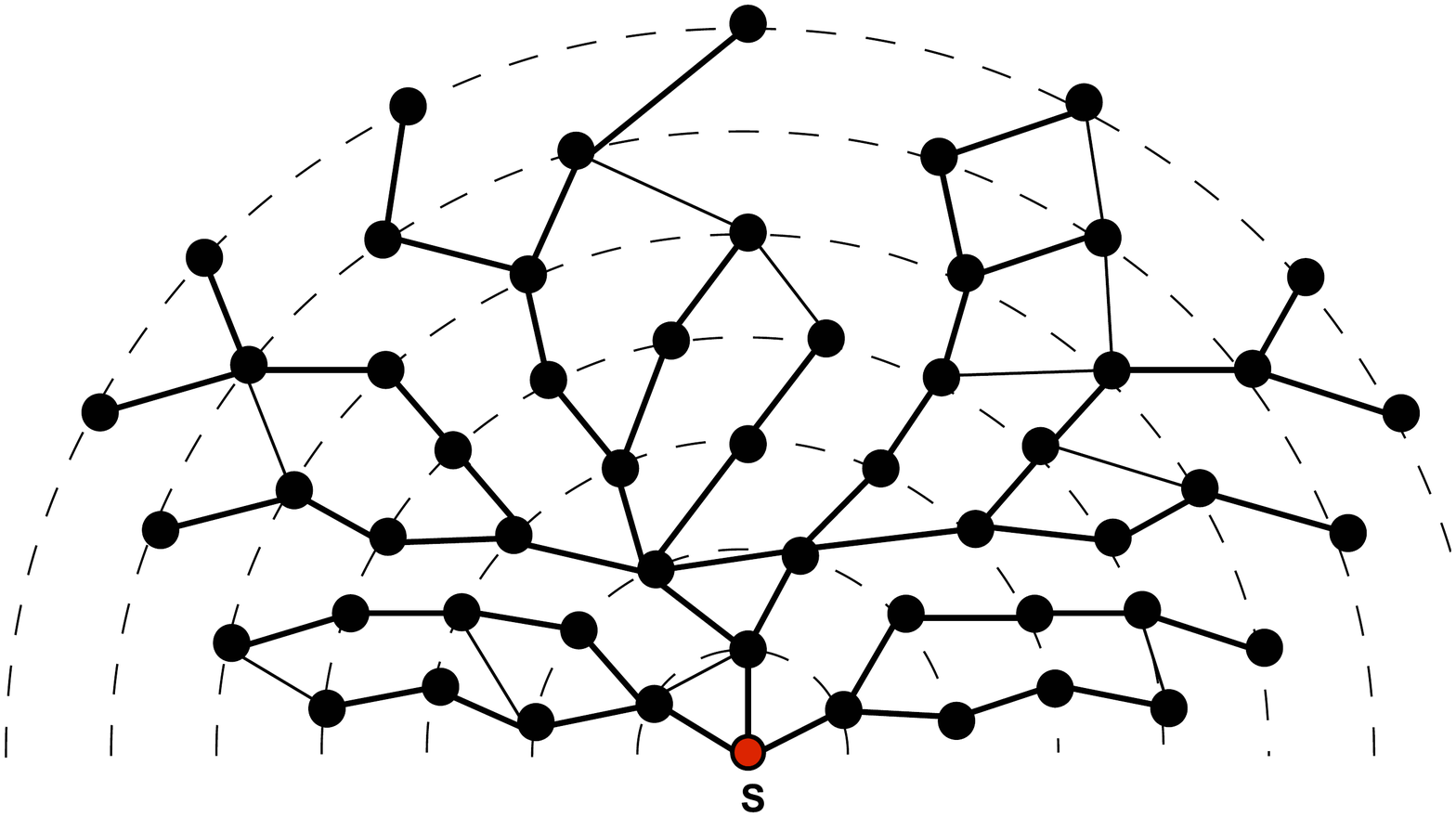}}
			\label{fig:layering}      
		}
		\subfigure[][ Clusters of the layering partition $\mathcal{LP}(G,s)$.]
		{
			\scalebox{0.20}[0.20]{\includegraphics{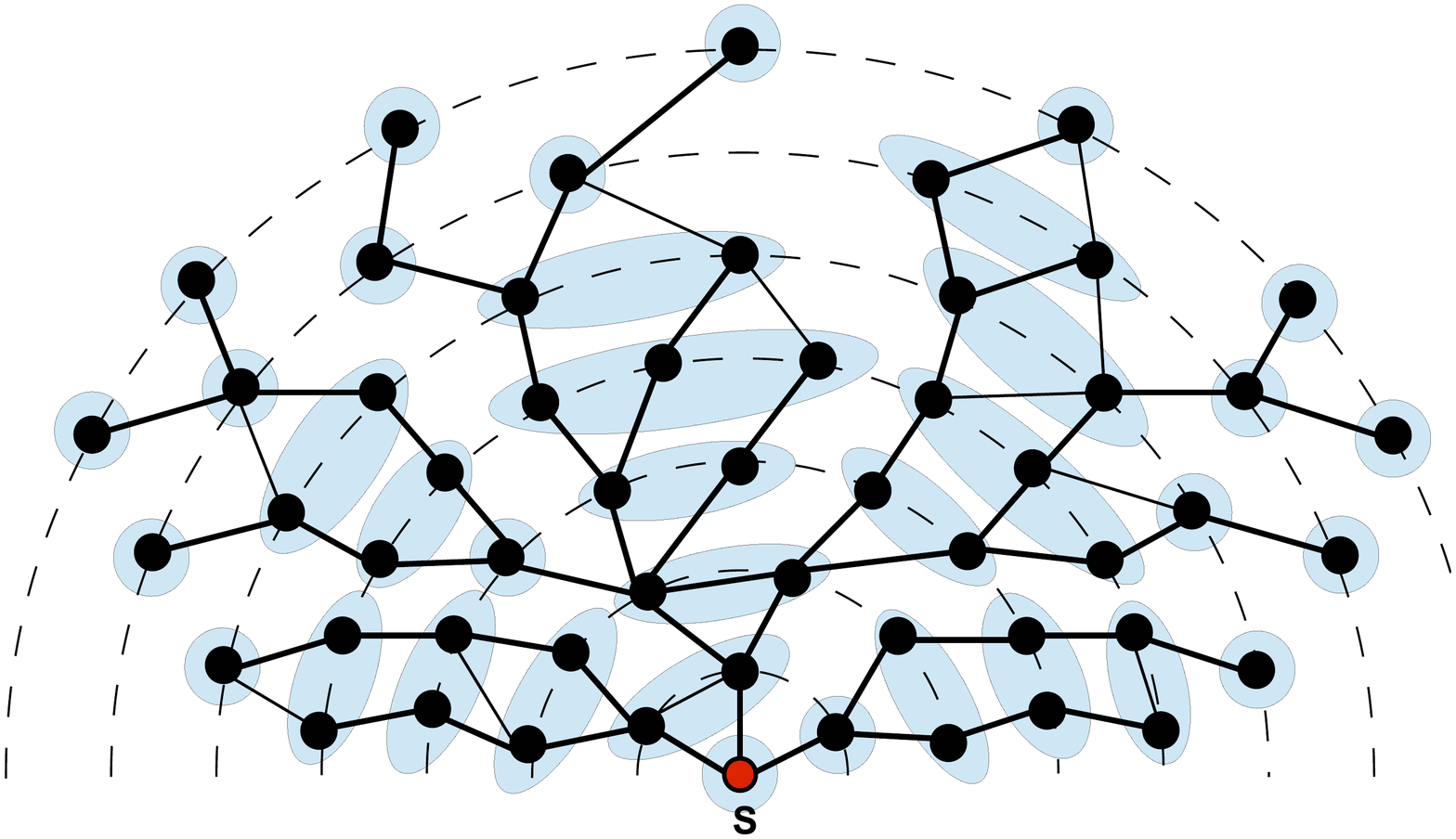}}
			\label{fig:Layering-clusters}
		}
		
		\subfigure[][Layering tree $\Gamma(G,s)$.]
		{                    
			\scalebox{0.20}[0.20]{\includegraphics{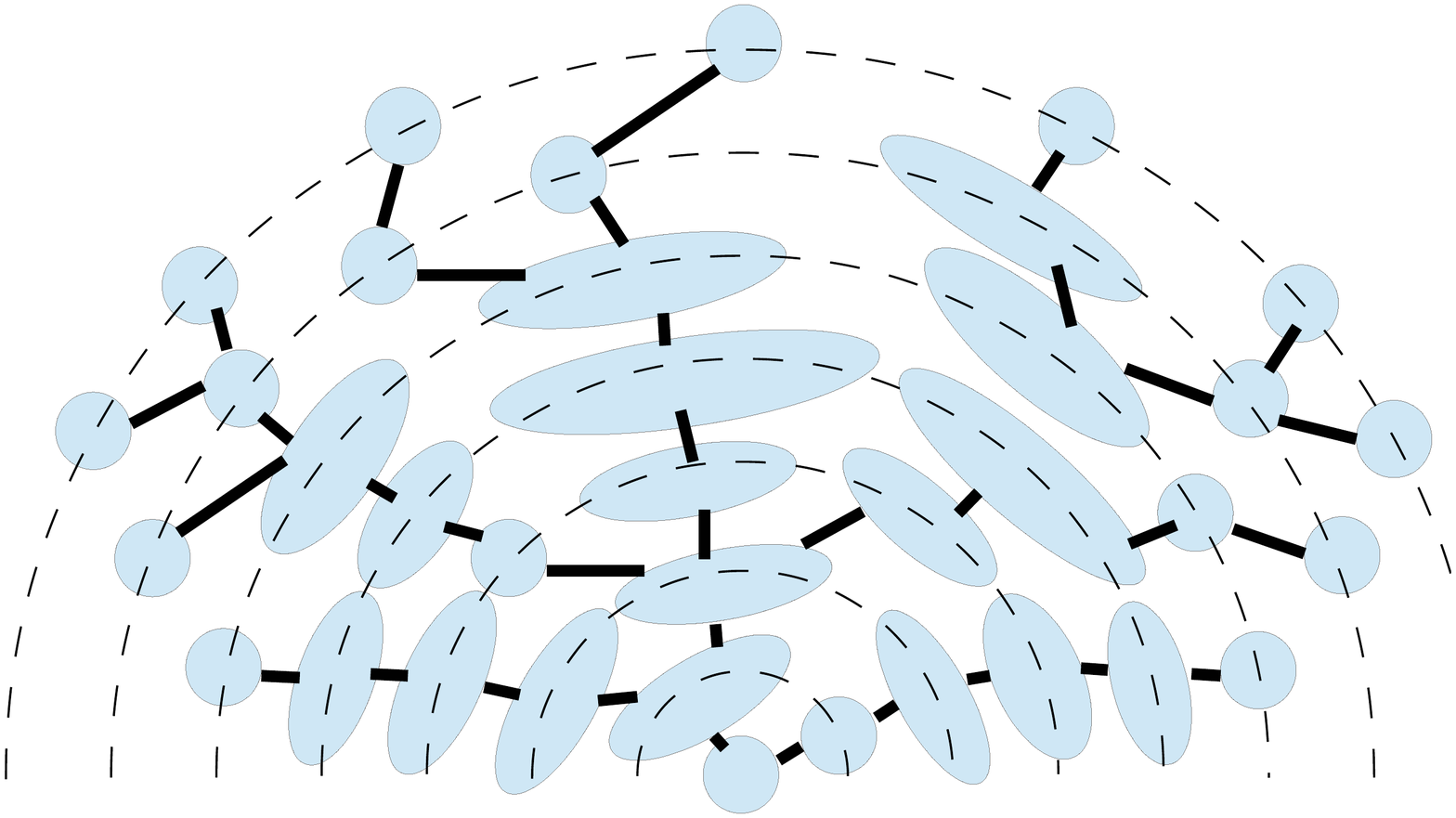}}
			\label{fig:gamma}    
		}
		\caption{\small Layering partition and layering tree}
		\label{fig:layering-partition}
	\end{figure}

\renewcommand{\thefootnote}{\arabic{footnote}}
Finding the cluster-diameter $\Delta_s(G)$ for a given layering partition $\mathcal{LP}(G,s)$ of a graph $G$ requires $O(n m)$ time
\footnote{The parameters $\Delta(G)$ and $\widehat{\Delta}(G)$  can also be computed in total $O(n m)$ time for any graph $G$.}
(we need to know the distance matrix of $G$), although the construction of the layering partition $\mathcal{LP}(G,s)$ itself, for a given vertex $s$, takes only $O(n+m)$ time.

It is not hard to show that, for any graph $G$ and any two of its vertices $s,q$, $\Delta_q(G)\leq 3 \Delta_s(G)$. Thus, the choice of the start vertex for constructing a layering partition of $G$   is not that important.

\begin{proposition} \label{prop:ClustDiamAtAnys}
	Let $s$ be an arbitrary vertex of $G$. For every vertex $q$ of $G$, $\Delta_q(G)\le 3\Delta_s(G)$. In particular,
	$\widehat{\Delta}(G)\le 3 \Delta(G)$ for every graph $G$.
\end{proposition}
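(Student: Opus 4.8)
The plan is to reduce, via the layering partition $\mathcal{LP}(G,s)$, the diameter of an arbitrary cluster of $\mathcal{LP}(G,q)$ to the cluster-diameter $\Delta_s(G)$. Fix two vertices $u,v$ lying in a common cluster of $\mathcal{LP}(G,q)$, say in layer $L^j(q)$, and let $P$ be a path connecting $u$ and $v$ outside the ball $B_{j-1}(q)$; such a $P$ exists by the definition of a cluster. Let $x$ be a vertex of $P$ closest to $s$ and put $k=d(s,x)=\min_{y\in P}d(s,y)$. Since $k$ is the minimum of $d(s,\cdot)$ along $P$, the whole path $P$ avoids $B_{k-1}(s)$.

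The next step is to ``lift'' $u$ and $v$ to the level of $x$. Let $u^{*}$ (resp. $v^{*}$) be the vertex at distance $k$ from $s$ on a fixed shortest $(s,u)$-path (resp. shortest $(s,v)$-path); these exist because $d(s,u),d(s,v)\ge k$. Concatenating the $(u^{*},u)$-portion of that geodesic with the $(u,x)$-portion of $P$ yields a walk from $u^{*}$ to $x$ staying outside $B_{k-1}(s)$, and symmetrically for $v^{*}$. Hence $u^{*},v^{*},x$ all belong to the single layer-$k$ cluster of $\mathcal{LP}(G,s)$ containing $x$, so that $d(u^{*},v^{*})\le\Delta_s(G)$. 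The triangle inequality then gives
\begin{equation*}
d(u,v)\le d(u,u^{*})+d(u^{*},v^{*})+d(v^{*},v)=\big(d(s,u)-k\big)+\big(d(s,v)-k\big)+d(u^{*},v^{*}),
\end{equation*}
since $d(u,u^{*})=d(s,u)-k$ and $d(v,v^{*})=d(s,v)-k$.

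It remains to control the two ``vertical descents'' $d(s,u)-k$ and $d(s,v)-k$: as $d(u^{*},v^{*})\le\Delta_s(G)$, it suffices to establish $\big(d(s,u)-k\big)+\big(d(s,v)-k\big)\le 2\Delta_s(G)$, which yields $d(u,v)\le 3\Delta_s(G)$ and therefore $\Delta_q(G)\le 3\Delta_s(G)$. The ``in particular'' part then follows immediately: choosing $s$ with $\Delta_s(G)=\Delta(G)$ and letting $q$ range over $V$ gives $\widehat{\Delta}(G)=\max_{q}\Delta_q(G)\le 3\Delta(G)$.

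I expect this last inequality to be the heart of the matter, and it is the main obstacle. One cannot bound the two descents separately: a single descent can be as large as $2\Delta_s(G)$ (it occurs already when $x=u=s$ and $v$ sits two levels deeper in a graph of cluster-diameter $1$), so only their sum can be controlled. The mechanism I would exploit is the tree structure of $\Gamma(G,s)$: the clusters of $\mathcal{LP}(G,s)$ form a laminar family in which an edge of $G$ joins only equal or adjacent clusters, so every $u$--$v$ path, and in particular $P$, must visit each cluster on the $\Gamma(G,s)$-path between the clusters of $u$ and $v$; equivalently, each cluster on that tree-path is a separator. Tracking the upward geodesics $u\to s$ and $v\to s$, they coincide in one cluster from their ``merge level'' downward and split into distinct sibling clusters above it; since $P$ is forced to cross between these two separated branches, one should be able to exhibit two vertices of a single cluster at distance equal to (a suitable portion of) the combined descent, whence the descents are charged jointly against $\Delta_s(G)$. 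Making this charging argument precise — rather than estimating the descents one at a time — is the delicate point I anticipate having to work out carefully.
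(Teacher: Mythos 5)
Your setup is fine as far as it goes: the observation that $u^{*}$, $v^{*}$ and $x$ lie in a common cluster of $\mathcal{LP}(G,s)$ is correct, and the triangle-inequality decomposition is valid. But the proof then stops exactly where the proposition's difficulty begins: the inequality $\bigl(d(s,u)-k\bigr)+\bigl(d(s,v)-k\bigr)\le 2\Delta_s(G)$ is left as an announced ``delicate point,'' so nothing is actually proved. Worse, in the form you state it the inequality is \emph{false}, because you took $P$ to be an \emph{arbitrary} path witnessing that $u,v$ share a cluster of $\mathcal{LP}(G,q)$, and such a path may dip needlessly close to $s$. Concretely, take the $2\times(n+1)$ grid with vertices $(i,j)$, $i\in\{0,1\}$, $0\le j\le n$, and let $s=(0,0)$, $q=(0,n)$. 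Here $\Delta_s(G)=2$. The vertices $u=(0,n-2)$ and $v=(1,n-1)$ lie in one cluster of $\mathcal{LP}(G,q)$ (layer $t=2$), and the path running from $u$ down row $0$ through $s$ and back up row $1$ to $v$ stays at $q$-distance at least $2$ throughout, so it is a legitimate choice of $P$; for it, $k=0$ and the two descents sum to $2n-2$, which exceeds $2\Delta_s(G)=4$ for all $n\ge 4$. So your reduction cannot be closed for an arbitrary witnessing $P$; at minimum you would have to choose $P$ maximizing $k$, and even then it is unclear that the lowest level of $P$ is the right anchor.

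The missing ingredient is that you discard the constraint $d(q,\cdot)\ge j$ along $P$ after using it once to define the cluster, whereas that constraint is precisely what controls the descents. The paper anchors the argument not at the lowest layer reached by $P$ but at the \emph{median node} $M$ of the three clusters $X,Y,Q$ of $\mathcal{LP}(G,s)$ containing $u$, $v$, $q$ in the layering tree $\Gamma(G,s)$ (this is the precise form of the separator/merge-level idea you gesture at in your last paragraph). Since $M$ separates $v$ from both $u$ and $q$ (or contains one of them), the witnessing path $P$ meets $M$ in a vertex $w$, and a shortest $(q,v)$-path meets $M$ in a vertex $v'$. The defining property of the cluster of $\mathcal{LP}(G,q)$ gives $d(q,v)\le d(q,w)$, and then
\begin{equation*}
d(q,v')+d(v,v')=d(q,v)\le d(q,w)\le d(q,v')+d(v',w)
\end{equation*}
yields $d(v,M)\le d(v,v')\le d(v',w)\le\Delta_s(G)$, and symmetrically $d(u,M)\le\Delta_s(G)$; together with $\mathrm{diam}(M)\le\Delta_s(G)$ this gives $d(u,v)\le 3\Delta_s(G)$. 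Note that this bounds each vertex's distance to a \emph{common} cluster $M$ using the $q$-distance comparison, rather than bounding vertical descents to the lowest level of $P$ — so to complete your proof you would essentially have to replace your level-$k$ anchor by this median-cluster argument.
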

\begin{proof}
	Let $\mathcal{LP}(G,s)$ be a layering partition and let $\Gamma(G,s)$ be the corresponding layering tree. Consider an arbitrary vertex $q$ and let $x,y$ be  two vertices of a cluster of $\mathcal{LP}(G,q)$ such that $d_G(x,y)=\Delta_q(G)$. If both $x,y$ belong to the same cluster in $\mathcal{LP}(G,s)$, then  $d_G(x,y)\leq\Delta_s(G)$. Thus, assume that $x,y$ are in different nodes of $\Gamma(G,s)$. Let $X,Y,Q$ be the nodes of $\Gamma(G,s)$ containing vertices $x,y,q,$ respectively. Consider the median node $M$ of nodes $X,Y,Q$  in tree $\Gamma(G,s)$ (see Figure \ref{fig:MedianNode_ClustDiam} for an illustration), i.e., the unique node common to three paths of tree $\Gamma(G,s)$ connecting corresponding nodes from  $\{X,Y,Q\}$. We will show that $d_G(y,M)$ is at most $\Delta_s(G)$ (the same holds for $d_G(x,M)$) and since cluster $M$ has  diameter at most $\Delta_s(G)$ in $G$, we will get that $d_G(x,y)$ is at most $3\Delta_s(G)$.
	
	By the definition of layering partition $\mathcal{LP}(G,q)$, there exists a path $P(x,y)$ in $G$ (not necessarily shortest) between $x$ and $y$  such that no $w\in P(x,y)$ with $d_G(q,w) < d_G(q,y)$ exists. Let $w$ be a vertex of $M\cap P(x,y)$ and $y'$ be a vertex from $M$ laying on a shortest path of $G$ between $y$ and $q$ (such vertices necessarily exist  by the construction of tree $\Gamma(G,s)$ and choice of $M$; $M$ separates in $G$ $y$ from $x$ and $q$, or it contains some of those vertices). We know $d_G(q,y) \leq d_G(q,w)$.
	
	Hence, $d_G(q,y')+d_G(y,y')=d_G(q,y) \leq d_G(q,w) \leq d_G(q,y')+d_G(y',w),$ i.e., $d_G(y,y') \leq d_G(y',w).$ As $d_G(y',w)\leq \Delta_s(G)$, we get
	$d_G(y,M)\leq d_G(y,y')\leq d_G(y',w)\leq \Delta_s(G).$
\end{proof}
To see that the result is sharp, consider a graph in Figure \ref{fig:ClustDiamAtAny}. For that graph and its vertices $q$ and $s$,  $\Delta_q(G)=6$ and $\Delta_s(G)=2$ hold. Furthermore, subdividing every edge $k-1$ times yields a graph $G$ for which $\Delta_q(G)=6k$ and $\Delta_s(G)=2k$ for every integer $k\ge 1$.

  \begin{figure}[htbp]\footnotesize
	\centering                                                        
	\subfigure[][]
	{                    
		\scalebox{0.28}[0.28]{\includegraphics{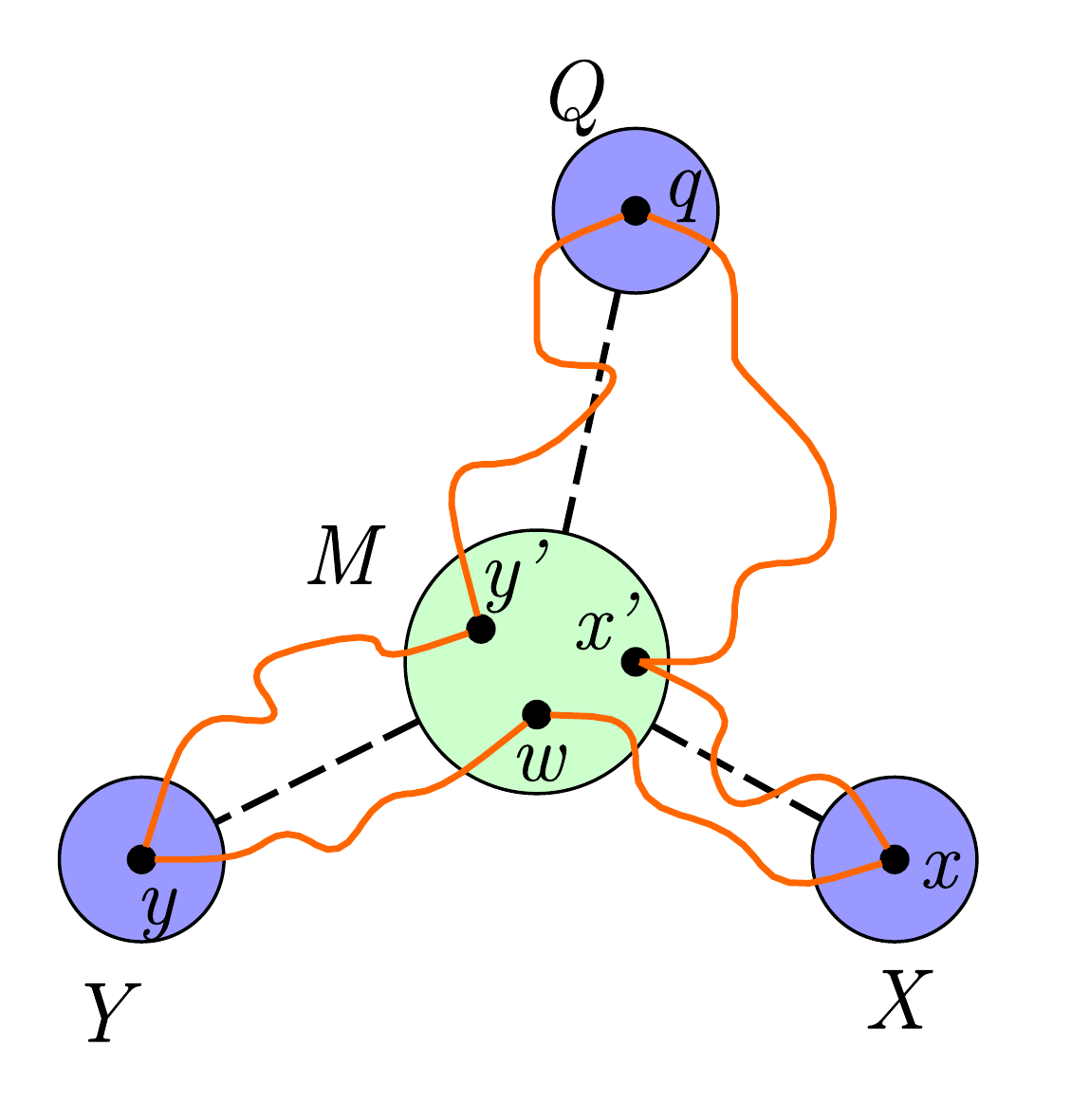}}
		\label{fig:MedianNode_ClustDiam}      
	}
	\hspace{7ex}%
	\subfigure[][]
	{
		\scalebox{0.28}[0.28]{\includegraphics{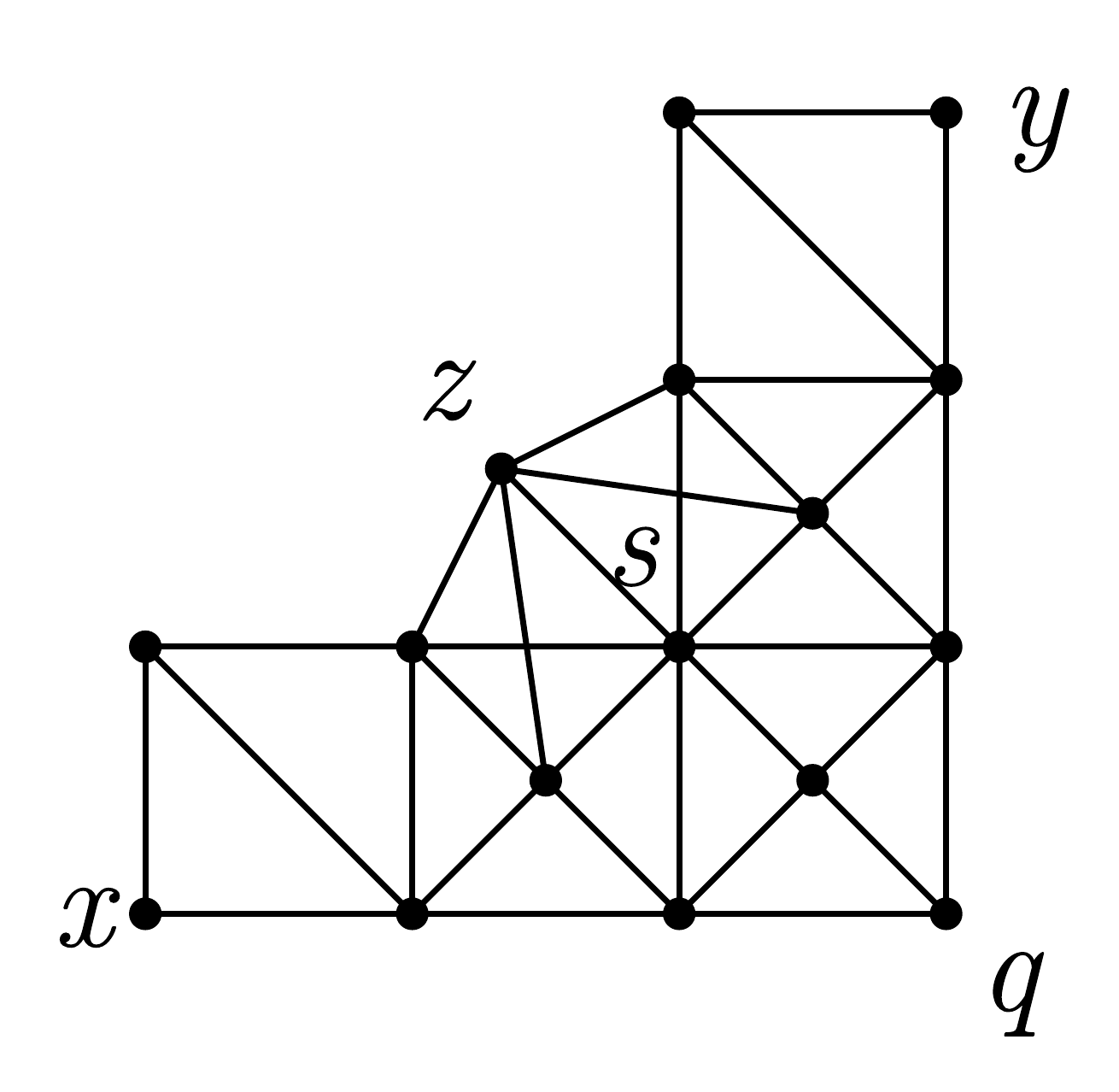}}
		\label{fig:ClustDiamAtAny}
	}
	
	\subfigure[][]
	{                    
		\scalebox{0.28}[0.28]{\includegraphics{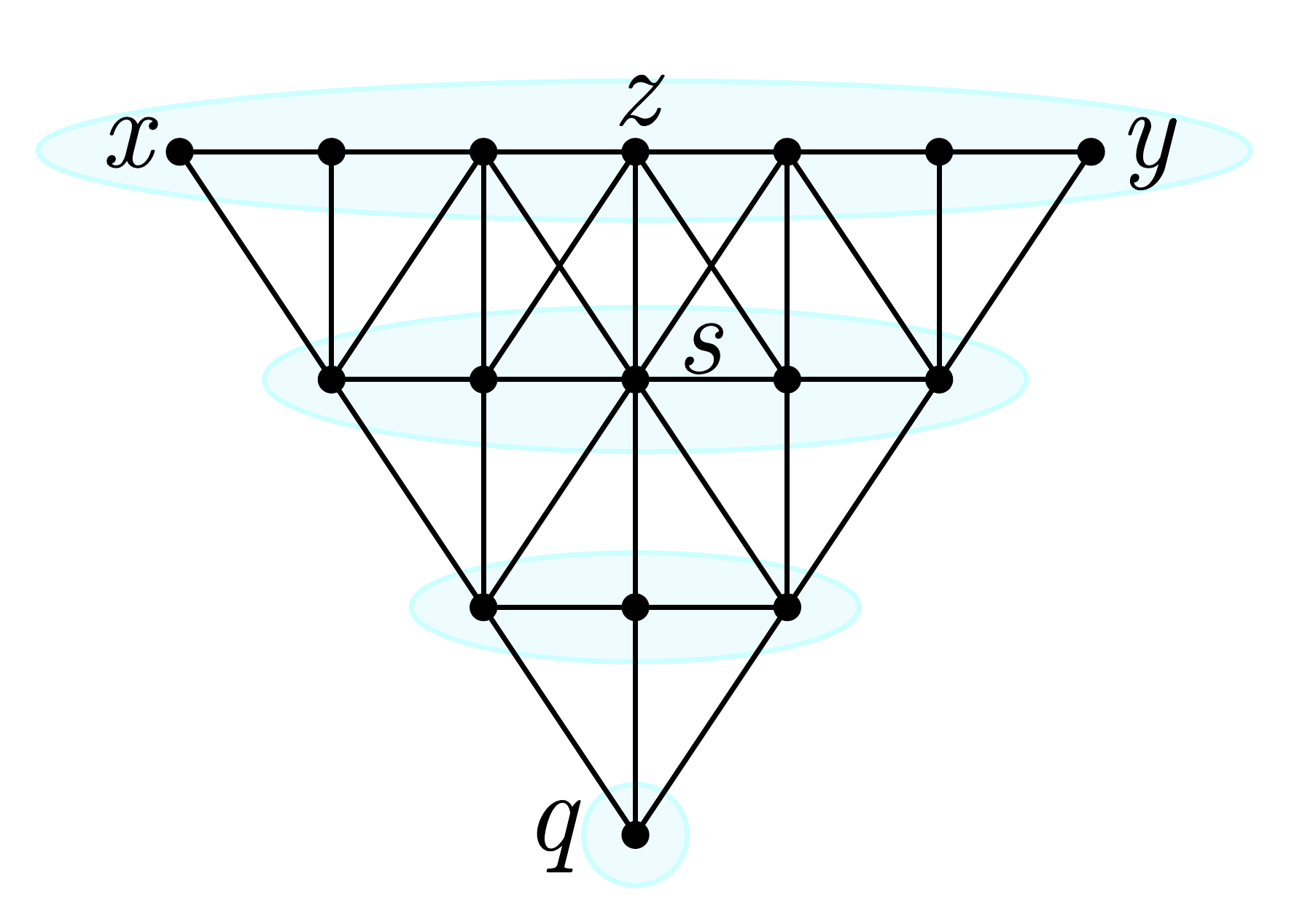}}
		\label{fig:LayeringPartitionAt_q}     
	}
	\subfigure[][]
	{                    
		\scalebox{0.28}[0.28]{\includegraphics{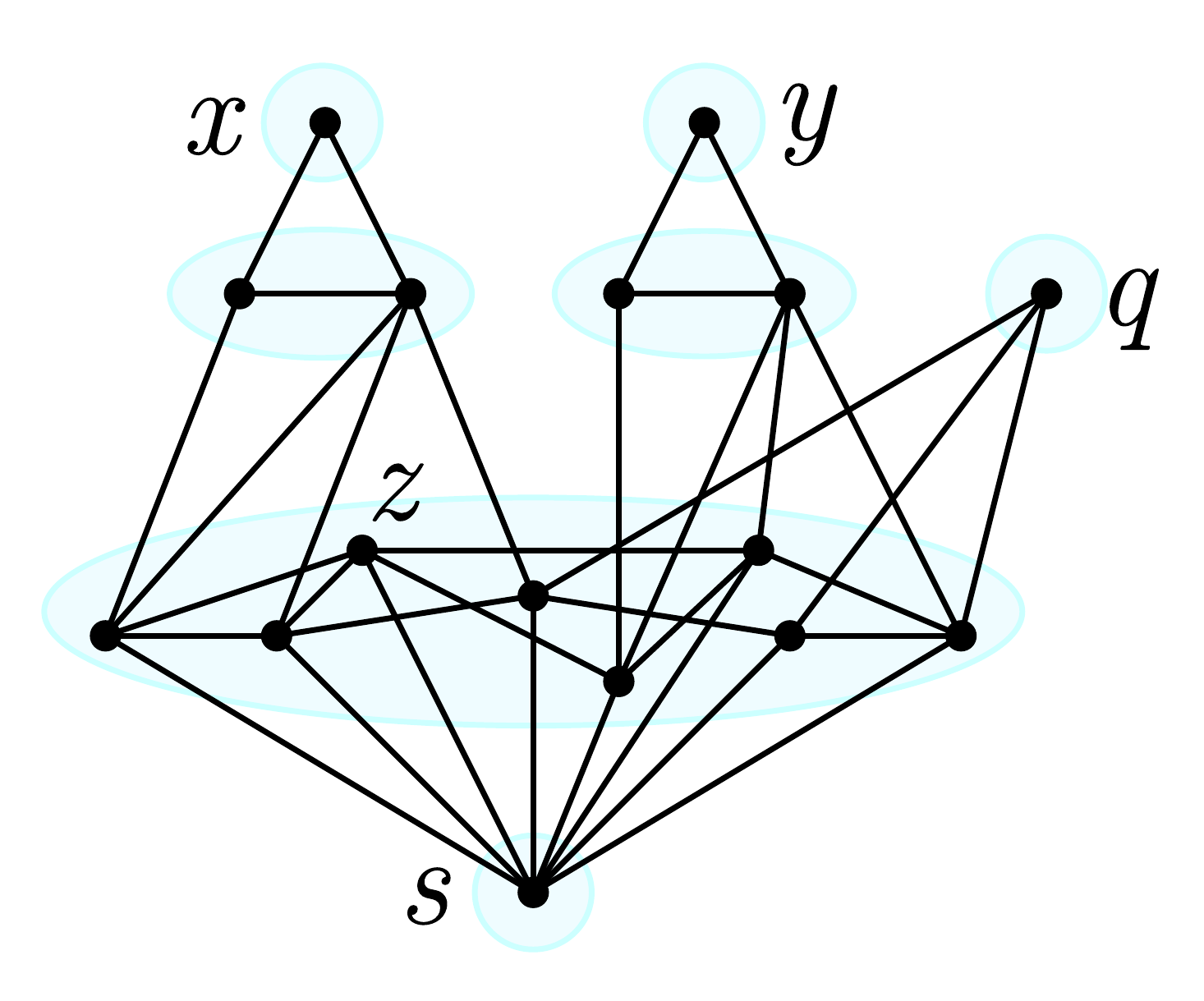}}
		\label{fig:LayeringPartitionAt_s}     
	}
	\vspace{-2ex}
	\caption{\small (a) Illustration to the proof of Proposition \ref{prop:ClustDiamAtAnys}; (b) Sharpness of inequalities in 
		Proposition \ref{prop:ClustDiamAtAnys}; (c) and (d) Clusters of the layering partitions of the graph from 
		Figure \ref{fig:ClustDiamAtAny} with respect to $q$ and $s$, respectively.}
	\label{fig:Prop1}  
\end{figure}

The following theorem establishes a relationship between the slimness and the cluster-diameter of a layering partition of a graph.		

\begin{theorem} \label{TH:Slimness_ClusterDiam}
	For every graph $G$,  $sl(G)\leq\lfloor\frac{\widehat{\Delta}(G)}{2}\rfloor$.
\end{theorem}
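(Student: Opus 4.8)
The plan is to fix a geodesic triangle $\bigtriangleup(x,y,z)$, one of its sides, say $P(x,y)$, and a vertex $u\in P(x,y)$, and to bound the distance from $u$ to $S:=P(x,z)\cup P(y,z)$ by $\lfloor\widehat{\Delta}(G)/2\rfloor$; by the symmetry of the three sides the identical argument handles a vertex on $P(x,z)$ or on $P(y,z)$. The crucial, and perhaps counter-intuitive, choice is to build the layering partition not from a corner of the triangle but from the vertex $u$ itself, so that the estimate is expressed through $\Delta_u(G)\le\widehat{\Delta}(G)$. Rooting at a corner does not suffice: if $z$ happens to lie on $P(x,y)$, then $P(x,y)$ and $P(x,z)$ form a ``bigon'' that a corner-rooted partition fails to control, whereas rooting at $u$ does.

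First I would dispose of the trivial case $u\in S$, where the distance is $0$. Otherwise set $\ell:=d(u,S)\ge 1$ and pick $c\in S$ with $d(u,c)=\ell$. I build $\mathcal{LP}(G,u)$ and let $C$ be the cluster at level $\ell$ containing $c$; explicitly $C=K\cap L^\ell(u)$, where $K$ is the connected component of $G-B_{\ell-1}(u)$ containing $c$. Because $c$ realizes the minimum distance from $u$ to the connected set $S$, every vertex of $S$ lies at level $\ge\ell$, and the connectedness of $S$ keeps $x$ and $y$ in the same component $K$ as $c$. Thus $x,y\in K$ while $u\notin K$ (as $u$ has level $0<\ell$).

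The key step is the separation property of clusters: any path leaving $K$ in the direction of $u$ must pass through $C$. Indeed, the last vertex of such a path that still lies in $K$ is adjacent to a vertex of level $<\ell$, hence has level exactly $\ell$ and therefore belongs to $K\cap L^\ell(u)=C$. Applying this to the two halves $P(x,u)$ and $P(u,y)$ of $P(x,y)$, which are shortest paths joining the $K$-vertices $x,y$ to $u\notin K$, yields vertices $p\in C\cap P(x,u)$ and $q\in C\cap P(u,y)$. This is the heart of the argument, and producing $p$ and $q$ on the two sides of $u$ is where I expect the only real work.

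Finally I would collect the estimate. Since $p,q\in C\subseteq L^\ell(u)$ we have $d(u,p)=d(u,q)=\ell$, and since $p$ and $q$ lie on opposite sides of $u$ along the geodesic $P(x,y)$, the subpath from $p$ to $q$ is itself a shortest path through $u$, so $d(p,q)=d(p,u)+d(u,q)=2\ell$. On the other hand $p$ and $q$ lie in a single cluster, whence $d(p,q)\le\Delta_u(G)\le\widehat{\Delta}(G)$. Combining, $2\ell\le\widehat{\Delta}(G)$, that is $d(u,S)=\ell\le\lfloor\widehat{\Delta}(G)/2\rfloor$. The factor $\tfrac12$ thus falls out automatically from the midpoint identity $d(p,q)=2\ell$ once the separation property has delivered $p$ and $q$. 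I would also double-check the degenerate situations (when $u$ is an endpoint of the side, or when two corners coincide) and confirm that $c$, and hence $K$ and $C$, is well defined.
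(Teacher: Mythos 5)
Your proof is correct and follows essentially the same route as the paper: the paper also roots the layering partition at the vertex on the side (its $c$ is your $u$), finds two vertices of $P(x,y)$ on opposite sides of the root lying in a common cluster at layer $d(u,S)$, and concludes $2\ell\le\widehat{\Delta}(G)$ from the cluster-diameter bound along the geodesic. The only cosmetic difference is how membership in one cluster is certified --- the paper connects the two vertices by a path around the triangle avoiding $B_{\ell-1}$, while you run the equivalent separator argument through the component of $G-B_{\ell-1}(u)$ containing $S$.
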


\begin{proof}
	Consider any geodesic triangle $\bigtriangleup=\bigtriangleup (x,y,z)$ of $G$ formed by shortest paths $P(x,y)$, $P(x,z)$ and $P(z,y)$ connecting corresponding vertices. Assume that $c$ is an arbitrary vertex of $P(x,y)$ and $\ell \geq 0$ is the maximum integer such that $B_{\ell}(c)$ does not intersect $P(x,z)\cup P(z,y)$.
	
	Let $\mathcal{LP}(G,c)$ be a layering partition of $G$ starting at vertex $c$. Consider two vertices $a$ and $b$ in $P(x,y)$ such that $d_G(c,a)=d_G(c,b)=\ell$, $a\neq b$. Let also $a'$ and $b'$ be vertices in $P(x,y)$ such that $d_G(c,a')=d_G(c,b')=\ell+1$, $a$ is adjacent to $a'$ and $b$ is adjacent to $b'$ (see Figure \ref{fig:BFS_ClusterDiameter} for an illustration). Since $B_{\ell+1}(c)$ intersects $P(x,z)\cup P(z,y)$ but $B_{\ell}(c)$ does not, vertices $a',b'$ are in the same cluster of $\mathcal{LP}(G,c)$. Hence, $d_G(a',b')\le \Delta_c(G)\le \widehat{\Delta}(G)$. On the other hand,
	$d_G(a',b')=1+d_G(a,b)+1=2\ell+2$. Hence, $d_G(c,P(x,z)\cup P(z,y))=\ell+1= \frac{d_G(a',b')}{2}\le \frac{\widehat{\Delta}(G)}{2}$.
\end{proof}

\begin{figure}[htbp]\footnotesize
	\centering                                                        
	\subfigure[][]
	{                    
		\scalebox{0.28}[0.28]{\includegraphics{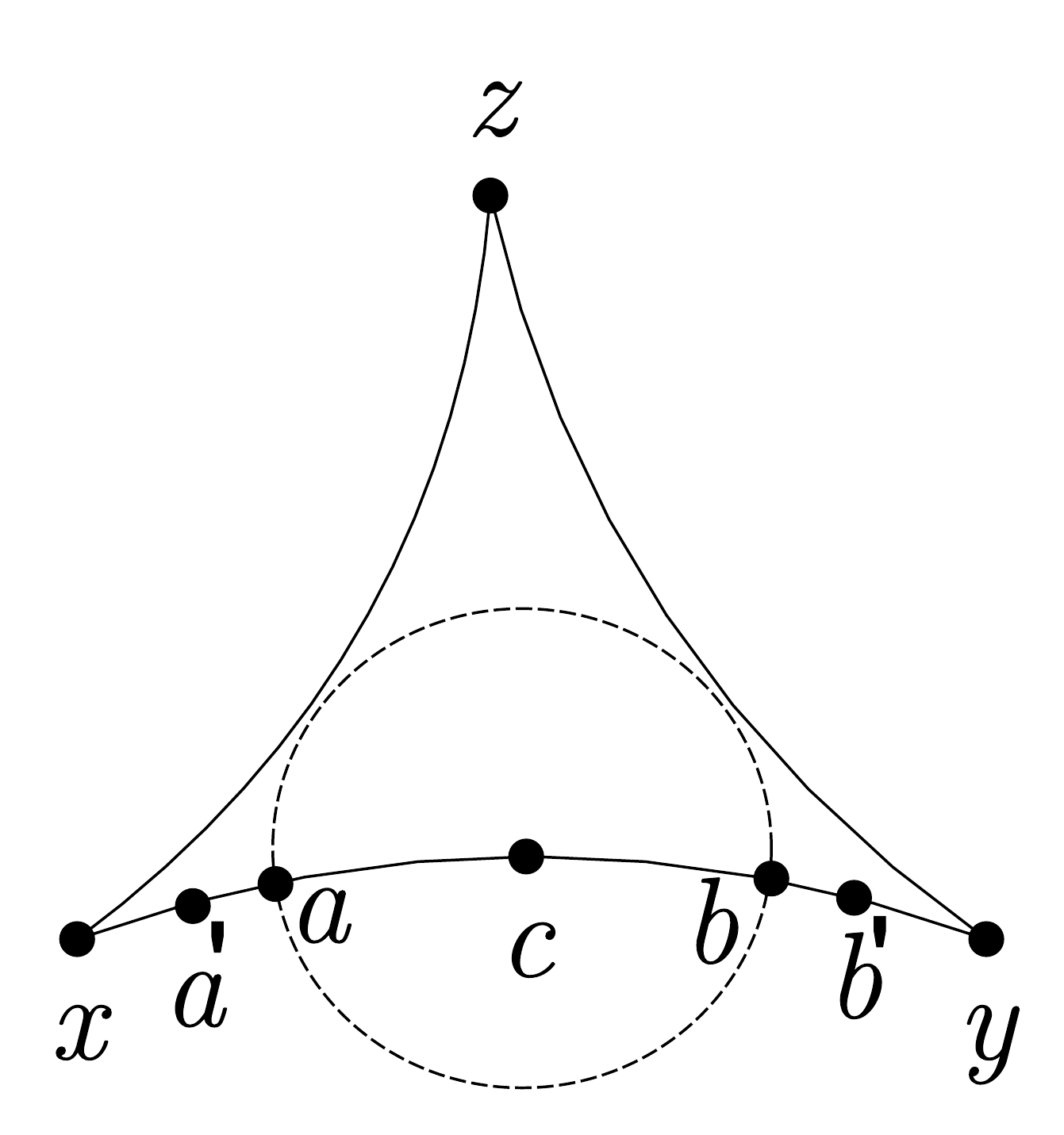}}
	 	
		\label{fig:ClusterDiameter_Slim}     
	}
	\hspace{7ex}%
	\subfigure[][]
	{
		\scalebox{0.28}[0.28]{\includegraphics{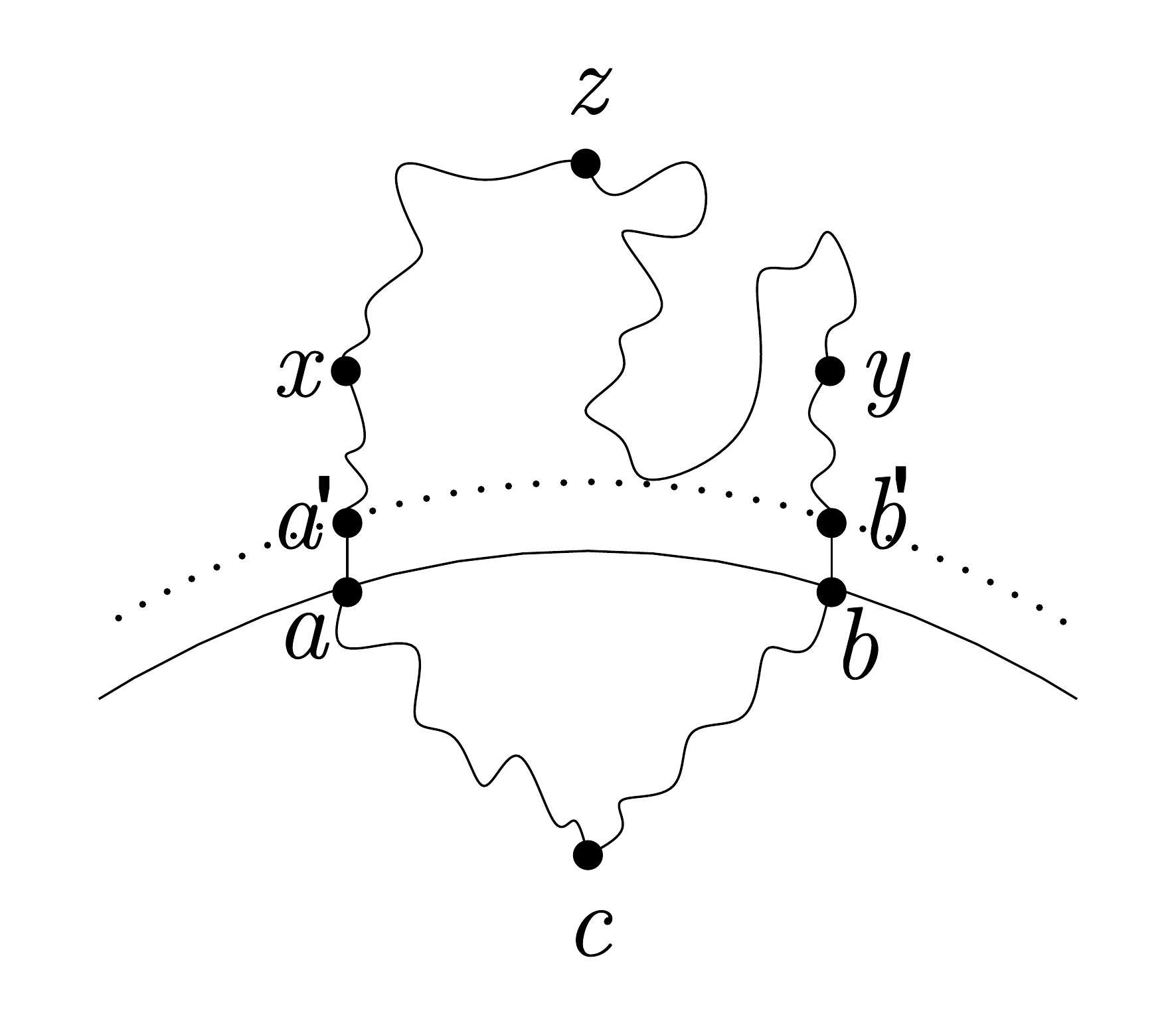}}
		
		\label{fig:LayeringPartition}
	}
	\caption{\small Illustration to the proof of Theorem~\ref{TH:Slimness_ClusterDiam}. Geodesic triangle $\triangle(x,y,z)$ and the Layering Partition starting with vertex $c$. }
	\label{fig:BFS_ClusterDiameter} 
\end{figure}

To see that the result is sharp, consider the graph in Figure \ref{fig:Slim_Chordality_Sharpness}.
In a geodesic triangle $\bigtriangleup (x,y,z)$ of $G$, formed by shortest paths $P(x,y)=(x,9,5,2,c,1,3,6,y)$, $P(x,z)=(x,15,11,14,z)$ and $P(z,y)=(z,13,10,12,y)$, vertex $c\in P(x, y)$ is at distance 4 from $P(x, z) \cup P(y, z)$.
Hence, the slimness of $G$ is at least 4. In a layering partition starting at $c$, all three vertices $x,y,z$ belong to the same cluster. Therefore, $\Delta_c(G)=8$ and, since $diam(G)=8$ as well, $\widehat{\Delta}(G)=8$.  That is, $sl(G)=4=\frac{\widehat{\Delta}(G)}{2}$. Also, subdividing every edge $k-1$ times yields a graph $G$ for which $sl(G)=4k=\frac{\widehat{\Delta}(G)}{2}$ for every integer $k\ge 1$.

Combining Proposition \ref{prop:ClustDiamAtAnys} with Theorem \ref{TH:Slimness_ClusterDiam}, one gets the following corollary.

\begin{corollary} \label{cor:ineq}
	For every graph $G$ and every vertex $s$ of $G$, $sl(G)\leq \lfloor\frac{3}{2}\Delta_s(G)\rfloor$. 	
\end{corollary}

To see that the inequality  in Corollary \ref{cor:ineq} is sharp, consider again the graph in Figure \ref{fig:ClustDiamAtAny}.
We have $\Delta_s(G)=2$ and $sl(G)\ge 3$ (check geodesic triangle $\bigtriangleup (x,y,z)$, formed by shortest paths $P(x,z)$ (the north-west one), $P(y,z)$ (the north-west one) and $P(x,y)$ (the south-east one), and vertex $q$ in $P(x,y)$). Furthermore, subdividing every edge $k-1$ times yields a graph $G$ for which $\Delta_s(G)=2k$ and $sl(G)\ge 3k$ for every integer $k\ge 1$. Hence, $sl(G)=3k=\frac{3}{2}\Delta_s(G)$.

\begin{figure}[htbp]
	\centering
	\includegraphics[scale=0.3]{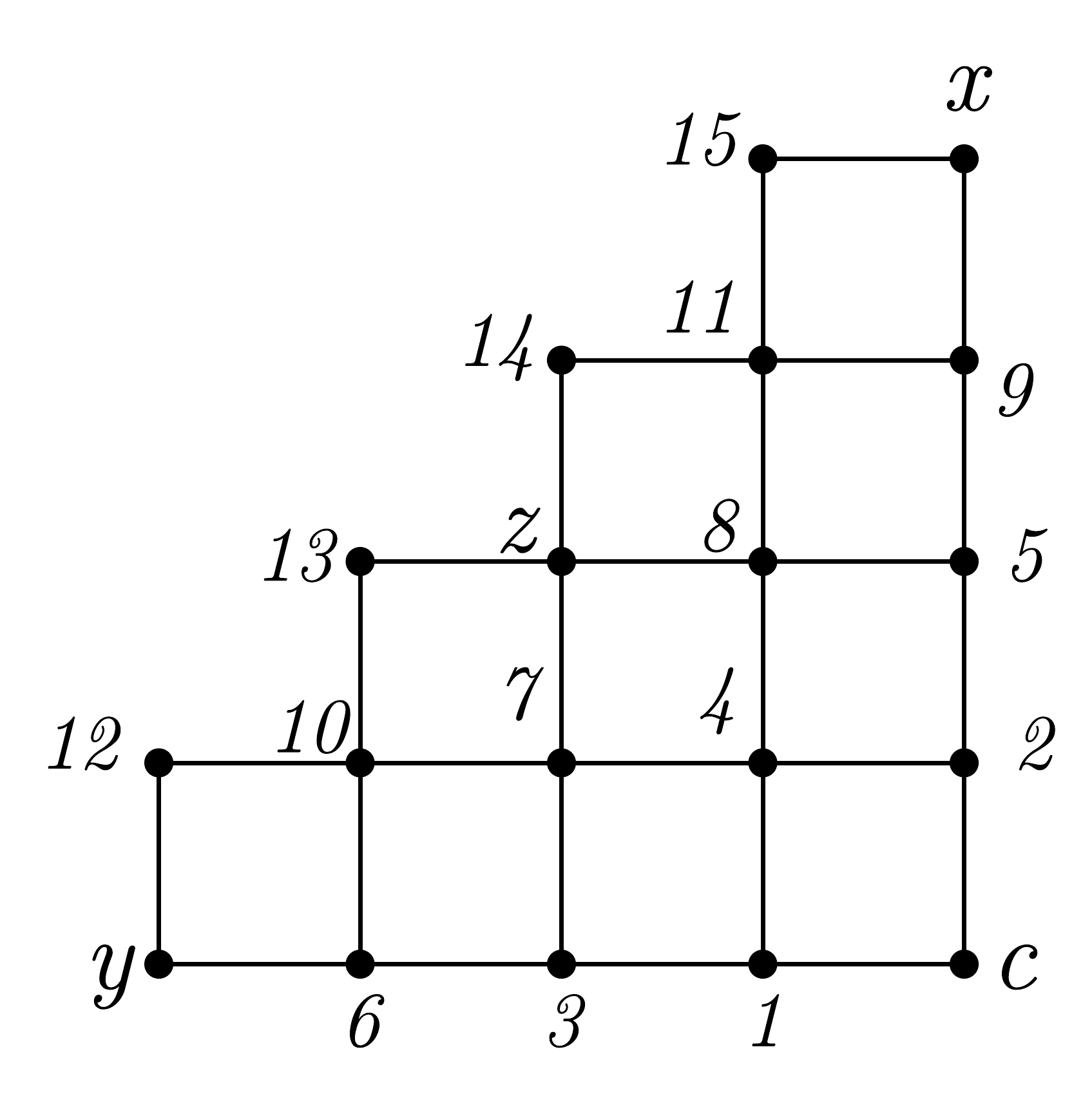}
	\caption{Sharpness of the bounds for slimness through cluster-diameter and chordality.}
	\label{fig:Slim_Chordality_Sharpness}
\end{figure}

\subsection{Slimness and tree-length}
Our next graph parameter is based on the notion of tree-decomposition introduced by Robertson and Seymour in their work
on graph minors~\cite{RobSey86}.

A \emph{tree-decomposition} of a graph $G=(V,E)$ is a  pair
$(\{X_i: i\in I\},T=(I,F))$ where $\{X_i: i\in I\}$ is a collection of
subsets of $V$, called \emph{bags}, and $T$ is a tree. The nodes of
$T$ are the bags $\{X_i: i\in I\}$ satisfying the following three
conditions: \\
1) $\bigcup_{i\in I}X_i=V$;\\
2) for each edge $uv\in E$, there is a bag $X_i$ such that $u,v \in
X_i$;\\
3) for all $i,j,k \in I$, if $j$ is on the path from $i$ to $k$ in $T$, then $X_i \cap X_k\subseteq X_j$.\\
For simplicity we denote a tree-decomposition $\left(\{X_i: i\in
I\},T=(I,F)\right)$ of a graph $G$ by $\mathcal{T}(G)$.

The \emph{width} of a tree-decomposition $\mathcal{T}(G)=(\{X_i: i\in
I\},T=(I,F))$ is $\max_{i\in I}|X_i|-1$. The \emph{tree-width} of a
graph $G$, denoted by $tw(G)$, is the minimum width over all
tree-decompositions $\mathcal{T}(G)$ of $G$~\cite{RobSey86}.

The \emph{length} of a
tree-decomposition $\mathcal{T}(G)$ of a graph $G$ is $\lambda:=\max_{i\in
	I}\max_{u,v\in X_i}d_G(u,v)$ (i.e., each bag $X_i$ has diameter at
most $\lambda$ in $G$). The \emph{ tree-length} of $G$, denoted by
$tl(G)$, is the minimum length  over all tree-decompositions
of $G$~\cite{DoGa2007}. The chordal graphs  are exactly the graphs
with tree-length 1. Note that these two graph parameters are not
directly related to each other on general graphs. For instance, a clique
on $n$ vertices has tree-length 1 and tree-width $n-1$,
whereas a cycle on $3n$ vertices has tree-width 2 and tree-length
$n$. However, if one involves also the size $\ell(G)$ of a largest isometric cycle in $G$, then a relation is possible:
$tl(G)\leq \lfloor\ell(G)/2\rfloor(tw(G)-1)$ \cite{coudert2016teelengthapproximate} (see also \cite{adcock2016treedecompositions} and \cite{diestel2012connectedTW} for similar but slightly weaker bounds: $tl(G)\leq \ell(G)(tw(G)+1)$ \cite{adcock2016treedecompositions}; $tl(G)\leq \ell(G)(tw(G)-2)$ \cite{diestel2012connectedTW}). Furthermore, the tree-width of a planar graph $G$ is bounded by O(tl(G))~\cite{DiengG09}.

The \emph{breadth} of a tree-decomposition $\mathcal{T}(G)$ of a graph $G$ is the
minimum integer $r$ such that for every $i\in I$ there is a vertex
$v_i\in V$ with $X_i\subseteq B_r(v_i,G)$ (i.e., each bag $X_i$
can be covered by a ball $B_r(v_i,G)$ of radius at most $r$ in $G$).
The \emph{tree-breadth} of $G$, denoted by
$tb(G)$, is the minimum breadth over all
tree-decompositions of $G$ ~\cite{DBLP:conf/approx/DraganK11}. Evidently, for any graph $G$, $1\leq
tb(G)\leq tl(G)\leq 2 tb(G)$ holds.

Unfortunately, while graphs with tree-length 1 (as they are
exactly the chordal graphs) can be recognized in linear time, the
problem of determining whether a given graph has tree-length at most
$\lambda$ is NP-complete for every fixed $\lambda >1$ (see~\cite{Daniel10}). Judging from this result, it was conceivable that the
problem of determining whether a given graph has tree-breadth at most $\rho$ is NP-complete, too. Indeed, it was confirmed recently that the problem is NP-complete for every $\rho \geq 1$ \cite{DBLP:conf/iwoca/DucoffeLN16}. On the positive side, both parameters $tl(G)$ and $tb(G)$ can be easily approximated within a factor of 3 using a layering partition of $G$ \cite{DoGa2007,DBLP:conf/approx/DraganK11,DBLP:journals/networks/Abu-AtaD16}.

The following proposition establishes a relationship between the tree-length and the cluster-diameter of a layering partition of a graph.
\begin{proposition} {\cite{DoGa2007}}\label{prop:dorisb}
	For every graph $G$ and any vertex $s$,
	$\Delta_s(G)/3 \leq tl(G) \leq \Delta_s(G)+1.$
\end{proposition}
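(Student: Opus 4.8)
The statement splits into two independent inequalities, $tl(G)\le \Delta_s(G)+1$ and $\Delta_s(G)\le 3\,tl(G)$, and the plan is to prove them separately. The first (upper) bound I would establish by exhibiting a concrete tree-decomposition of the required length, built directly from the layering tree $\Gamma(G,s)$; the second (lower) bound I would establish by taking an \emph{arbitrary} tree-decomposition realizing $tl(G)$ and showing that no cluster of $\mathcal{LP}(G,s)$ can have diameter exceeding $3\,tl(G)$, via the separator property of tree-decompositions.

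For $tl(G)\le\Delta_s(G)+1$, the plan is to assign to each cluster $C=L^i_j$ of $\mathcal{LP}(G,s)$ the bag $X_C:=C\cup N^-(C)$, where $N^-(C)$ is the set of vertices in layer $L^{i-1}(s)$ having a neighbour in $C$, and to use $\Gamma(G,s)$ as the host tree. Conditions (1) and (2) of a tree-decomposition are immediate from the BFS-structure of the layers (there are no edges between layers differing by more than one, and same-layer adjacent vertices lie in a common cluster), and the subtree condition (3) holds because a fixed vertex $v\in D$ occurs only in $X_D$ and in the bags of those children of $D$ in $\Gamma(G,s)$ adjacent to $v$, which form a star around $D$. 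The only nontrivial point is bounding the diameter of $X_C$: here I would first prove the key observation that \emph{all} vertices of $N^-(C)$ lie in a single cluster of layer $i-1$. Indeed, if $v,w\in N^-(C)$ are adjacent to $u,u'\in C$, then, because $u,u'$ belong to the same cluster $C$, they are joined by a path avoiding $B_{i-1}(s)$; prepending $v$ and appending $w$ yields a $v$--$w$ path avoiding $B_{i-2}(s)$, so $v,w$ are in the same cluster. Consequently distances within $N^-(C)$ are at most $\Delta_s(G)$, distances within $C$ are at most $\Delta_s(G)$, and any cross distance $d_G(u,v)$ with $u\in C$, $v\in N^-(C)$ is at most $\Delta_s(G)+1$; hence each bag has diameter at most $\Delta_s(G)+1$.

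For $\Delta_s(G)\le 3\,tl(G)$, set $\lambda:=tl(G)$ and fix a tree-decomposition $(\{X_i\},T)$ of length $\lambda$. Let $x,y$ lie in one cluster of layer $L^i(s)$, so $d_G(s,x)=d_G(s,y)=i$ and, by definition of the layering partition, there is a path $P$ from $x$ to $y$ all of whose vertices are at distance at least $i$ from $s$. Let $T_s,T_x,T_y$ be the subtrees of $T$ whose bags contain $s,x,y$; if any two of them meet we immediately get a small distance bound, so assume they are disjoint and let $w$ be their median node. The bag $X_w$ separates each pair among $\{s,x,y\}$, so I can choose $a\in X_w$ on a shortest $s$--$x$ path, $b\in X_w$ on a shortest $s$--$y$ path, and $c\in X_w\cap P$. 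Because $c\in P$ we have $d_G(s,c)\ge i$, while $a,b,c\in X_w$ gives all pairwise distances among them at most $\lambda$. The crux is then the inequality chain $i\le d_G(s,c)\le d_G(s,a)+d_G(a,c)\le d_G(s,a)+\lambda$; since $a$ lies on a shortest $s$--$x$ path, $d_G(s,a)=i-d_G(a,x)$, and substituting forces $d_G(a,x)\le\lambda$. The symmetric argument gives $d_G(b,y)\le\lambda$, whence $d_G(x,y)\le d_G(x,a)+d_G(a,b)+d_G(b,y)\le 3\lambda$.

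I expect the main obstacle to be the lower bound: one must set up the median node $w$ so that $X_w$ genuinely separates all three pairs (carefully treating the degenerate cases where two of the subtrees already intersect, which only make $d_G(x,y)$ smaller) and then combine two different kinds of information — the bag-diameter constraint $\mathrm{diam}(X_w)\le\lambda$ and the layer constraint $d_G(s,c)\ge i$ supplied by $P$ — in the single inequality chain above. The cluster-merging observation used in the upper bound is the other point needing care, since without it the bags would only be guaranteed diameter $\Delta_s(G)+2$.
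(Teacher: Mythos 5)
Your proof is correct, but there is nothing in the paper to compare it against: Proposition~\ref{prop:dorisb} is imported verbatim from Dourisboure and Gavoille \cite{DoGa2007}, and the paper gives no proof of it. Both halves of your argument check out, and they are essentially the standard arguments from the cited source. For the upper bound, the tree-decomposition over the layering tree $\Gamma(G,s)$ with bags $X_C=C\cup N^-(C)$ is the right construction, and you correctly identify the one non-routine step: the observation that $N^-(C)$ is contained in a single cluster of layer $i-1$ (via prepending/appending the endpoints to a path avoiding $B_{i-1}(s)$), which is exactly what improves the trivial bag-diameter bound $\Delta_s(G)+2$ to $\Delta_s(G)+1$; your verification of the subtree condition (the bags containing a vertex $v\in D$ form a star of $X_D$ and the bags of children of $D$) is also sound, since $\Gamma(G,s)$ has edges only between consecutive layers. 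For the lower bound, your median-node computation is valid: with $w$ the median of the subtrees $T_s,T_x,T_y$, the bag $X_w$ meets every $s$--$x$ path, every $s$--$y$ path, and the witnessing path $P$ (all of whose vertices are at distance at least $i$ from $s$), and the chain $i\le d_G(s,c)\le d_G(s,a)+\lambda$ with $d_G(s,a)=i-d_G(a,x)$ correctly forces $d_G(a,x)\le\lambda$, whence $d_G(x,y)\le 3\lambda$. The degenerate cases you flag are indeed harmless: a bag containing two of $s,x,y$ immediately gives $d_G(x,y)\le 2\,tl(G)$. It is worth noting that your lower-bound argument is structurally the same median-plus-separator computation that the paper \emph{does} carry out in its proof of Proposition~\ref{prop:ClustDiamAtAnys}, with the tree-decomposition here playing the role that the layering tree $\Gamma(G,s)$ plays there.
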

Thus, the cluster-diameter $\Delta_s(G)$ of a layering partition provides easily computable bounds for the hard to compute parameter $tl(G)$.

Diestel and M\"uller \cite{diestel2012connectedTW} proved that every graph $G$  of tree-length $tl(G)$ has slimness at most $\lfloor\frac{3}{2}tl(G)\rfloor$. Using Theorem \ref{TH:Slimness_ClusterDiam} and Proposition \ref{prop:dorisb} by Dourisboure and Gavoille \cite{DoGa2007}, we obtain that result 
as a simple corollary.

\begin{corollary}{\cite{diestel2012connectedTW}} \label{cor:DiestelM}
	For every graph $G$, $sl(G)\leq\lfloor\frac{3}{2}tl(G)\rfloor$.
	\label{Corolllary:Slim_n_Tree-length}
\end{corollary}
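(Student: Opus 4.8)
The plan is to derive Corollary~\ref{cor:DiestelM} by chaining together the two results stated just before it, namely Theorem~\ref{TH:Slimness_ClusterDiam} and Proposition~\ref{prop:dorisb}. The target inequality is $sl(G)\le\lfloor\frac{3}{2}tl(G)\rfloor$, and Theorem~\ref{TH:Slimness_ClusterDiam} already gives $sl(G)\le\lfloor\frac{\widehat{\Delta}(G)}{2}\rfloor$, so the whole task reduces to bounding $\widehat{\Delta}(G)$ from above in terms of $tl(G)$. The natural bridge is Proposition~\ref{prop:dorisb}, which asserts $\Delta_s(G)/3\le tl(G)\le\Delta_s(G)+1$ for every start vertex $s$.

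First I would extract from Proposition~\ref{prop:dorisb} the inequality $\Delta_s(G)\le 3\,tl(G)$, valid for every vertex $s$ of $G$ (this is the left half, rearranged). Since this holds for all $s$, taking the maximum over start vertices yields $\widehat{\Delta}(G)=\max_{s}\Delta_s(G)\le 3\,tl(G)$. Substituting this into Theorem~\ref{TH:Slimness_ClusterDiam} gives
\[
sl(G)\le\Bigl\lfloor\tfrac{\widehat{\Delta}(G)}{2}\Bigr\rfloor\le\Bigl\lfloor\tfrac{3\,tl(G)}{2}\Bigr\rfloor=\Bigl\lfloor\tfrac{3}{2}tl(G)\Bigr\rfloor,
\]
which is exactly the claimed bound.

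The only subtlety to watch is whether the intermediate floor behaves correctly, i.e. whether $\lfloor\widehat{\Delta}(G)/2\rfloor\le\lfloor 3\,tl(G)/2\rfloor$ follows from $\widehat{\Delta}(G)\le 3\,tl(G)$. This is immediate because the floor function is monotone nondecreasing, so any upper bound on $\widehat{\Delta}(G)/2$ transfers to its floor. I do not expect any genuine obstacle here; the content of the corollary lives entirely in the two cited results, and the proof is a two-line monotonicity argument. If one wanted to be slightly more careful, note that $tl(G)$ is an integer, so $\lfloor 3\,tl(G)/2\rfloor$ is precisely $\frac{3tl(G)}{2}$ when $tl(G)$ is even and $\frac{3tl(G)-1}{2}$ when odd, but this does not affect the argument. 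The main (very mild) point is simply to apply the correct half of the two-sided estimate in Proposition~\ref{prop:dorisb} and to take the maximum over $s$ to pass from $\Delta_s(G)$ to $\widehat{\Delta}(G)$.
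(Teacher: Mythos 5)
Your proof is correct and is precisely the paper's own argument: the paper derives this corollary by combining Theorem~\ref{TH:Slimness_ClusterDiam} with the inequality $\Delta_s(G)\le 3\,tl(G)$ from Proposition~\ref{prop:dorisb}, exactly as you do. The only difference is that you spell out the (routine) steps of passing to $\widehat{\Delta}(G)$ and using monotonicity of the floor, which the paper leaves implicit by calling the result a ``simple corollary.''
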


Diestel and M\"uller in \cite{diestel2012connectedTW} showed also that the bound in Corollary \ref{cor:DiestelM} is sharp; for every integer $k$ there is a graph $G$ with tree-length $k$ such that $sl(G)=\lfloor\frac{3}{2}k\rfloor$.

\commentout{	
	To show that the bound of $\lfloor\frac{3}{2}tl(G)\rfloor$ is sharp, consider a path-decomposition of adhesion 2 into many copies of $K^4$ and with disjoint adhesion sets. Subdividing every edge $tl(G) - 1$ times yields a graph $G$ of tree-length at most $tl(G)$: the tree-decomposition witnessing this is a path-decomposition whose parts are the 4-vertex-sets from the $K^4$ s together with pendent parts each consisting of a subdivided edge.
	
	It is easy to find in $G$ three vertices $x, y, z$ with shortest paths $P(x, y), P(y, z)$ and $P(x, z)$ between them such that each subdivided $K^4$ (other than the first and the last) meets $P(x, z)$ in exactly one subdivided edge $P$ and $P(x, y) \cup P(y, z)$ in exactly one subdivided edge $Q$ disjoint from $P$ (Figure \ref{fig:Slim_TreeLength_Sharpness}). For each subdivided $K^4$ except the leftmost and the rightmost one, the vertex closest to the mid-point of $P \subseteq P(x, z)$ has distance exactly $\lfloor\frac{3}{2}tl(G)\rfloor$ not only from $Q$ but from all of $P(x, y) \cup P(y, z)$.
	
	\begin{figure}[htbp]
		\centering
		\includegraphics[scale=0.3]{figures/SlimnessBound/Slimness_TreeLength}
		\caption{To show the sharpness of the bound for slimness with respect to tree-length.}
		\label{fig:Slim_TreeLength_Sharpness}
	\end{figure}
}

As for every graph $G$, $tl(G)\leq 2tb(G)$, we have also the following inequality.

\begin{corollary} \label{cor:sl-vs-tb}
	For every graph $G$, $sl(G)\leq 3~tb(G)$.
\end{corollary}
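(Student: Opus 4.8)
The plan is to derive this bound directly from Corollary \ref{cor:DiestelM} together with the elementary inequality $tl(G)\le 2\,tb(G)$, which holds for every graph $G$ and was noted at the end of the discussion of tree-breadth earlier in this section. No new combinatorial argument is needed; the whole statement is a consequence of chaining two already-established inequalities and then exploiting a trivial integrality fact.

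First I would invoke Corollary \ref{cor:DiestelM}, which gives $sl(G)\le\lfloor\frac{3}{2}tl(G)\rfloor$. Next, I would substitute the bound $tl(G)\le 2\,tb(G)$ into this expression, using that the floor function $x\mapsto\lfloor x\rfloor$ is nondecreasing, to obtain
\[
sl(G)\le\left\lfloor\tfrac{3}{2}tl(G)\right\rfloor\le\left\lfloor\tfrac{3}{2}\cdot 2\,tb(G)\right\rfloor=\lfloor 3\,tb(G)\rfloor .
\]
Finally, I would observe that the tree-breadth $tb(G)$ is, by its definition as the minimum over all tree-decompositions of an integer breadth parameter, itself a nonnegative integer; hence $3\,tb(G)$ is an integer and the outer floor is redundant, giving $\lfloor 3\,tb(G)\rfloor=3\,tb(G)$. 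This establishes $sl(G)\le 3\,tb(G)$.

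There is essentially no obstacle here: the argument is a two-step inequality chain. The only point that deserves a moment of attention — and it is the sole reason the factor of $3$ comes out clean rather than as a floor expression — is the integrality of $tb(G)$. Had the tree-breadth been allowed to take half-integer values, as slimness and hyperbolicity themselves can, the conclusion would have to be stated as $sl(G)\le\lfloor 3\,tb(G)\rfloor$ instead.
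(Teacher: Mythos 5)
Your proof is correct and is essentially identical to the paper's own derivation: the paper obtains this corollary in one line by substituting the inequality $tl(G)\leq 2\,tb(G)$ into Corollary~\ref{cor:DiestelM}. Your additional remark on why the floor disappears is also sound, since the paper defines the breadth of a tree-decomposition as the minimum \emph{integer} $r$ covering every bag by a ball of radius $r$, so $tb(G)$ is indeed an integer and $\lfloor 3\,tb(G)\rfloor = 3\,tb(G)$.
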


\begin{figure}[htbp]
	\centering
	\includegraphics[scale=0.22]{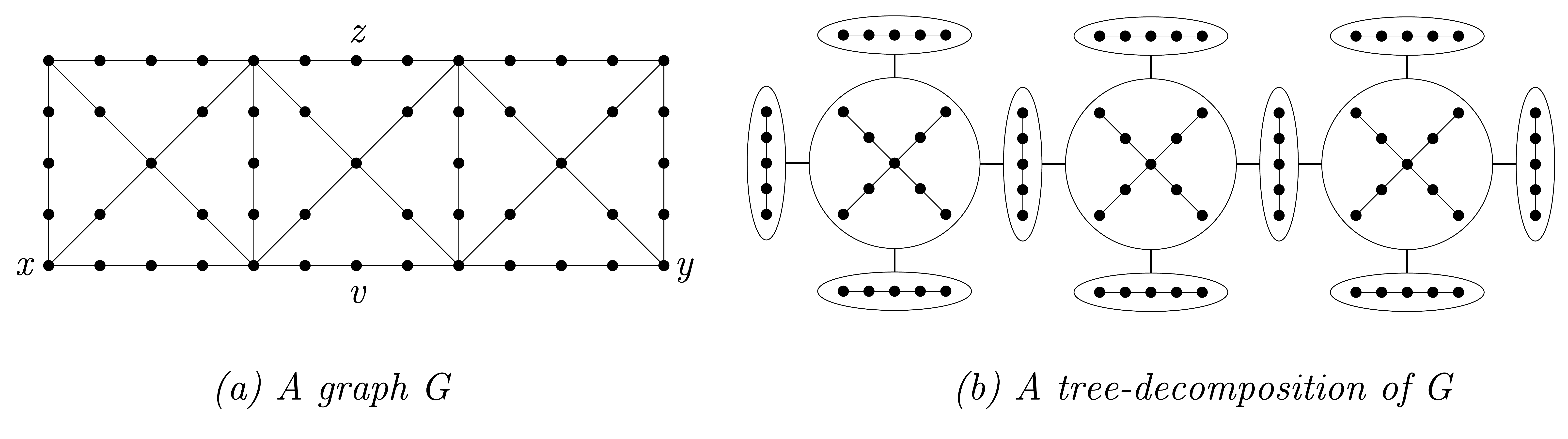}
	\caption{Sharpness of the bound for slimness with respect to tree-breadth.}
	\label{fig:Slim_TreeBreadth_Sharpness}
\end{figure}

The inequality  in Corollary \ref{cor:sl-vs-tb} is also sharp. Consider the graph $G$ in Figure \ref{fig:Slim_TreeBreadth_Sharpness}. The tree-breadth of this graph is at most 2 as witnessed by the shown tree-decomposition. The slimness of $G$ is at least 6. Indeed, consider a geodesic triangle $\bigtriangleup (x,y,v)$, formed by two horizontal shortest paths $P(x,v)$ and $P(y,v)$ and a shortest path $P(x,y)$ containing vertex $z$. Then, vertex $z$ of $P(x,y)$ is at distance 6 from $P(x,v)\cup P(y,v)$. Thus, $6\leq sl(G)\leq 3 tb(G)\leq 6$, i.e., $sl(G)= 3~tb(G)= 6$.
Clearly, this graph can be modified to get a graph $G$ such that $tb(G)= k$ and $sl(G)= 3k$ (contract some edges to get $k=1$; subdivide some edges to get $k\ge 3$).

\section{Graphs with (small) chordality}
A graph $G$ is $k$-chordal if every induced cycle of $G$ has length at most $k$. The parameter $k$ is usually called \emph{the chordality} of $G$.
When $k=3$, $G$ is called a \emph{chordal graph}.

Using Theorem \ref{TH:Slimness_ClusterDiam}, we can also obtain a sharp bound on the slimness through the chordality of a graph. In \cite{DBLP:journals/ejc/ChepoiD00}, Chepoi and Dragan proved that for every $k$-chordal graph $G$ and every its vertex $s$, $\Delta_s(G)\leq k/2+2$. Combining this with Theorem \ref{TH:Slimness_ClusterDiam}, we obtain a result of Bermudo et al. \cite{bermudo2016hyperbolicity} as a simple corollary.

\begin{corollary}{ \cite{bermudo2016hyperbolicity}} \label{cor:k-ch}
	For every $k$-chordal graph $G$, $sl(G)\leq\lfloor k/4\rfloor+1$.
\end{corollary}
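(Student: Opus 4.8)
The plan is to derive the bound by simply combining the two ingredients already at hand: the inequality $sl(G)\le\lfloor\widehat{\Delta}(G)/2\rfloor$ of Theorem \ref{TH:Slimness_ClusterDiam} and the Chepoi--Dragan estimate $\Delta_s(G)\le k/2+2$, valid for \emph{every} start vertex $s$. Because the latter holds for all $s$, it immediately bounds the maximum cluster-diameter $\widehat{\Delta}(G)=\max_{s\in V}\Delta_s(G)$, and Theorem \ref{TH:Slimness_ClusterDiam} then converts this into a bound on the slimness. No new geometric argument is needed here; the only real work is to make the floor functions line up correctly.

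First I would observe that, since $G$ is unweighted, every $\Delta_s(G)$ is a nonnegative integer, so the real-valued bound $\Delta_s(G)\le k/2+2$ can be sharpened to $\Delta_s(G)\le\lfloor k/2\rfloor+2$. Taking the maximum over all $s$ yields $\widehat{\Delta}(G)\le\lfloor k/2\rfloor+2$. Feeding this into Theorem \ref{TH:Slimness_ClusterDiam} gives $sl(G)\le\lfloor\widehat{\Delta}(G)/2\rfloor\le\lfloor(\lfloor k/2\rfloor+2)/2\rfloor$.

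The final step is the arithmetic simplification $\lfloor(\lfloor k/2\rfloor+2)/2\rfloor=\lfloor\lfloor k/2\rfloor/2\rfloor+1=\lfloor k/4\rfloor+1$, where the first equality pulls the integer $2$ out of the floor (using $2/2=1$) and the second invokes the nested-floor identity $\lfloor\lfloor k/2\rfloor/2\rfloor=\lfloor k/4\rfloor$, which one checks by splitting into the four residue classes of $k$ modulo $4$.

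The argument is genuinely routine, and the one place to be careful---and what I would regard as the only real obstacle---is the floor bookkeeping. In particular, one must retain the integrality of $\Delta_s(G)$ \emph{before} halving (rather than carrying the real number $k/2+2$ into the second floor), since otherwise one risks an off-by-one slip and misses the sharp constant $\lfloor k/4\rfloor+1$.
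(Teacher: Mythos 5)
Your proposal is correct and is essentially the paper's own proof: the paper also deduces the bound by combining the Chepoi--Dragan estimate $\Delta_s(G)\le k/2+2$ (valid for every start vertex $s$, hence bounding $\widehat{\Delta}(G)$) with Theorem~\ref{TH:Slimness_ClusterDiam}, leaving the floor arithmetic implicit where you spell it out. One small remark: the integrality sharpening you flag as essential is actually not needed, since $\bigl\lfloor(k/2+2)/2\bigr\rfloor=\lfloor k/4+1\rfloor=\lfloor k/4\rfloor+1$ already gives the sharp constant directly.
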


To see that the result is sharp, consider again the graph from Figure \ref{fig:Slim_Chordality_Sharpness}. We know that $sl(G)=4$.   The chordality of $G$ is 12 as witnessed by an induced cycle $(c,2,5,9,11,14,z,13,10,6,3,1)$ of length 12. Considering (half) rectilinear grids with longer or shorter sides, we obtain for every integer $k\ge 1$ a graph $G$ with chordality $4k$ and slimness $k+1$.

From Corollary \ref{cor:k-ch} we know that every chordal graph has slimness at most 1 and every $k$-chordal graph with $k\le 7$  has slimness at most 2.  In this section, we also show that the graphs with slimness equal 0 are exactly the block graphs and an interesting subclass of 5-chordal graphs, namely the class of AT-free graphs, has slimness at most 1. Furthermore, we characterize those 4-chordal graphs for which every induced subgraph has slimness at most 1. As a consequence, we get that every house-hole-domino--free graph (HHD-free graph) has slimness at most 1.

\subsection{AT-free graphs and block graphs}
An independent set of three vertices such that each pair is joined by a path that avoids the neighborhood of the third is called an \emph{asteroidal triple}. A graph $G$ is an \emph{AT-free graph} if it does not contain any asteroidal triples. The class of AT-free graphs contains many intersection families of graphs, including permutation graphs, trapezoid graphs, co-comparability graphs and interval graphs.

\begin{proposition} 	\label{prop:AT-free_N_slim}
	Every AT-free graph has slimness at most 1.
\end{proposition}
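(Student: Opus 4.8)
The plan is to argue by contradiction. Suppose $G$ is AT-free yet $sl(G)\ge 2$. Then some geodesic triangle $\bigtriangleup(x,y,z)$, formed by shortest paths $P(x,y),P(x,z),P(y,z)$, fails to be $1$-slim, so (after possibly relabeling the sides) there is a vertex $u$ on $P(x,y)$ with $d_G(u,P(x,z)\cup P(y,z))\ge 2$; equivalently, the closed neighborhood $N[u]$ meets neither $P(x,z)$ nor $P(y,z)$. I would then show that the three vertices $x,y,u$ form an asteroidal triple, contradicting the hypothesis.

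First I would check that $\{x,y,u\}$ is an independent set. Since $x\in P(x,z)$ and $y\in P(y,z)$, the assumption gives $d_G(u,x)\ge 2$ and $d_G(u,y)\ge 2$; and because $u$ lies on the shortest path $P(x,y)$ we have $d_G(x,y)=d_G(x,u)+d_G(u,y)\ge 4$. Hence the three vertices are pairwise non-adjacent.

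Next I would exhibit the three required connecting paths, each avoiding the neighborhood of the third vertex. For the pair $\{x,y\}$ I would route through $z$: the concatenation $P(x,z)\cup P(z,y)$ is an $x,y$-walk every vertex of which lies at distance at least $2$ from $u$, so it avoids $N[u]$ (and one may extract a simple path inside it if desired). For the pair $\{x,u\}$ I would use the subpath $P(x,u)$ of $P(x,y)$: for any vertex $w$ on it, distance-additivity along the geodesic gives $d_G(w,y)=d_G(w,u)+d_G(u,y)\ge d_G(u,y)\ge 2$, so $P(x,u)$ avoids $N[y]$. Symmetrically, the subpath $P(u,y)$ avoids $N[x]$. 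Thus $\{x,y,u\}$ is an asteroidal triple, the desired contradiction; and since the choice of side of the triangle was arbitrary, this proves $sl(G)\le 1$.

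The conceptual crux, and the step I expect to be the only real obstacle, is choosing the right triple: rather than trying to certify that the three corners $x,y,z$ form an asteroidal triple (which need not be the case), one takes the ``far'' vertex $u$ itself as the third apex and reroutes the opposite connection through $z$. Once this is set up, the verification is routine, the one recurring subtlety being the use of the fact that interior vertices of a geodesic inherit distance-additivity toward its endpoints, which is precisely what keeps $P(x,u)$ clear of $N[y]$ and $P(u,y)$ clear of $N[x]$.
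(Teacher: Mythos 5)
Your proof is correct and follows essentially the same argument as the paper: take the vertex $u$ on $P(x,y)$ far from the other two sides, observe that $\{x,y,u\}$ is independent, route the $x$--$y$ connection through $z$ along $P(x,z)\cup P(z,y)$, and use distance-additivity along the geodesic $P(x,y)$ to show the subpaths $P(x,u)$ and $P(u,y)$ avoid $N[y]$ and $N[x]$ respectively, yielding an asteroidal triple. Your write-up just spells out the distance verifications more explicitly than the paper does.
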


\begin{proof} Consider any geodesic triangle $\bigtriangleup=\bigtriangleup (x,y,z)$ of $G$ formed by shortest paths $P(x,y)$, $P(x,z)$ and $P(z,y)$ connecting corresponding vertices. Assume there is a vertex $u$ in  $P(x,y)$ such that $B_{1}(u)$ does not intersect $P(x,z)\cup P(z,y)$. Then, $x,y,u$ form an independent set in $G$ and there is a path in $P(x,z)\cup P(z,y)$ that connects vertices $x$ and $y$ and avoids the neighborhood of $u$. Furthermore, as $P(x,y)$ is a shortest path, its subpath $P(x,u)$ (subpath $P(y,u)$) avoids the neighborhood of $y$ (of $x$, respectively).
	Hence, vertices $x,u,y$ form an asteroidal triple in $G$, contradicting with $G$ being an AT-free graph.
\end{proof}

%
A block graph is a connected graph whose blocks (i.e., maximal biconnected subgraphs) are cliques.  A diamond $K_4-e$ is a  complete graph on 4 vertices minus one edge. We will need the following characterization of block graphs.

\begin{proposition}{\cite{bandelt1986distanceHGraphs}}
	\label{prop:block_graph}
	$G$ is a block graph if and only if $G$ has neither diamonds nor cycles $C_k$ of length $k\geq 4$ as isometric subgraphs.
\end{proposition}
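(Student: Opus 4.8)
The plan is to prove the two implications separately, dispatching the ``only if'' direction through standard structural properties of block graphs and reserving the real work for the ``if'' direction, which I would attack via a minimal forbidden cycle.

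For the forward direction, I would first record the elementary fact that an isometric subgraph is always \emph{induced}: if two of its vertices were adjacent in $G$ but not in the subgraph, their distance inside the subgraph would exceed their distance $1$ in $G$. Hence it suffices to show that a block graph contains no induced diamond and no induced $C_k$ with $k\ge 4$. Both follow from the observation that every $2$-connected subgraph of $G$ is contained in a single block. Since the diamond $K_4-e$ and every cycle $C_k$ ($k\ge 4$) are $2$-connected, such an induced subgraph would lie inside one block, which by hypothesis is a clique; but then the missing edge of the diamond (respectively, all the absent chords of $C_k$) would be present, a contradiction.

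For the converse I would argue by contraposition: assuming $G$ is not a block graph, I produce an isometric diamond or an isometric cycle of length at least $4$. If $G$ is not a block graph, some block $B$ is not a clique; as single-edge blocks are cliques, $B$ is $2$-connected and contains two non-adjacent vertices $u,v$, which lie on a common cycle of $B$. Call a cycle \emph{bad} if it contains two non-adjacent vertices; bad cycles then exist and have length at least $4$. Let $C=c_0c_1\cdots c_{k-1}c_0$ be a bad cycle of minimum length $k\ge 4$. Any chord $c_ic_j$ splits $C$ into two strictly shorter cycles; by minimality neither is bad, so the vertices of each of the two arcs are pairwise adjacent (each arc spans a clique, and in particular $c_i,c_j$ are adjacent to every other vertex of $C$). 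As $C$ is bad, a non-adjacent pair $c_a,c_b$ must therefore have $c_a$ interior to one arc and $c_b$ interior to the other, and $\{c_a,c_i,c_j,c_b\}$ induces a diamond with $d_G(c_a,c_b)=2$, i.e.\ an isometric diamond. Hence we may assume $C$ is chordless.

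It remains to show that a chordless minimal bad cycle $C$ is isometric, which I expect to be the only delicate point. Suppose not, and choose a pair $x,y\in V(C)$ with $d_G(x,y)<d_C(x,y)$ that minimizes $d_G(x,y)$, realized by a shortest path $R$. The crucial step is that $R$ is internally disjoint from $C$: any internal vertex $z\in V(C)\cap R$ would yield subpaths of $R$ that are geodesics of strictly smaller $G$-length, so by the choice of $x,y$ they are not shortcuts, giving $d_C(x,z)+d_C(z,y)=d_G(x,z)+d_G(z,y)=d_G(x,y)<d_C(x,y)$, which contradicts the cycle triangle inequality $d_C(x,y)\le d_C(x,z)+d_C(z,y)$. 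With $R$ internally disjoint from $C$, replacing the longer arc of $C$ between $x$ and $y$ by $R$ yields a cycle strictly shorter than $k$ that still contains the non-adjacent pair $x,y$ (non-adjacent because $C$ is chordless), i.e.\ a shorter bad cycle, contradicting minimality. Therefore $C$ is an isometric $C_k$ with $k\ge 4$. In either case $G$ contains one of the two forbidden isometric subgraphs, which completes the contrapositive; the remaining bookkeeping with the block structure and the cycle metric is routine.
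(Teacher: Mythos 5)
Your proposal cannot be compared against a proof in the paper, because the paper gives none: Proposition~\ref{prop:block_graph} is quoted from Bandelt and Mulder's 1986 paper on distance-hereditary graphs and is used as a black box. Judged on its own, your argument is correct and complete. The forward direction is sound: isometric subgraphs are induced, every $2$-connected subgraph (in particular a diamond or a $C_k$, $k\ge 4$) lies inside a single block, and in a block graph that block is a clique, contradicting the missing edge(s) of an induced copy. The converse via a minimum-length \emph{bad} cycle $C$ (one containing a non-adjacent pair) is also right, and both branches check out: if $C$ has a chord $c_ic_j$, minimality forces each of the two sub-cycles to span a clique, so a non-adjacent pair must straddle the chord and $\{c_a,c_i,c_j,c_b\}$ is a diamond that is isometric because $d_G(c_a,c_b)=2$ through the common neighbor $c_i$; if $C$ is chordless, your shortcut-minimality argument correctly shows any shortest path realizing a shortcut is internally disjoint from $C$ (the cycle-metric triangle inequality kills internal intersections), and rerouting the long arc through it produces a strictly shorter cycle still containing the non-adjacent, non-consecutive pair $x,y$, contradicting minimality. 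One small point worth making explicit in a final write-up: the minimality of the bad cycle must be taken over all bad cycles of $G$ (not just those inside the offending block $B$), since the rerouted cycle and the sub-cycles created by a chord need not stay inside $B$'s original cycle but are compared against the global minimum; your phrasing already allows this reading. In short, you have supplied a self-contained elementary proof of a result the paper only cites, which is a genuine addition rather than a deviation.
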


\begin{proposition} \label{prop:slim_N_bloch}
	$sl(G)=0$ if and only if $G$ is a block graph.
\end{proposition}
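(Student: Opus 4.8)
The plan is to prove the equivalence $sl(G)=0 \iff G$ is a block graph by establishing both directions, using the forbidden-subgraph characterization of block graphs from Proposition~\ref{prop:block_graph}. For the direction that block graphs have slimness $0$, I would argue via a structural property: in a block graph, shortest paths between two vertices are essentially unique up to the sequence of cut-vertices they pass through, and any geodesic triangle must degenerate. Concretely, given $\bigtriangleup(x,y,z)$, I would take any vertex $u$ on $P(x,y)$ and show $u$ itself lies on $P(x,z)\cup P(z,y)$, so the distance from $u$ to that union is $0$. The cleanest route is to use that in a block graph the interval $I(x,y)$ consists of the cut-vertices (and the endpoints' blocks) lying on the unique block-path from $x$ to $y$; since the three pairwise block-paths among $x,y,z$ share a common median block, every vertex on one side of the triangle sits on a shortest path between two of $x,y,z$. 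I expect to phrase this through the \emph{median vertex} $m$ of $x,y,z$, which in a block graph is unique and lies on all three sides, forcing $u\in P(x,z)$ or $u\in P(z,y)$.

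For the converse, I would prove the contrapositive: if $G$ is not a block graph, then $sl(G)\geq 1$. By Proposition~\ref{prop:block_graph}, $G$ contains either an isometric diamond $K_4-e$ or an isometric cycle $C_k$ with $k\geq 4$. The strategy is to exhibit, in each case, a geodesic triangle with a vertex at distance at least $1$ from the union of the two other sides. For an isometric induced cycle $C_k$ with $k\geq 4$, I would pick three roughly equally spaced vertices $x,y,z$ on the cycle so that the three arcs realize shortest paths (isometry guarantees the arcs are geodesics when they are the shorter arcs), and a midpoint $u$ of the longest arc will be at distance $\geq 1$ from the other two arcs since the cycle is induced and isometric. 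For the diamond $K_4-e$ with non-adjacent pair $a,b$ and the two common neighbors, I would set $x=a$, $y=b$, and use the two common neighbors to build a triangle; the key point is that with $d(a,b)=2$ there exist two distinct shortest $a$--$b$ paths, and choosing $z$ appropriately forces a midpoint to miss the other sides by $1$.

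The main obstacle, and the step I would treat most carefully, is the converse direction with the isometric cycle: I must verify that the three chosen arcs are genuinely \emph{shortest} paths in $G$ (not merely in the cycle) and that the midpoint of one side is truly at distance $\geq 1$ in $G$ from the other two sides. Here the isometry of the subgraph is exactly what I need: $d_G$ restricted to the cycle equals $d_{C_k}$, so an arc that is a geodesic in $C_k$ is a geodesic in $G$, and the $C_k$-distance from the midpoint to the other arcs transfers to $G$. I would choose the triangle so that the side lengths satisfy $d_{C_k}(x,y)+d_{C_k}(y,z)+d_{C_k}(x,z)=k$, guaranteeing each arc is the shorter one and hence geodesic; then elementary arithmetic on $C_k$ shows some side has a point at distance at least $1$ from the union of the other two. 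The diamond case is then a small, explicit computation that I would only sketch, as it reduces to the two-shortest-paths phenomenon that also underlies why diamonds are forbidden.
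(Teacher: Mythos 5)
Your proposal is correct in substance, but the two directions deserve different verdicts. For the converse (slimness $0$ implies block graph) you take essentially the paper's route: both arguments observe that $sl(G)=0$ forces $P(x,y)\subseteq P(x,z)\cup P(z,y)$ for every geodesic triangle, and both invoke Proposition~\ref{prop:block_graph} to reduce to excluding an isometric diamond and isometric cycles $C_k$, $k\ge 4$. The paper compresses this into one sentence; your explicit constructions (arcs of length at most $k/2$ on an isometric $C_k$, which are geodesics in $G$ precisely because the cycle is isometric, and the choice $x=a$, $y=b$, $z=$ a common neighbor in the diamond, so that the second common neighbor on $P(x,y)$ misses $P(x,z)\cup P(y,z)$) are exactly the details being compressed, and they check out. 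For the forward direction you genuinely diverge: the paper dispatches it in one line by citing $hb(G)=0$ for block graphs, the inequality $sl(G)\le 3\,hb(G)+\frac{1}{2}$, and integrality of slimness, whereas you argue directly from the block structure. Your route is self-contained and more explanatory, at the cost of needing the structure theory; the paper's is shorter but leans on an external inequality and a folklore fact.

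One concrete misstep in your forward direction: the claim that in a block graph every triple $x,y,z$ has a unique median vertex lying on all three sides is false. Take $x,y,z$ to be the vertices of a single triangle $K_3$ (a block graph): the sides are the three edges, and no vertex lies on all three. More generally, whenever the median node of the block--cut tree for the triple is a block $B$ rather than a cut vertex, the three gates $g_x,g_y,g_z$ of $x,y,z$ in the clique $B$ can be pairwise distinct, and $I(x,y)\cap I(y,z)\cap I(x,z)=\emptyset$. The repair is the median-block argument you yourself sketch: since geodesics in block graphs are unique, $P(x,y)$ decomposes as $P(x,g_x)$, then the edge $g_xg_y$ (if $g_x\ne g_y$), then $P(g_y,y)$; the first piece together with $g_x$ lies on $P(x,z)$, and $g_y$ together with the last piece lies on $P(y,z)$, with no interior vertices between gates because $B$ is a clique. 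With that substitution, every vertex of $P(x,y)$ lies on $P(x,z)\cup P(y,z)$ and the forward direction goes through; as written, though, the ``unique median vertex'' step would fail.
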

\begin{proof} Let $G$ be a block graph. Since $hb(G)=0$ for a block graph, it follows immediately from the inequality $sl(G) \leq 3hb(G)+ \frac{1}{2}$ and the integrality of the slimness that $sl(G)=0$. Conversely, if slimness of a graph is 0 then, for every geodesic triangle $\bigtriangleup(x,y,z)$, $P(x,y)\subseteq P(x,z)\cup P(z,y)$ must hold. So, a graph $G$ with slimness 0 cannot have a diamond or any cycles $C_k$ of length $k\geq 4$ as an isometric subgraph. Hence, by Proposition  \ref{prop:block_graph}, $G$ is a block graph.
\end{proof}

\subsection{4-chordal graphs}
We know that all 4-chordal graphs have slimness at most 2. In what follows, we characterize those 4-chordal graphs for which every induced subgraph has slimness at most 1. As a corollary, we get that every house-hole-domino--free graph (HHD-free graph) has slimness at most 1. A \emph{chord} in a cycle is an edge connecting two non-consecutive vertices of the cycle. Let $C_k$ denote an induced cycle of length $k$.

The following two lemmata will be frequently used in what follows.

\begin{lemma}{(Cycle Lemma for 4-chordal graphs) \cite{dragan1999convexity}} \label{lm:cycle_lemma_H-Free}
	Let $C$ be a cycle of length at least $5$ in a $4$-chordal  graph $G=(V,E)$. Then, for each edge $xy\in C$ there are vertices $w_1, w_2$ in $C$ such that $xw_1\in E$, $yw_2\in E$, and $d_G(w_1, w_2)\leq 1$, i.e., each edge of a cycle is contained in a triangle or a $4$-cycle.
\end{lemma}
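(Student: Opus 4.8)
The plan is to prove the Cycle Lemma for 4-chordal graphs by inducting on the length of the cycle $C$, repeatedly using the fact that long induced cycles cannot exist. The statement we want is that for every edge $xy$ of a cycle $C$ of length $\ge 5$, there are vertices $w_1,w_2\in C$ with $xw_1\in E$, $yw_2\in E$, and $d_G(w_1,w_2)\le 1$; equivalently, $xy$ is an edge either of a triangle whose third vertex is on $C$ or of a 4-cycle whose two remaining vertices are on $C$. First I would set up the base and reduction: if $C$ itself is chordless, then since $|C|\ge 5>4$ and $G$ is 4-chordal, $C$ cannot be induced, a contradiction — so $C$ must have a chord. The real content is therefore to show that the presence of chords can always be localized near the fixed edge $xy$.

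The main idea I would pursue is to consider the two neighbors $x'$ and $y'$ of $x$ and $y$ on $C$, respectively (so that $\dots, x', x, y, y', \dots$ appear consecutively along $C$), and to produce the desired $w_1,w_2$ from among the vertices adjacent to $x$ or $y$ on $C$ or reachable by a single chord. Concretely, I would look at the cycle $C$ and repeatedly replace it by a shorter cycle still containing the edge $xy$ whenever a chord not incident to $\{x,y\}$ is found: a chord $uv$ of $C$ with $u,v$ both distinct from $x,y$ splits $C$ into two cycles, exactly one of which still contains the edge $xy$, and that cycle is strictly shorter. By iterating, I reduce to the case where every chord of the current cycle $C'$ (which still has length $\ge 5$, since lengths $3$ and $4$ already furnish the conclusion directly) is incident to $x$ or to $y$. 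The hard part will be handling this reduced configuration and extracting $w_1,w_2$ from it: once all chords touch $x$ or $y$, I would argue that $C'$ cannot be chordless (it has length $\ge 5$), so some chord exists; if a chord from $x$ reaches a vertex $w$ and a chord from $y$ reaches an adjacent or coincident vertex, we are done, and otherwise I would chase the constraints imposed by 4-chordality on the subcycle cut off by such chords to force either a triangle on $xy$ or a $4$-cycle on $xy$.

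A cleaner way to organize the reduced case is to take a shortest cycle $C^\ast$ (in number of edges) among all cycles of length $\ge 5$ that contain the edge $xy$ and lie in $G$; such a minimal object has no chord avoiding both $x$ and $y$ (else it would not be minimal), so all its chords are incident to $x$ or $y$. If $C^\ast$ has length exactly $5$, say $(x,y,a,b,c,x)$, then $4$-chordality forces a chord, which must be incident to $x$ or $y$, and a short case analysis of which chord is present yields the adjacent pair $w_1,w_2$ on the cycle. For $C^\ast$ of length $\ge 6$, I would argue that a chord incident to $x$, together with $4$-chordality applied to the shorter induced cycle it creates, forces a chord incident to $y$ landing at a vertex adjacent to the endpoint of the $x$-chord, again delivering $w_1,w_2$; the minimality of $C^\ast$ is what prevents these chords from creating a still-shorter qualifying cycle.

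I expect the main obstacle to be the careful bookkeeping in this last configuration: ensuring that the two chords emanating from $x$ and $y$ actually terminate at vertices $w_1,w_2$ that are at distance at most $1$ in $G$, rather than merely somewhere on $C^\ast$. The leverage throughout is uniform — every cycle of length $\ge 5$ we encounter must have a chord, so each time we find a "long" subconfiguration we immediately get new edges to exploit — and the minimality of $C^\ast$ keeps the analysis finite by forbidding chords that would shortcut across the cycle away from $xy$. I would phrase the final extraction so that it covers the triangle case ($w_1=w_2$, i.e. $xy$ lies in a triangle of $C^\ast$) and the $4$-cycle case ($w_1w_2\in E$ with $w_1\ne w_2$) uniformly via the conclusion $d_G(w_1,w_2)\le 1$.
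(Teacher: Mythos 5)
First, a point of reference: the paper does not prove this lemma at all --- it is quoted from the cited reference (Dragan, 1999) --- so your proposal can only be judged on its own merits. Your overall strategy (use $4$-chordality to force chords, and split cycles along chords until a triangle or $4$-cycle through $xy$ appears) is the right one and can be made into a complete proof. However, the version you actually work out in detail, the ``cleaner way'' with a globally minimal cycle $C^\ast$, has a genuine gap: you minimize over \emph{all} cycles of $G$ of length at least $5$ containing the edge $xy$, so the vertices of $C^\ast$, and hence the $w_1,w_2$ you extract, need not lie on the original cycle $C$. The lemma requires $w_1,w_2\in C$, and this is not a pedantic point: it is exactly what makes the statement nontrivial (without it one could even take $w_1=y$, $w_2=x$), and it is essential in the paper's use of the lemma inside the proof of Theorem \ref{TH:4-chordal_graph}, where the produced vertices $t,c$ must lie on the geodesic triangle so that one can argue $c\in P(x,y)$ and $t\in P(x,z)\cup P(y,z)$. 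The repair is easy: minimize over cycles that contain the edge $xy$ \emph{and whose vertex set is contained in $V(C)$}. This family contains $C$ and is closed under chord-splitting, since a chord of such a cycle joins two vertices of $V(C)$; with this restriction your minimality argument delivers $w_1,w_2\in C$ automatically.

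Second, the branch of your proposal that does preserve the needed invariant $V(C')\subseteq V(C)$ --- the iterative reduction starting from $C$ --- is left incomplete precisely because you restricted the reduction to chords avoiding $\{x,y\}$. That restriction is unnecessary and is what creates the ``hard case'' you then only gesture at (``chase the constraints''). A chord incident to $x$ cannot be the edge $xy$ itself, so it too splits the current cycle into two strictly shorter cycles, exactly one of which contains the edge $xy$; the same holds for chords incident to $y$. Allowing \emph{every} chord in the reduction (or, equivalently, taking a shortest cycle through $xy$ with vertices in $V(C)$, with no lower bound on its length), the terminal cycle is chordless, hence by $4$-chordality has length $3$ or $4$, and reading off $w_1,w_2$ (the third vertex of the triangle, or the two remaining vertices of the quadrilateral) finishes the proof with no case split on lengths $5$ versus $\ge 6$ and no second chord from $y$. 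Two smaller slips are worth recording as well: (i) your claim that a minimal $C^\ast$ in your length-$\ge 5$ family ``has no chord avoiding both $x$ and $y$'' is false as stated, since the chord joining the cycle-neighbors of $x$ and of $y$ may be present without contradicting minimality (it creates a $4$-cycle through $xy$, which lies outside your family); the case is benign --- it immediately gives $w_1w_2\in E$ --- but a correct write-up must account for it; (ii) in your sketch for length $\ge 6$ the mechanism is misstated: minimality alone forces any chord $xv_i$ to land within distance two of $y$ along the cycle, which already yields the triangle or the $4$-cycle, so no interaction between an $x$-chord and a $y$-chord is needed.
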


A \emph{building} B$(w|uv)$ is a chain of $k$ $C_4$s ($k\ge 0$) ending with a $C_3$ depicted in Figure \ref{fig:multistory_building}.
\begin{lemma}{ (building)} \label{lm:multistory_building}	
	In a $4$-chordal graph $G=(V,E)$, if an edge $uv\in E$ is equidistant from a vertex $x$, that is, $d_G(u,x)=d_G(v,x)$, then
	$G$ contains a  building B$(w|uv)$ as an isometric subgraph  (see Figure \ref{fig:multistory_building}), where $w\in I(x,u)\cap I(x,v)$.
\end{lemma}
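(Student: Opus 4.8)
The goal is to show that a pair of equidistant endpoints of an edge $uv$ (with $d_G(u,x)=d_G(v,x)=:k$) forces the existence of an isometric \emph{building} $\mathrm{B}(w|uv)$, a chain of $C_4$s capped by a triangle, with apex $w\in I(x,u)\cap I(x,v)$. The plan is to argue by induction on the common distance $k=d_G(u,x)=d_G(v,x)$. The base case $k=1$ is the triangle $C_3$ on $\{x,u,v\}$ itself (the building $\mathrm{B}(x|uv)$ with no $C_4$s), where $x$ plays the role of $w$; here $uv\in E$ together with $d_G(u,x)=d_G(v,x)=1$ makes $x$ adjacent to both $u$ and $v$, giving the capping triangle directly.

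For the inductive step I would take $k\ge 2$ and produce, one layer closer to $x$, a new edge $u'v'\in E$ that is again equidistant from $x$ but at distance $k-1$, and that sits atop the current edge $uv$ in a $C_4$. Concretely, pick $u'\in I(x,u)$ with $d_G(u',x)=k-1$ (so $u'u\in E$) and similarly $v'\in I(x,v)$ with $d_G(v',x)=k-1$ and $v'v\in E$. The key is to arrange that $u'$ and $v'$ can be chosen either equal or adjacent, so that $u,v,v',u'$ close up into a $C_3$ or $C_4$; then applying the induction hypothesis to the equidistant edge $u'v'$ (or to the vertex $u'=v'$) yields a building $\mathrm{B}(w|u'v')$, and stacking the freshly-built $C_4$ (or collapsing when $u'=v'$) beneath it produces $\mathrm{B}(w|uv)$ with the same apex $w\in I(x,u)\cap I(x,v)$. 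Finally I would verify that the resulting subgraph is isometric: since every vertex of the building lies on a shortest path to $x$ in its respective layer and consecutive layers differ by exactly one, distances inside the building match distances in $G$, and the $4$-chordality prevents any shortcut chords from appearing between non-consecutive rungs.

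The main obstacle, and where Lemma~\ref{lm:cycle_lemma_H-Free} (the Cycle Lemma) is essential, is guaranteeing that the one-layer-up neighbors $u'$ and $v'$ can be taken to be adjacent or equal. A priori the sets of neighbors of $u$ and of $v$ in the layer $L^{k-1}(x)$ could be disjoint and far apart. To control this I would form a cycle $C$ by concatenating a shortest $u\text{--}x$ path, the edge $uv$ run backwards, and a shortest $v\text{--}x$ path; because $d_G(u,x)=d_G(v,x)$ this cycle has length at least $5$ once $k\ge 2$, so the Cycle Lemma applies to the edge $uv$ and furnishes neighbors $w_1$ of $u$ and $w_2$ of $v$ on $C$ with $d_G(w_1,w_2)\le 1$. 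The crux is then to show these neighbors can be pushed to lie in $L^{k-1}(x)$ (i.e. strictly closer to $x$), ruling out the case where the triangle/$4$-cycle promised by the Cycle Lemma bulges outward to a farther layer; here the equidistance $d_G(u,x)=d_G(v,x)$ together with $4$-chordality should force $w_1,w_2$ inward, giving the desired adjacent or coincident pair $u'=w_1$, $v'=w_2$ in the next layer and closing the induction.
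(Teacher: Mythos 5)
Your proposal is correct and takes essentially the same route as the paper's proof: induction on $k=d_G(x,u)=d_G(x,v)$, an application of the Cycle Lemma to the cycle $P(x,u)\cup P(x,v)\cup\{uv\}$ at the edge $uv$, and stacking the resulting triangle or $C_4$ onto the building supplied by the induction hypothesis. The one point you leave open as the ``crux'' is in fact immediate and is exactly the paper's ``by distance requirements'' step: the vertices $w_1,w_2$ must lie \emph{on} that cycle, all of whose vertices sit on shortest paths from $x$ and hence at distance at most $k$ from $x$, so any neighbor of $u$ (resp.\ $v$) on the cycle other than $v$ (resp.\ $u$) has distance exactly $k-1$ from $x$ --- no outward bulge is possible.
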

\begin{proof} We proceed by induction on $k=d_G(x,u)=d_G(x,v)$. Let $P(x,u)$, $P(x,v)$ be two shortest paths connecting $u$ and $v$ with $x$. Note that the union of $P(x,u)$, $P(x,v)$ and the edge $uv$ contains a cycle. Applying the Cycle Lemma to this cycle and its edge $uv$, by distance requirements, we obtain either a vertex $w$ adjacent to both $u$ and $v$ and at distance $k-1$ from $x$, or two adjacent vertices $u'$ and $v'$ such that $uu',vv' \in E$,
	$uv',vu' \notin E$, and $d(u',x)=d(v',x)=k-1$. In the former case, $w,u,v$ form a triangle and we are done. In the latter case, by the induction  hypothesis, there exists in $G$ an isometric  building B$(w|u'v')$ with $w\in I(x,u')\cap I(x,v')$. Adding to B$(w|u'v')$ vertices $u,v$ and edges $u'u,uv,vv'$ we obtain an isometric  building B$(w|uv)$ with $w\in I(x,u)\cap I(x,v)$.
\end{proof}

\begin{figure}    [htbp]
	\centering
	\includegraphics[scale=0.35]{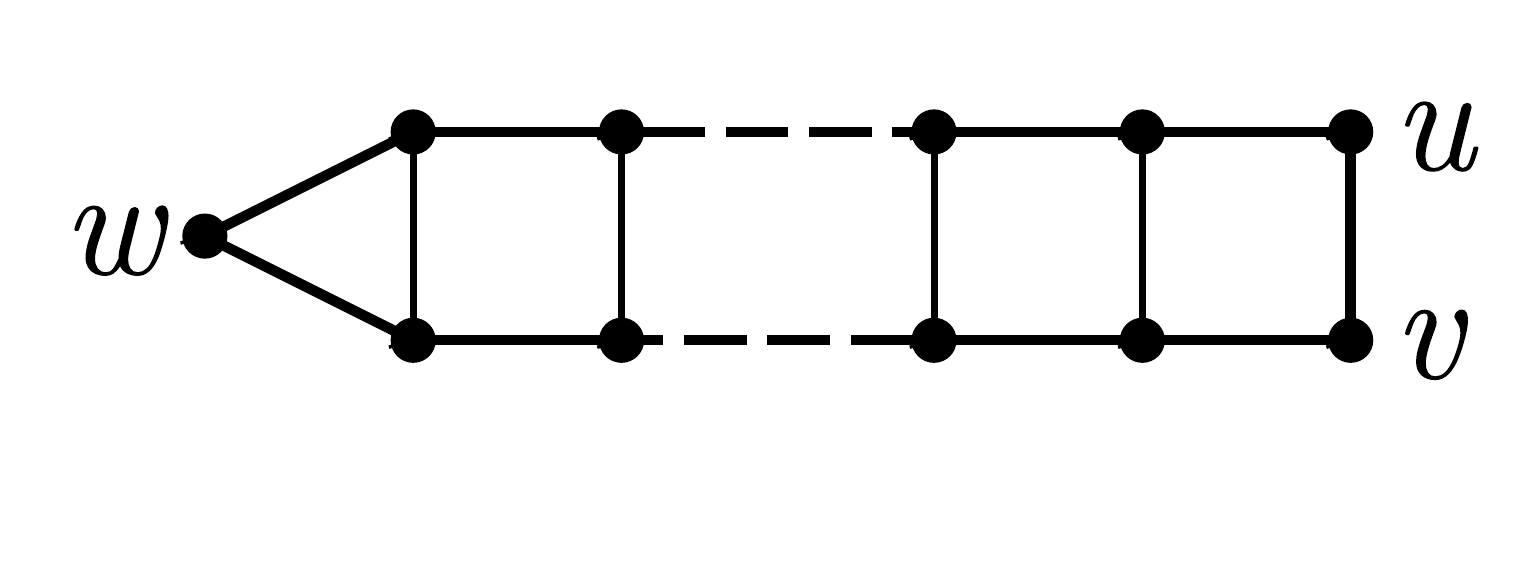}
	\caption{A  building B$(w|uv)$: a chain of $k$ $C_4$s ($k\ge 0$) ending with a $C_3$.}
	\label{fig:multistory_building}
\end{figure}

A  building with $k=1$ is known as a {\sl house}. 	

\begin{theorem} \label{TH:4-chordal_graph}
	For every 4-chordal graph $G$  the following two statements are equivalent.
	\begin{enumerate}[label=(\roman*)]
		\item $G$ and each its induced subgraph has slimness at most 1.
		\item None of the following graphs (see Figure \ref{fig:eightforbidden4Chordal}) is an induced subgraph of $G$.
	\end{enumerate}
	
	\begin{figure}    [H]
		\centering
		\includegraphics[scale=0.21]{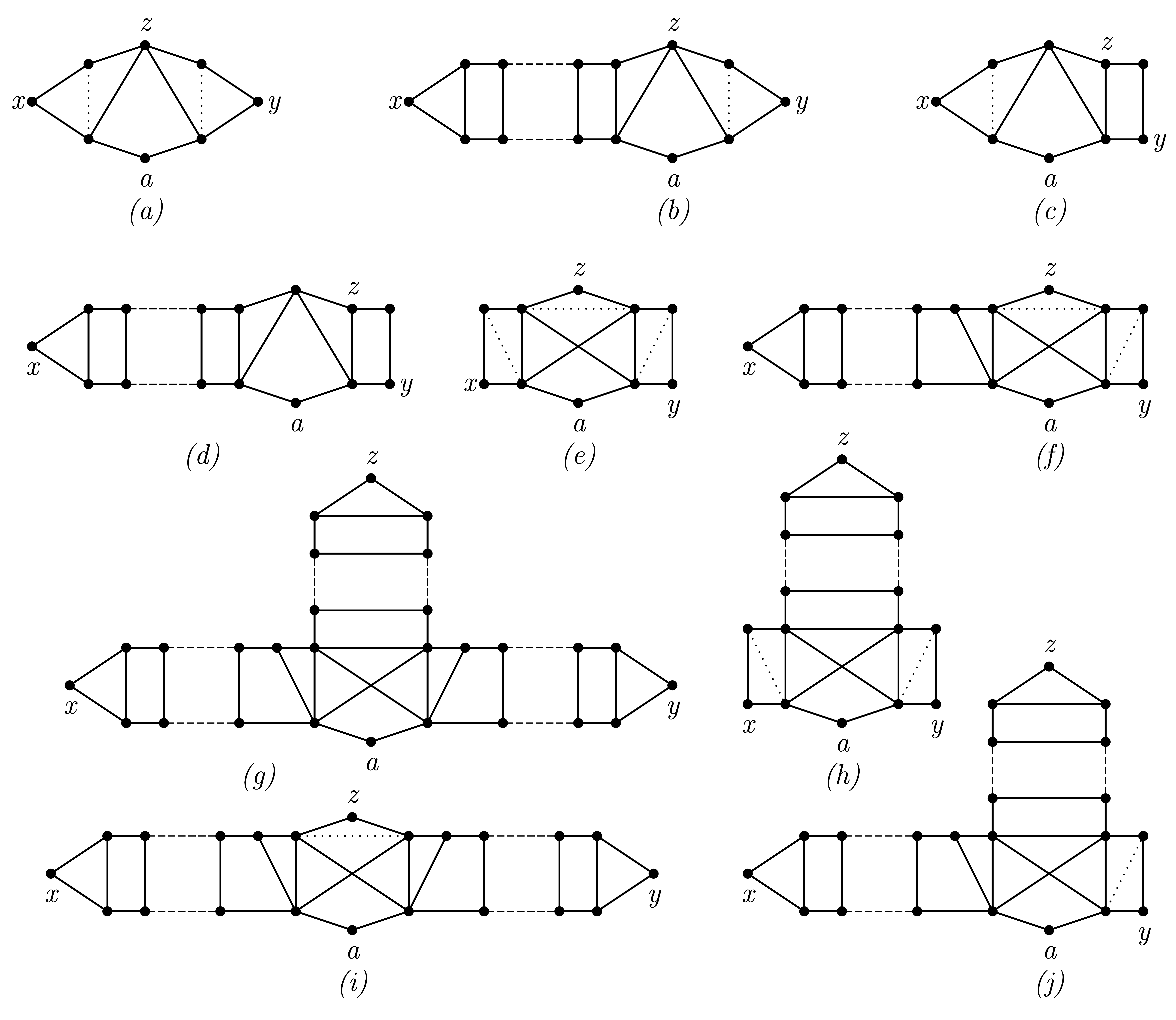}
		\caption{Forbidden 4-chordal subgraphs with slimness 2. The dotted edges may or may not be present. E.g., (a) actually contains three forbidden subgraphs: with both dotted edges present, with only one present, none is present.}
		\label{fig:eightforbidden4Chordal}
	\end{figure}
\end{theorem}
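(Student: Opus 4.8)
The plan is to prove the equivalence by establishing both directions, with the forward direction $(i)\Rightarrow(ii)$ being the easy part and the converse $(ii)\Rightarrow(i)$ constituting the main work. For $(i)\Rightarrow(ii)$, I would argue by contraposition: for each forbidden configuration in Figure \ref{fig:eightforbidden4Chordal}, I would exhibit a geodesic triangle inside that induced subgraph having slimness exactly $2$. Concretely, in each forbidden graph there are three designated vertices $x,y,z$ and a choice of shortest paths forming a geodesic triangle together with a vertex $u$ on one side whose distance to the other two sides is $2$. Since slimness is inherited by the induced subgraph witnessing the triangle (the paths realizing the triangle live inside that subgraph and remain shortest there), the presence of any such forbidden subgraph forces some induced subgraph to have slimness at least $2$, contradicting $(i)$. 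The routine content here is checking distances in each of the finitely many small graphs.

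For the harder direction $(ii)\Rightarrow(i)$, I would suppose for contradiction that $G$ is $4$-chordal, contains none of the forbidden subgraphs, yet has an induced subgraph $H$ with slimness at least $2$. Replacing $G$ by $H$ (note $H$ is still $4$-chordal and still forbidden-subgraph-free, since those properties are hereditary), I may assume $G$ itself has slimness at least $2$. Then there is a geodesic triangle $\bigtriangleup(x,y,z)$ and a vertex $u$ on $P(x,y)$ with $d_G(u, P(x,z)\cup P(z,y)) \geq 2$, i.e. $B_1(u)$ misses the other two sides. I would choose such a counterexample minimizing the size of the triangle (or the perimeter $d(x,y)+d(y,z)+d(x,z)$, or $d(u,\{x,y\})$) so that I can exploit minimality. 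The goal is to reconstruct, around $u$ and the two other sides, one of the forbidden configurations as an induced subgraph.

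The core combinatorial engine should be the two lemmata already proved. Since $u$ lies on the shortest path $P(x,y)$ with neighbors $a,b$ toward $x$ and $y$, and $B_1(u)$ avoids $P(x,z)\cup P(z,y)$, I can locate vertices on the opposite sides that are "close to equidistant" from $u$ and apply the building Lemma \ref{lm:multistory_building} to produce isometric chains of $C_4$'s capped by triangles, hanging between $u$ and the far sides. The Cycle Lemma \ref{lm:cycle_lemma_H-Free} lets me control how the large cycle formed by $P(x,z)\cup P(z,y)$ together with a return path through or near $u$ must be triangulated: every edge lies in a triangle or $C_4$. By tracking which chords are present and absent among the resulting short cycles and the vertices adjacent to $u$, I expect to pin down a small induced subgraph on the vertices near $u$, its two neighbors $a,b$, and the nearest vertices of the two opposite sides; the case analysis on which optional (dotted) edges occur should sort this subgraph into exactly one of the forbidden types.

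The main obstacle, and where most of the proof's length will concentrate, is the case analysis in $(ii)\Rightarrow(i)$: translating the metric condition $d_G(u,P(x,z)\cup P(z,y))\geq 2$ into a rigid local structure and then verifying that every way the adjacencies can fall out yields an \emph{induced} copy of a forbidden graph (rather than merely a subgraph, so I must confirm the absence of the extra chords that the dotted edges are allowed to represent). I anticipate needing to handle separately the situations where $z$ is close to $u$ versus far, and where the two opposite sides share vertices near $z$ versus remain disjoint; the building structure and minimality of the counterexample should rule out the degenerate or large-distance cases, leaving a bounded-size local picture that matches Figure \ref{fig:eightforbidden4Chordal}.
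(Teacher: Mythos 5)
Your proposal follows essentially the same route as the paper's own proof: the easy direction by exhibiting a slimness-2 geodesic triangle inside each forbidden graph, and the converse via a minimal-perimeter counterexample attacked with the Cycle Lemma (Lemma \ref{lm:cycle_lemma_H-Free}, applied to the cycle $P(x,y)\cup P(x,z)\cup P(y,z)$ and the edge at the offending vertex) together with the building Lemma \ref{lm:multistory_building} on equidistant edges, followed by an adjacency case analysis that assembles an \emph{induced} forbidden subgraph, with minimality of the perimeter used to discharge the residual case --- exactly the paper's Cases 1 and 2.1--2.3. The only portion you leave unexecuted is that final case analysis itself, which is where the paper's proof concentrates its length, but your outline of it (including the care about absent chords so the copy is induced) is faithful to the actual argument.
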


\begin{proof} It is easy to verify that each graph in Figure \ref{fig:eightforbidden4Chordal} has slimness 2. In all  graphs consider $\bigtriangleup(x,y,z)$ and vertex $a\in P(x,y)$.
	
	For the other direction, consider 
	three vertices $x,y,z$ in a 4-chordal graph $G$ and three shortest paths $P(x,y)$, $P(x,z)$ and $P(y,z)$ connecting appropriate pairs.  Assume, by way of contradiction, that there is a vertex $a$ on $P(x,y)$ which is not adjacent to any vertex on $P(x,z)\cup{P(y,z)}$ (i.e., the slimness of $G$ is greater than 1). Assume also that the triple $x,y,z$ for which such a vertex $a$ exists is chosen with the minimum sum $d(x,y)+d(y,z)+d(z,x)$. Let $b$ be a neighbor of $a$ on $P(x,y)$ which is closer to $y$ (see Figure \ref{fig:case1.1ofproproof}).
	
	By the Cycle Lemma applied to $C:=P(x,y)\cup P(x,z)\cup P(y,z)$ and edge $ab$, there must exist vertices $t$ and $c$ in $C$ such that $ac\in E$, $bt\in E$ and $d(t,c)\leq 1$. Since $a$ is not adjacent to any vertex from $P(x,z)\cup{P(y,z)}$, vertex $c$ must belong to $P(x,y)$. As $P(x,y)$ is a shortest path, $t$ cannot belong to $P(x,y)$, implying $d(t,c)= 1$ and $t\in P(x,z)\cup P(y,z)$. See Figure \ref{fig:case1.1ofproproof} for an illustration. By symmetry, in what follows, we will assume that $t$ is on $P(x,z)$. 
	To facilitate our further discussion, we introduce  a few more vertices on paths $P(x,y)$, $P(x,z)$ and $P(y,z)$  (see Figure \ref{fig:case4ofproof}).
	
	\begin{figure}    [htbp]
		\centering
		\includegraphics[width=.24\linewidth]{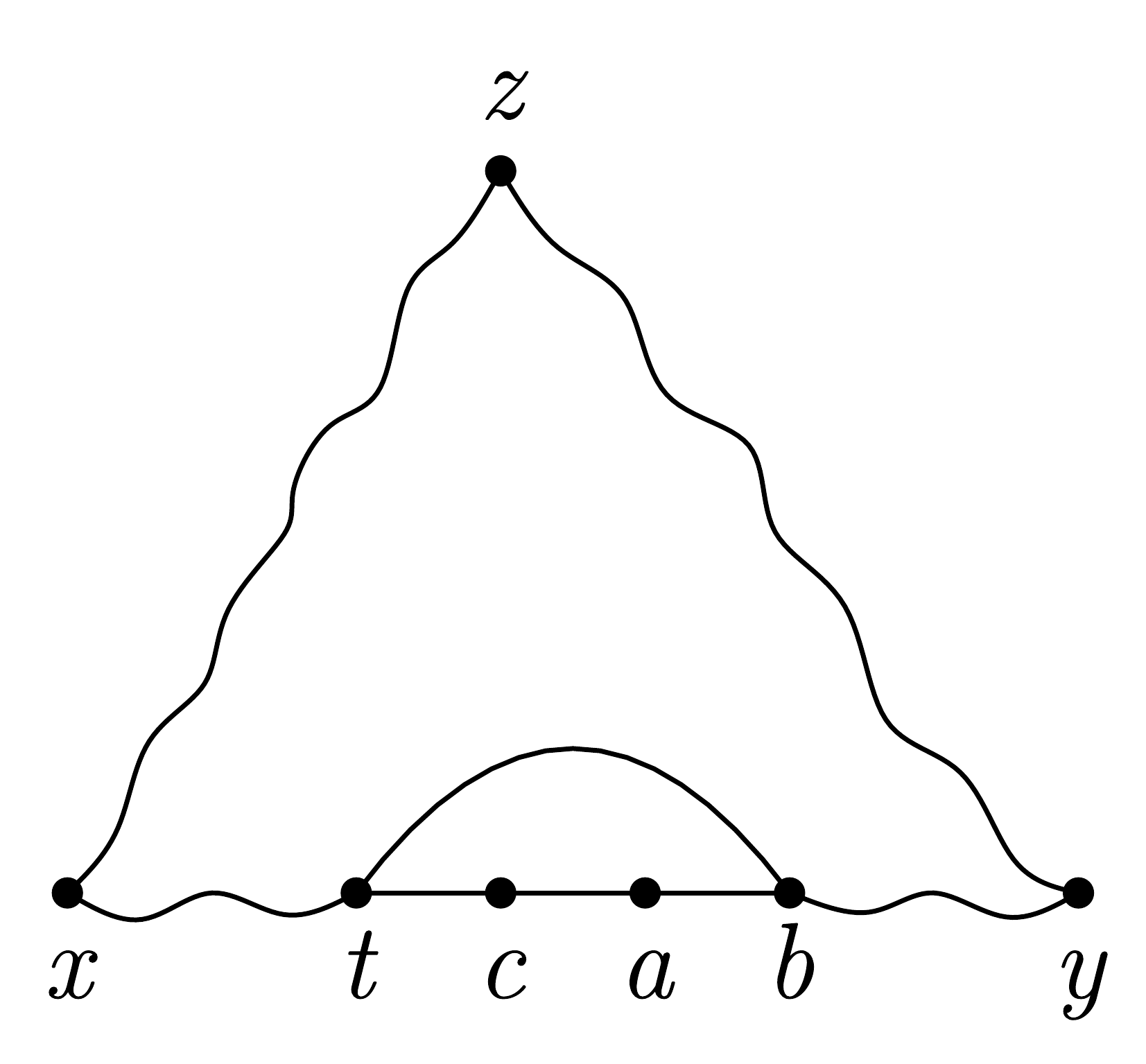}
		\includegraphics[width=.24\linewidth]{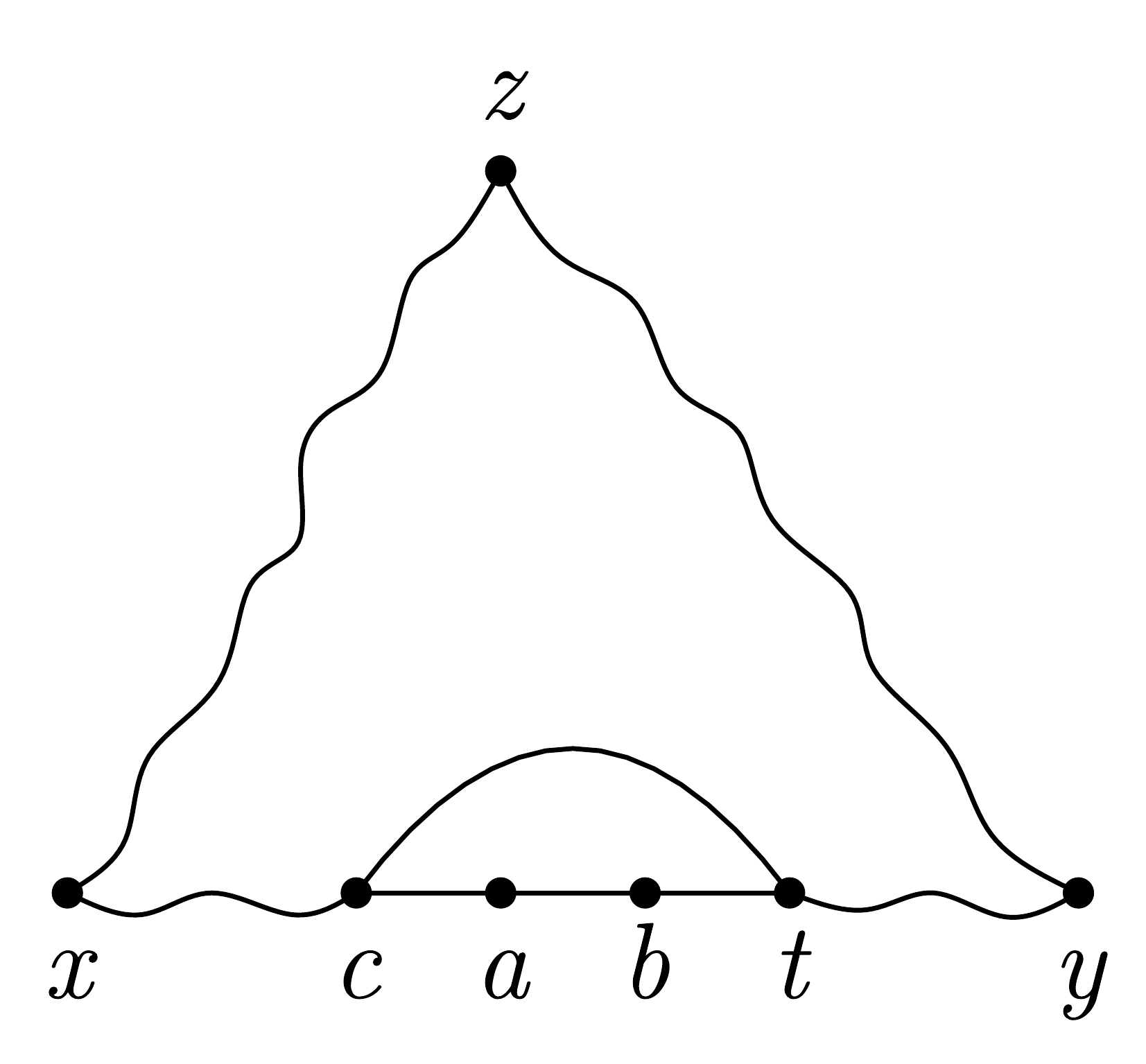}
		\includegraphics[width=.24\linewidth]{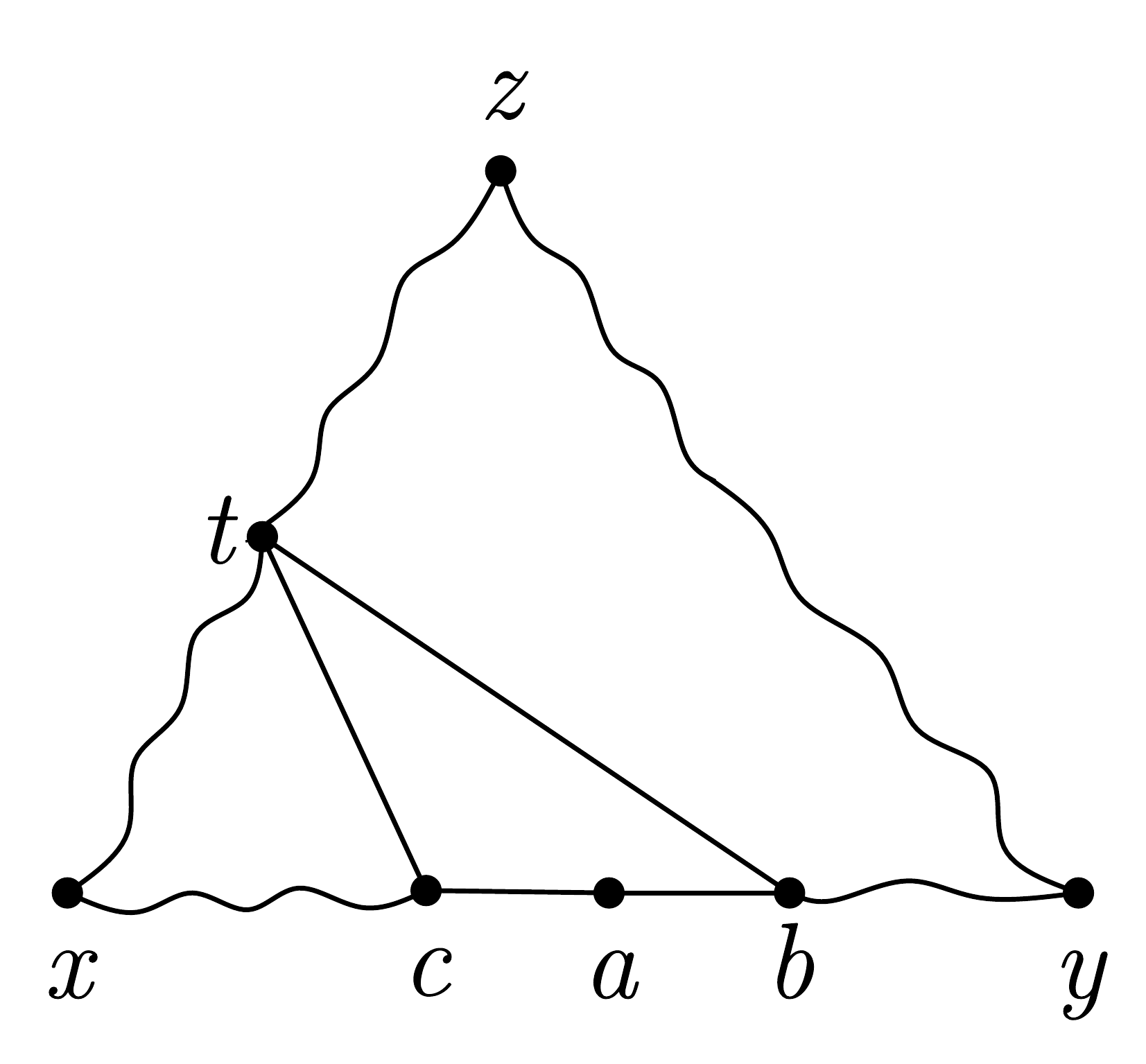}
		\includegraphics[width=.24\linewidth]{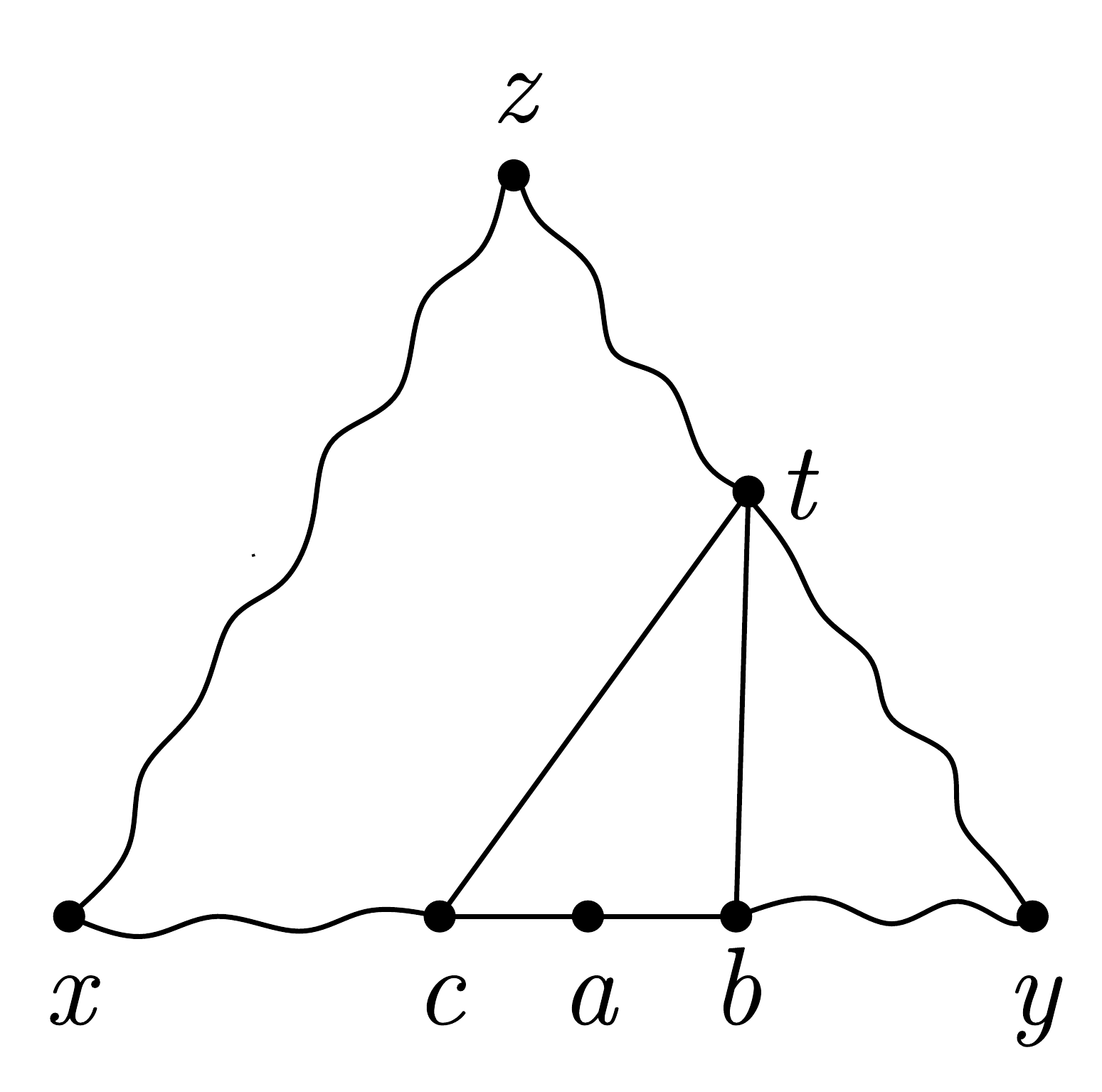}
		\caption{Illustration to the proof  of Theorem \ref{TH:4-chordal_graph}: (left pictures) both $t$ and $c$ are on $P(x,y)$; (right pictures) $c$ is on $P(x,y)$ but $t$ is not.}
		\label{fig:case1.1ofproproof}
		
	\end{figure}
	\begin{figure}    [htbp]
		\centering
		\includegraphics[scale=0.28]{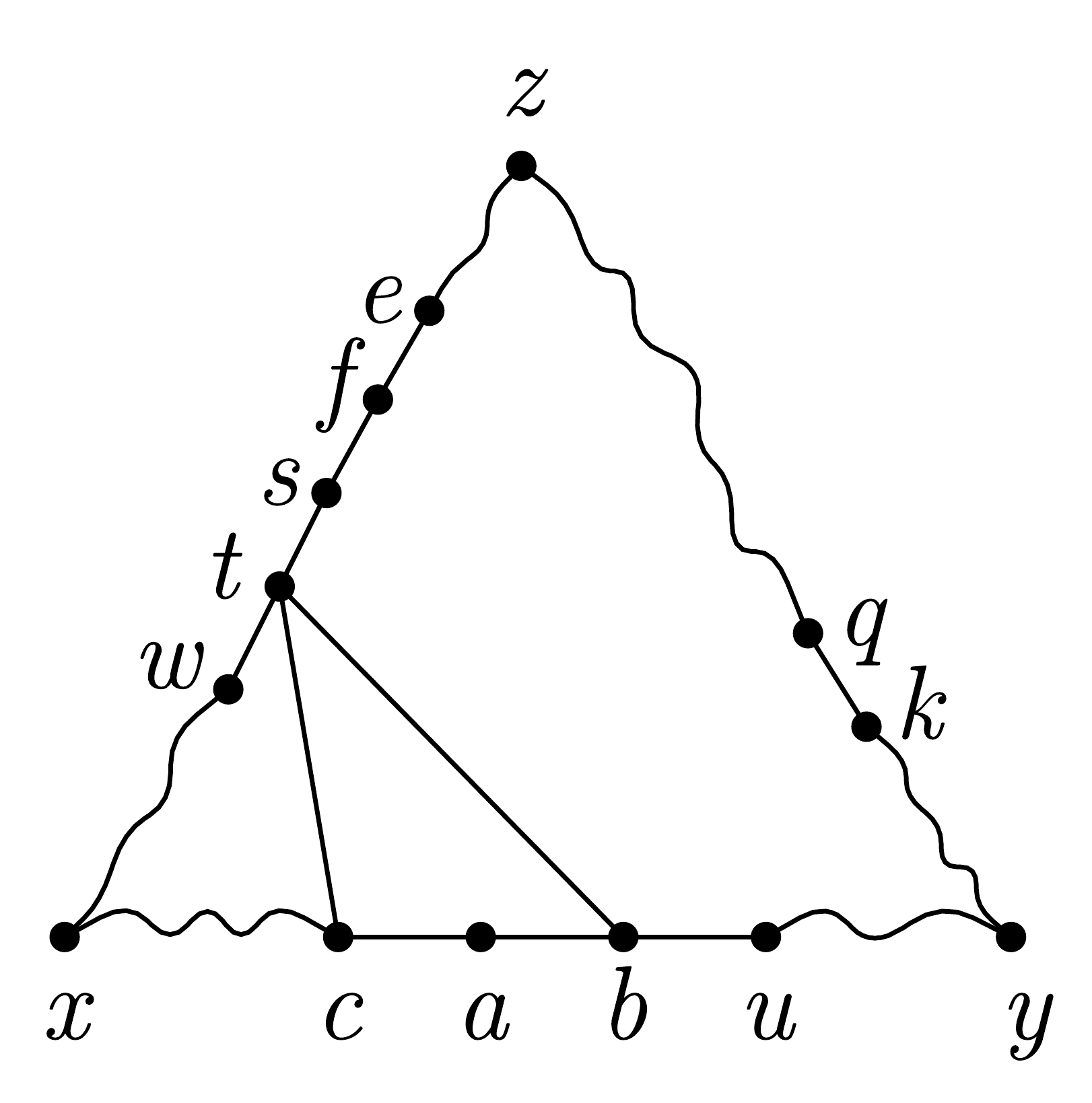}
		\caption{Illustration to the proof  of Theorem \ref{TH:4-chordal_graph}: $t$ is on $P(x,z)$.}
		\label{fig:case4ofproof}
	\end{figure}
	
	Since paths $P(x,z)$ and $P(x,y)$ are shortest, $d(x,t)\leq d(x,c)+1$ and $d(x,t)+1\geq d(x,c)+2$ must hold.
	Therefore,
	\begin{equation}\label{eq:2}
		d(x,t)=d(x,c)+1=d(x,a).
	\end{equation}
	
	From $d(x,t)=d(x,a)$, we also get $d(x,w)=d(x,c)$ (see Figure \ref{fig:case4ofproof}).
	If vertices $w$ and $c$ are adjacent then, by Lemma \ref{lm:multistory_building}, there is
	an isometric building B$(x'|wc)$ in $G$, where $x'\in I(x,w)\cap I(x,c)$. If vertices $w$ and $c$ are not adjacent then, by the Cycle Lemma applied to $C:=P(x,c) \cup P(x,t) \cup {tc}$ and edge $tc$, there must exist a vertex $g\in P(x,c)\cup P(x,t)$ at distance $d(x,c)-1$ from $x$ which forms  a $C_4$ with $w,t,c$ (see Figure \ref{fig:case2ofproproof}). Note that $c\neq x$ as $a$ is not adjacent to any vertex of $P(x,z)$.  This completes the construction of left parts (left to the vertex $a$) in all  forbidden subgraphs from Figure \ref{fig:eightforbidden4Chordal}. If $t=z$ then, by symmetry, we get the same construction on the right side of $a$ and, therefore, get either one of the graphs (a)-(b) from Figure \ref{fig:eightforbidden4Chordal} as an induced subgraph of $G$ or a graph that has a graph (d) from Figure \ref{fig:eightforbidden4Chordal} as an induced subgraph. So, in what follows, we will assume that $t\neq z$.
	
	\begin{figure}    [htbp]
		\centering
		\includegraphics[width=.24\linewidth]{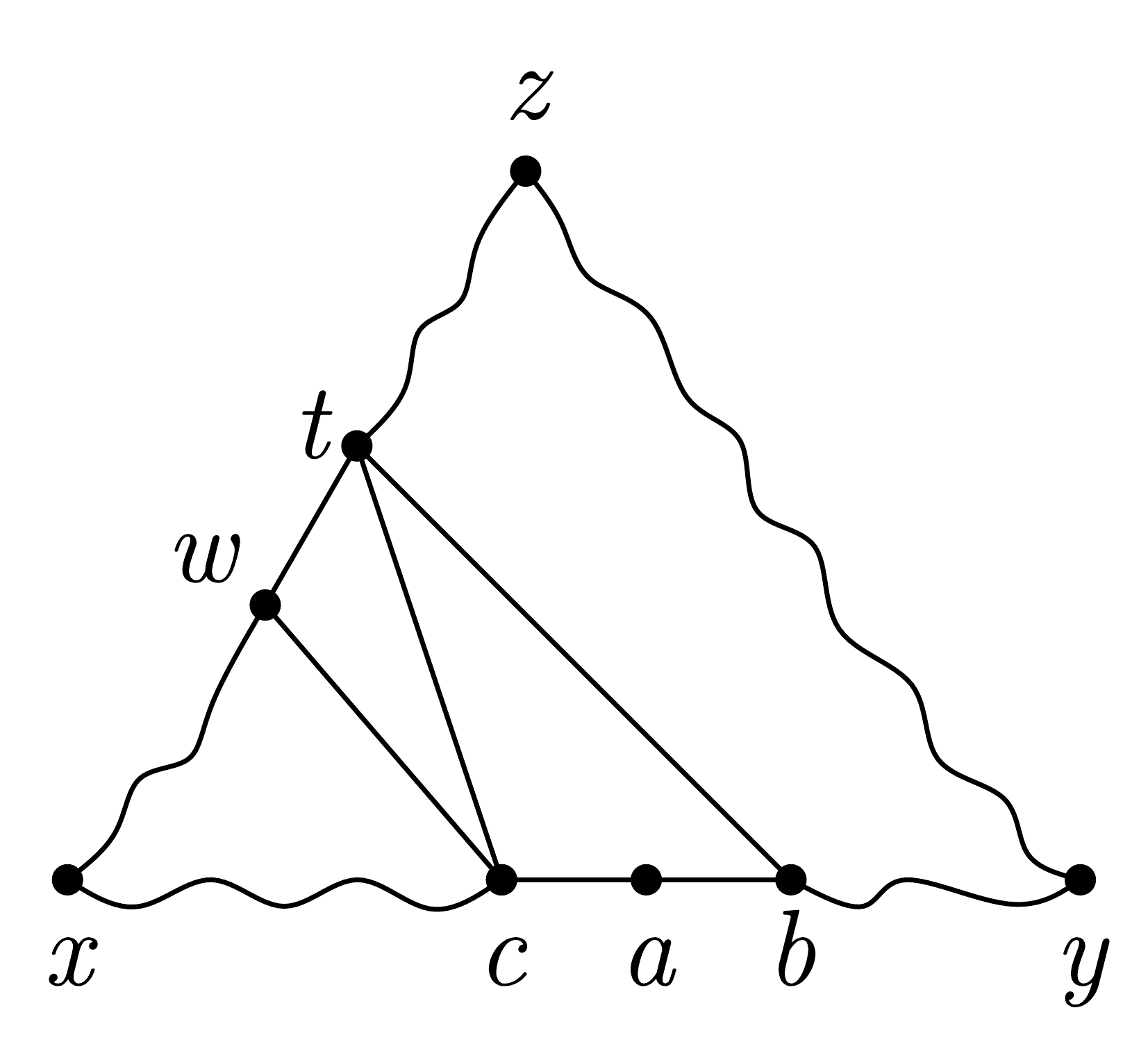}
		\includegraphics[width=.24\linewidth]{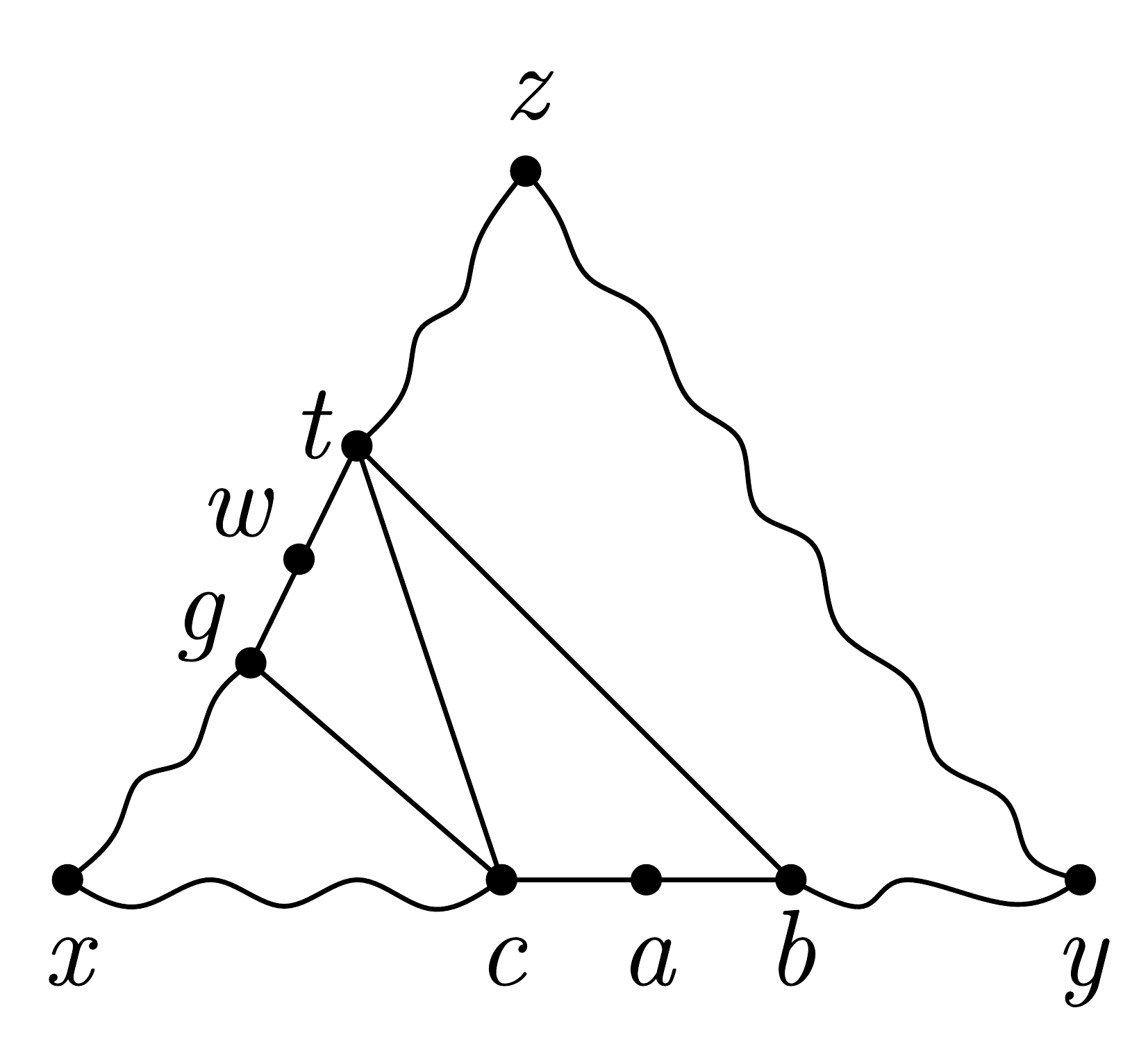}
		\includegraphics[width=.24\linewidth]{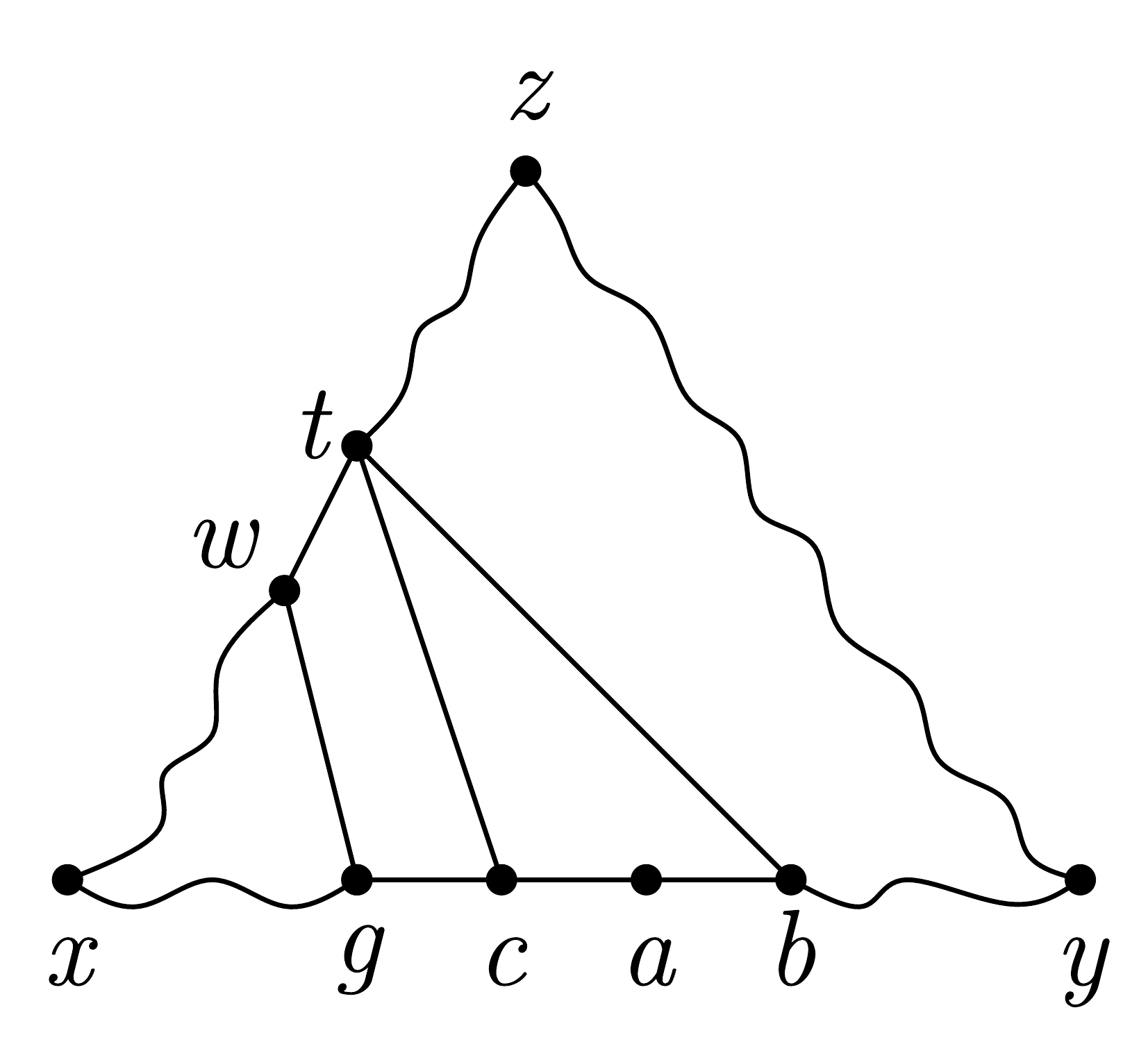}
		\caption{Illustration to the proof  of Theorem \ref{TH:4-chordal_graph}: (left picture)  $c$ and $w$ are adjacent; (right and middle pictures) $c$ and $w$ are not adjacent and there exists a vertex $g\in P(x,c)\cup P(x,t)$ at distance $d(x,c)-1$ from $x$.}
		\label{fig:case2ofproproof}
	\end{figure}
	
	Also, in what follows, if $P(u,v)$ is a shortest path of $G$ and $w$ is a vertex of  $P(u,v)$, then by $P(u,w)$ we denote a subpath of $P(u,v)$ connecting $u$ and $w$.
	
	We will distinguish between two cases, whether vertices $c$ and $b$ have a common neighbor on path $P(y,z)$ or not.
	
	\medskip
	
	\noindent
	\emph{Case 1: Vertices $c$ and $b$ have also a common neighbor $t'$ on path $P(y,z)$.}
	
	\medskip
	\noindent
	First, notice that vertices $w$ and $t'$ cannot be adjacent by the minimality of
	the sum $d(x,y)+d(y,z)+d(z,x)$ (in case $t'w\in E$, consider geodesic triangle  $\bigtriangleup(x,y,t')$, formed by shortest paths $P(x,y),  P(y,t')$ and $P(x,w)\cup\{t'\}$, and vertex $a$ from $P(x,y)$, where $P(y,t')$ and $P(x,w)$ are subpath of shortest paths $P(y,z)$ and $P(x,y)$, respectively). Additionally, by distance requirements, $t'$ is not adjacent to any other vertices from $P(x,w)\cup P(x,c)\setminus \{c\}$ (otherwise, there is a path from $x$ to $b$ via $t'$ which is shorter than $P(x,b)$).
	
	With similar to earlier arguments, we get the needed  construction on the right side of $a$ in all  forbidden subgraphs (e)-(j) from Figure \ref{fig:eightforbidden4Chordal}. Furthermore, as $P(x,z), P(y,z)$ are shortest paths and $d(x,t)=d(x,c)+1$, $d(y,b)+1=d(y,t')$, necessarily, $d(z,t)\le d(z,t')$ and $d(z,t')\le d(z,t)$, i.e., $d(z,t')=(z,t)$.
	If vertices $t$ and $t'$ are adjacent then, by Lemma \ref{lm:multistory_building}, there is an isometric   building B$(z'|tt')$ in $G$, where $z'\in I(z,t)\cap I(z,t')$. If vertices $t$ and $t'$ are not adjacent (see Figure \ref{fig:case3ofproproof}) then, by the Cycle Lemma applied to $C:=P(z,t) \cup P(z,t') \cup \{ct,ct'\}$ and edge $tc$, there must exist a vertex $g\in P(z,t)\cup P(z,t')$ at distance $d(z,t)-1$ from $z$ which forms  a $C_4$ with $t,c,t'$.
	This completes the construction of upper parts (upper to the vertex $a$) in all  forbidden subgraphs (e)-(j) from Figure \ref{fig:eightforbidden4Chordal}.
	
	\begin{figure}[htbp]
		\centering
		\includegraphics[width=.24\linewidth]{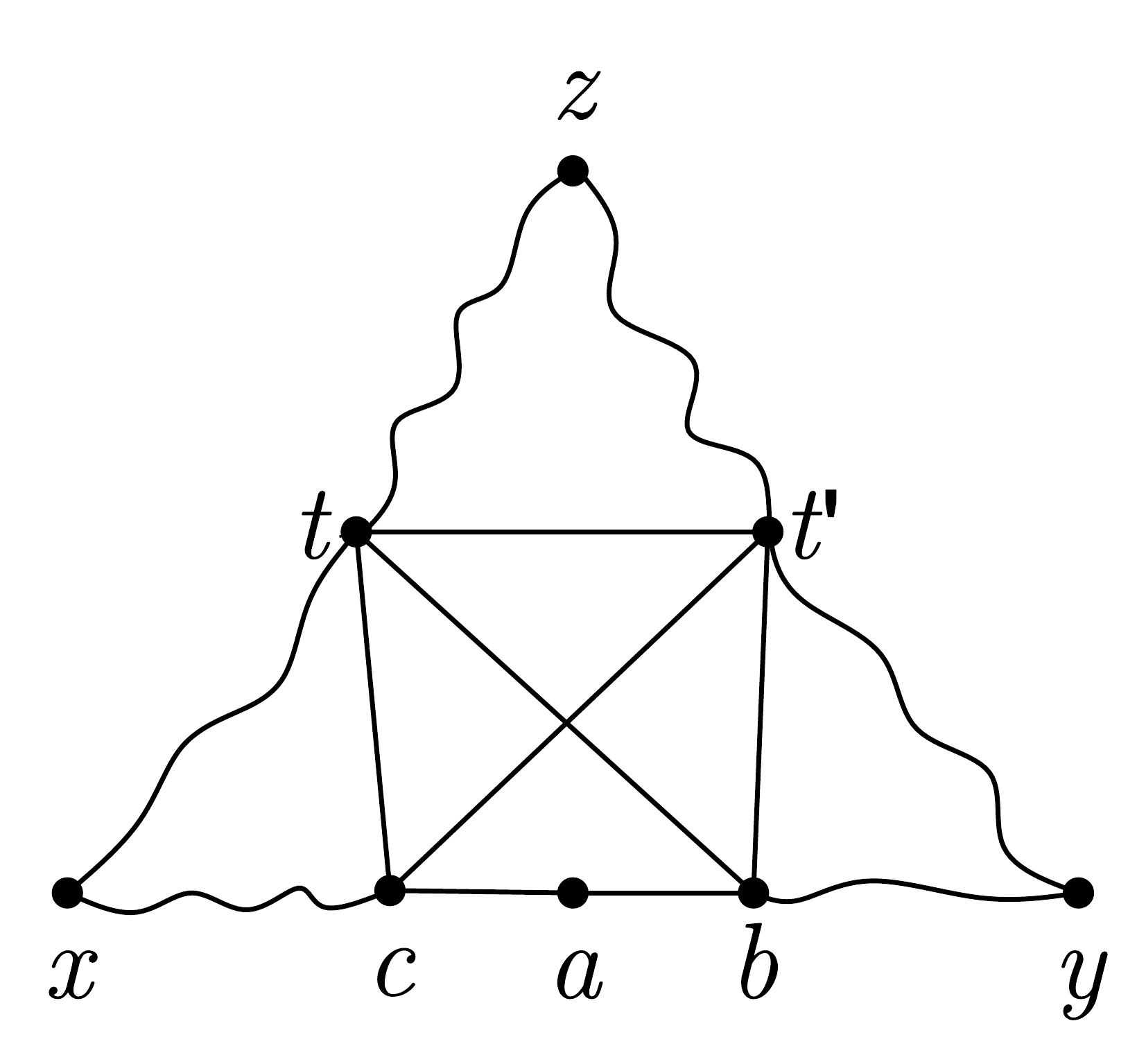}
		\includegraphics[width=.24\linewidth]{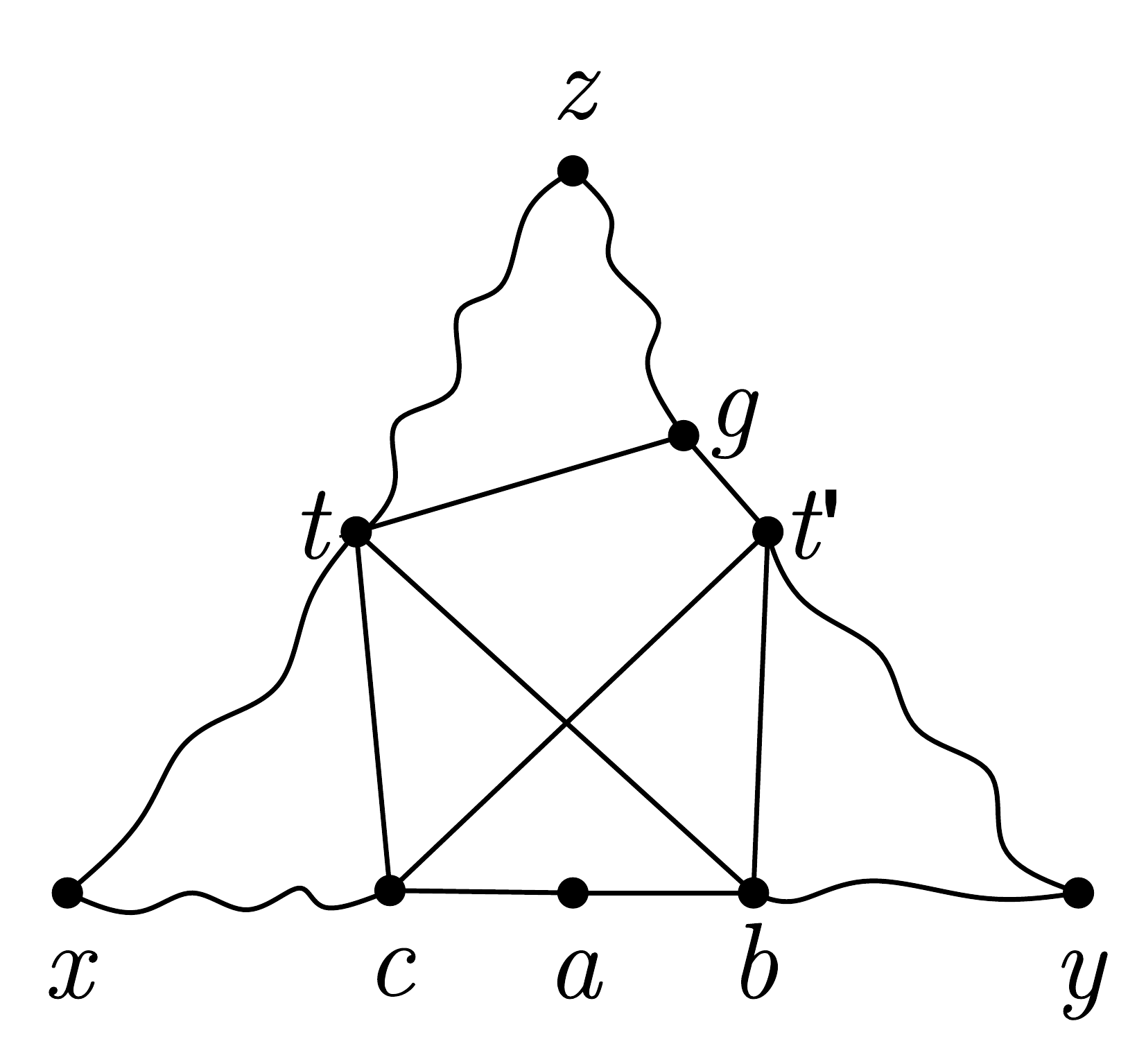}
		\includegraphics[width=.24\linewidth]{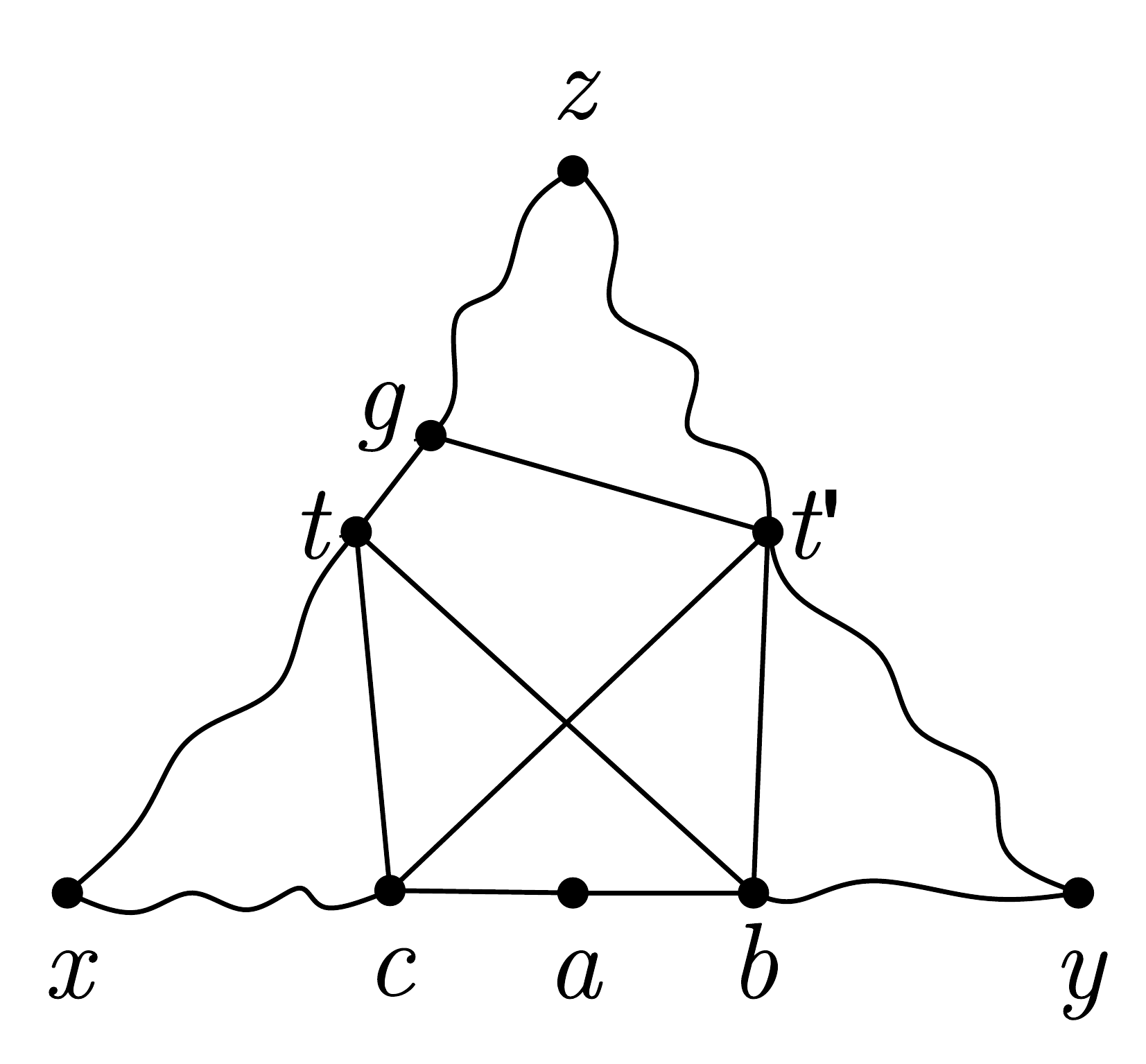}
		\caption{Illustration to the proof  of Theorem \ref{TH:4-chordal_graph}, Case 1:  (left picture) $t$ and $t'$ are adjacent; (right and middle pictures)  $t$ and $t'$ are not adjacent and there exists a vertex $g\in P(z,t)\cup P(z,t')$ at distance $d(z,t)-1$ from $z$.}
		\label{fig:case3ofproproof}
	\end{figure}
	
	\medskip
	
	\noindent
	\emph{Case 2: Vertices $c$ and $b$ have no common neighbors on path $P(y,z)$.}
	
	\medskip
	\noindent
	In what follows, we can assume that $b$ is not adjacent to any vertex of path $P(f,z)\subset P(t,z)$ (see Figure \ref{fig:case4ofproof}).
	If $b$ is adjacent to $f$, then vertices $f,s,t,b$ induce a $C_4$, when $bs\notin E$, or a diamond, otherwise. Hence, $G$ contains
	one of the graphs (a)-(b) from Figure \ref{fig:eightforbidden4Chordal} as an induced subgraph. Furthermore, since $P(t,z)$ is a shortest path, $b$ cannot have neighbors in $P(e,z)\subset P(t,z)$ (otherwise, there is a path from $t$ to $z$ via $b$ which is shorter than $P(t,z)$).

	Consider now the neighbor $u$ of $b$ on path $P(b,y)\subset P(x,y)$ (see Figure \ref{fig:case4ofproof}). We distinguish between three subcases.
	
	\medskip
	
	\noindent
	\emph{Case 2.1: Vertex $u$ is adjacent to a vertex of path $P(t,z)$.}
	
	\medskip
	\noindent
	Since $P(x,y)$ and $P(x,z)$ are shortest paths, $u$ is not adjacent to $t$ (otherwise, $d(c,u)=2<3$) and can be adjacent only to $s$, $f$ or $e$ (otherwise, there is a path from $t$ to $z$ via $u$ which is shorter than $P(t,z)$).
	If $u$ is adjacent to $s$, then vertices $u,s,t,b$ induce a $C_4$, when $bs\notin E$, or a diamond, otherwise. Hence, $G$ contains
	one of the graphs (a)-(b) from Figure \ref{fig:eightforbidden4Chordal} as an induced subgraph.
	If $u$ is adjacent to $f$ but not to $s$, then vertices $u,f,s,t,b$ form a 5-cycle. This cycle must have a chord. The only possible chord is $bs$ (see Figure \ref{fig:added}).  Hence, $G$ contains one of the graphs (c)-(d) from Figure \ref{fig:eightforbidden4Chordal} as an induced subgraph.
	If $u$ is adjacent to $e$ but not to $s,f$, then vertices $u,e,f,s,t,b$ form a 6-cycle. This cycle must have chords and the only possible chord is $bs$. Hence, we get a forbidden $C_5$ or $C_6$ in $G$.  We conclude that $u$ has no neighbors in $P(t,z)\subset P(x,z)$.
	
	\begin{figure}[htbp]
		\centering
		\includegraphics[width=.24\linewidth]{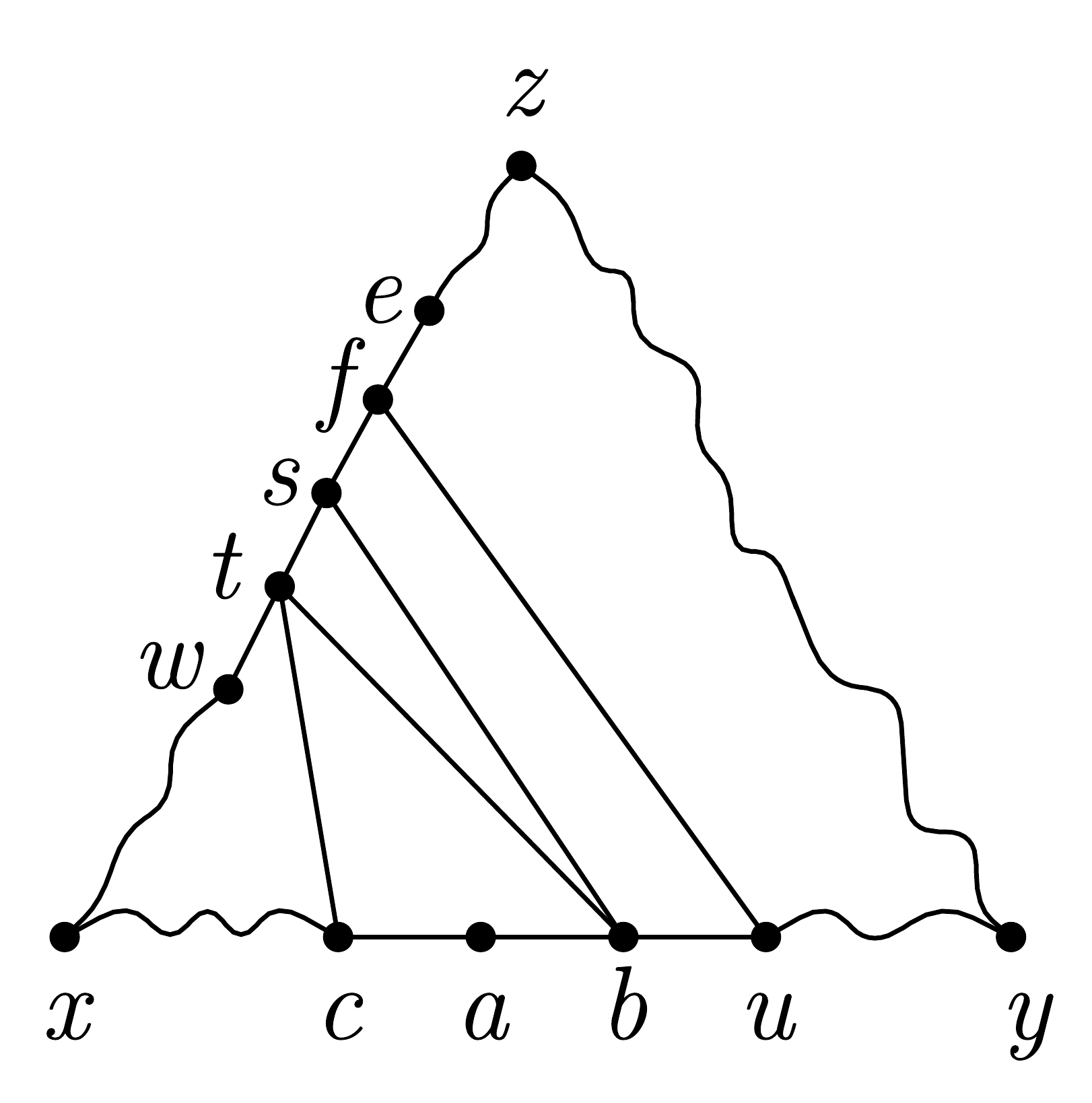}
		\includegraphics[width=.24\linewidth]{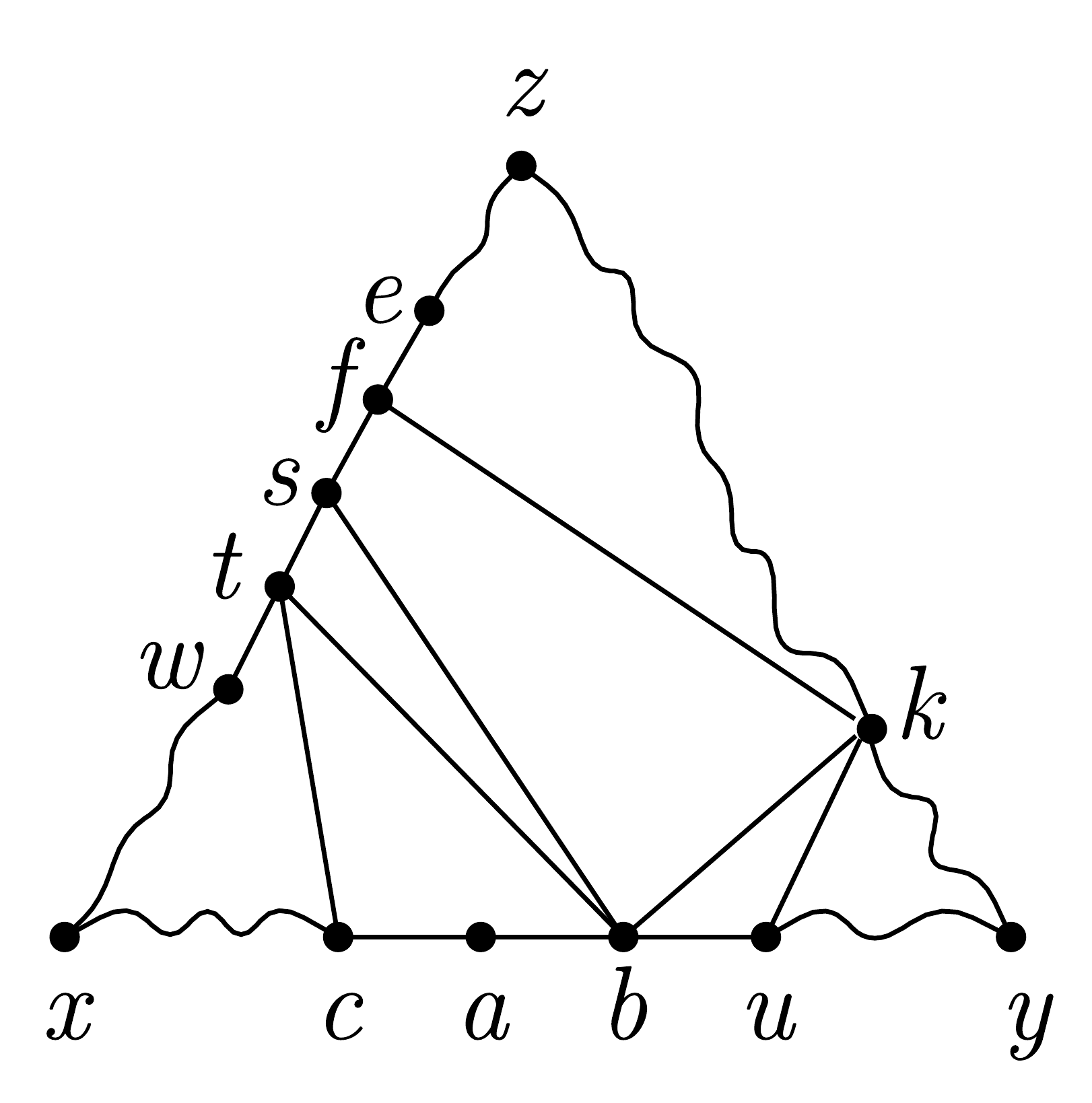}
		\includegraphics[width=.24\linewidth]{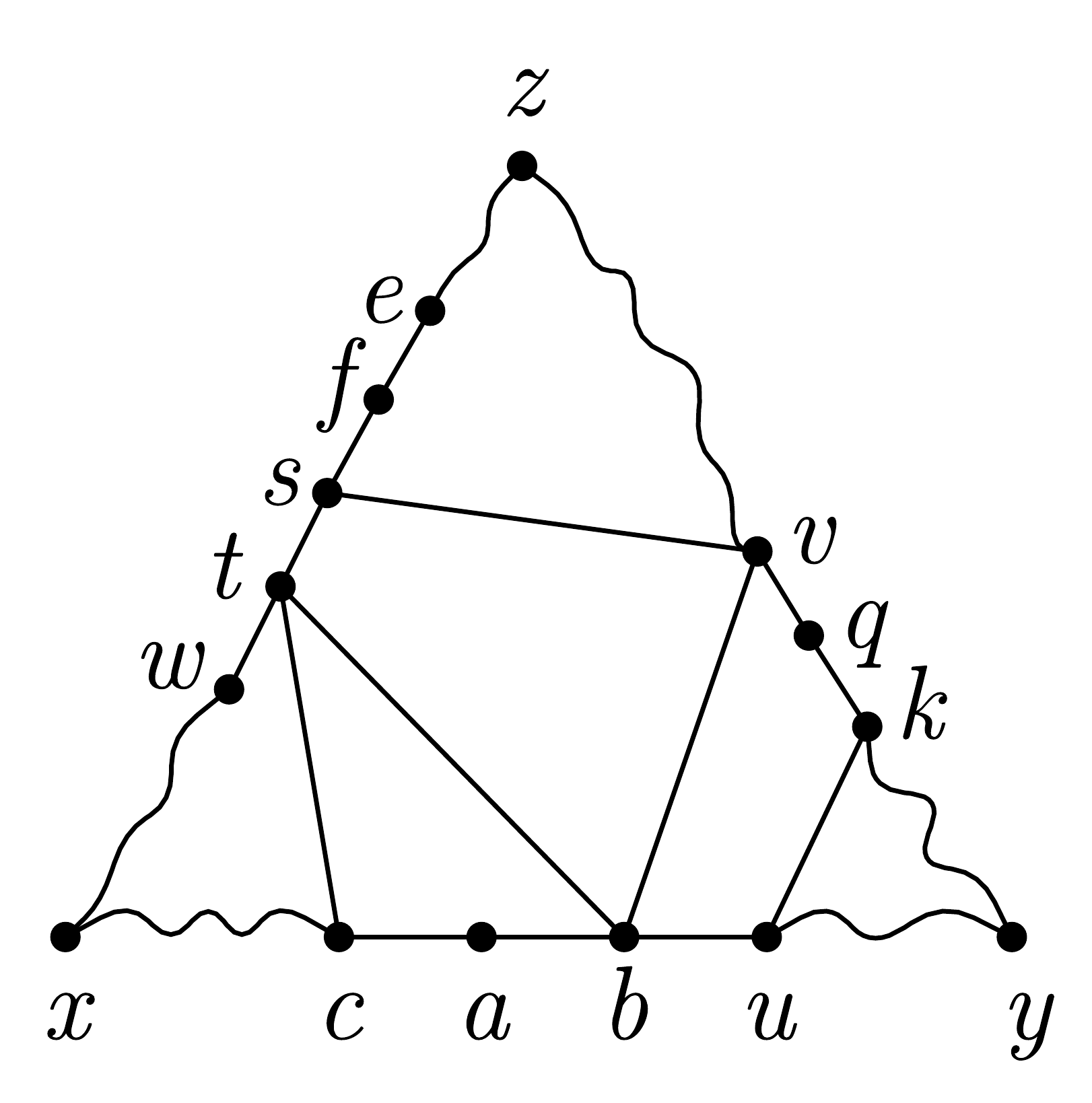}
		\caption{Illustration to the proof  of Theorem \ref{TH:4-chordal_graph}, Case 2: (left picture) $u$ is adjacent to $f$; (middle picture)  $k$ is adjacent to $f$; (right picture) $v$ is adjacent to $s$.}
		\label{fig:added}
	\end{figure}
	
	\medskip
	
	\noindent
	\emph{Case 2.2: Vertex $u$ has no neighbors in $P(t,z)$ but has a neighbor in  $P(y,z)$.}
	
	\medskip
	\noindent
	Let $k$ be a neighbor of $u$ in $P(y,z)$ closest to $z$.
	Hence, $u$ has no other neighbors in $P(k,z)\subset P(y,z)$ except $k$. By the distance requirements, $k$ is not adjacent to any vertex from $P(x,w)\cup  P(x,c)$  (otherwise, there is a path from $x$ to $u$ via $k$ which is shorter than $P(x,u)$).
	Assume $k$ has a neighbor in $P(t,z)$ and pick such a neighbor $v$ which is closest to $t$. Consider a cycle $Z$ formed by vertices $b,u,k$ and subpath $P(t,v)$ of $P(t,z)$. In this cycle of length at least 4, the only possible chords are $bs$ and $bk$. To avoid induced cycles of length at least 5, $v$ must coincide with $t,s$ or $f$.
	If $v=t$, then vertices $u,k,t,b$ induce a $C_4$, when $bk\notin E$, or a diamond, otherwise. Hence, $G$ contains
	one of the graphs (a)-(b) from Figure \ref{fig:eightforbidden4Chordal} as an induced subgraph.
	If $v=s$, then vertices $u,k,s,t,b$ form a 5-cycle. This cycle must have chords. The only possible chords are $bk$ and $bs$.  Hence, $G$ contains one of the graphs (a)-(d) from Figure \ref{fig:eightforbidden4Chordal} as an induced subgraph.
	If $v=f$, then vertices $u,k,f,s,t,b$ form a 6-cycle (see Figure \ref{fig:added}). This cycle must have chords and the only possible chords are $bs$ and $bk$. Hence, again $G$ contains one of the graphs (c)-(d) from Figure \ref{fig:eightforbidden4Chordal} as an induced subgraph.
	
	Thus, $k$ cannot have any neighbors in $P(t,z)$. Recall also that $u$ has no neighbors in $P(t,z)\cup P(k,z)\setminus\{k\}$. Applying the Cycle Lemma to edge $bu$, we get $bk\in E$ or $bq\in E$, where  $q$ is the neighbor of $k$ in $P(k,z)$.

	Denote by $v$ the neighbor of $b$ in $P(k,z)$ closest to $z$. Note that
	$v$ is not adjacent to $c$ as $c$ and $b$ do not have common neighbors in $P(y,z)$. By the minimality of the sum $d(x,y)+d(y,z)+d(z,x)$, $v$ is not adjacent to $w$ (in case $vw\in E$, consider geodesic triangle $\bigtriangleup(x,y,v)$, formed by shortest paths $P(x,y),  P(y,v)$ and $P(x,w)\cup\{v\}$, and vertex $a$ from $P(x,y)$). Additionally, by distance requirements, $v$ is not adjacent to any other vertices from $P(x,w)\cup P(x,c)$  (otherwise, there is a path from $x$ to $b$ via $v$ which is shorter than $P(x,b)$).
	
	As $P(k,z)$ is a shortest path, $d(v,k)\le 3$, when $bk\notin E$, and $d(v,k)\le 2$, when $bk\in E$.  Applying the Cycle Lemma to edge $bv$, we  get $vt\in E$, or $vs\in E$, or $vf\in E$ (and hence, $bs\in E$, otherwise, we get a $C_5$ in $G$), or $G$ contains an induced $C_4$ formed by vertices $b,s,v',v$ or vertices $t,b,v,v'$, where $v'$ is the neighbor of $v$ in $P(v,z)$.
	If $v$ is adjacent to $t$, then vertices $t,b$ with some vertices of path $P(v,k)\subset P(y,z)$ and possibly with $u$ (e.g., when $v=q$) induce a $C_4$ or a house or a diamond in $G$ (we leave these small technical details to the reader; one needs to analyze a subgraph on at most 7 vertices). Note that in this case $v\neq k$ as $k$ has no neighbors in $P(t,z)$.  So,   $G$ contains one of the graphs (a)-(d) from Figure \ref{fig:eightforbidden4Chordal} as an induced subgraph. Now, we assume that $tv\notin E$.
	If $v$ is adjacent to $s$, then vertices $v,s,t,b$ induce a $C_4$, when $bs\notin E$, or a diamond, otherwise  (see Figure \ref{fig:added}). Hence, $G$ contains
	one of the graphs (a)-(b) from Figure \ref{fig:eightforbidden4Chordal} as an induced subgraph.
	If $v$ is adjacent to $f$ but not to $s$, then vertices $v,f,s,b$ form an induced  $C_4$. Hence, $G$ contains one of the graphs (c)-(d) from Figure \ref{fig:eightforbidden4Chordal} as an induced subgraph. So, we can assume that $tv,vs,vf\notin E$.

	If vertices $b,s,v',v$ or vertices $t,b,v,v'$ form an induced $C_4$ in $G$, then $G$ contains one of the graphs (a)-(d) from Figure \ref{fig:eightforbidden4Chordal} as an induced subgraph. In this case, one needs to notice only that $v'$ is not adjacent to any vertex from $P(x,w)\cup P(x,c)$. If $v'$ is adjacent to a vertex $w'$ from $P(x,w)\cup P(x,c)\setminus\{c,w\}$, then we get a contradiction 
	with distance requirements (subpath $P(x,b)$ of a shortest path $P(x,y)$ is a shortest path) or with the minimality of the sum $d(x,y)+d(y,z)+d(z,x)$ (consider geodesic triangle $\bigtriangleup(x,y,v')$, formed by shortest paths $P(x,y),  P(y,v')$ and $P(x,w')\cup\{v'\}$, and vertex $a$ from $P(x,y)$). If $v'$ is adjacent to $c$ then $G$ has an induced $C_5$ formed by $c,a,b,v,v'$, which is impossible.  If $v'$ is adjacent to $w$ then $G$ has an induced $C_6$ formed by $c,a,b,v,v',w$, when $wc\in E$, or an induced $C_7$ formed by $c,a,b,v,v',w,g$, otherwise (vertex $g$ can be seen in Figure \ref{fig:case2ofproproof}). Both induced cycles are forbidden in $G$.
	
	\medskip
	
	\noindent
	\emph{Case 2.3: Vertex $u$ has no neighbors in $P(y,z)\cup P(t,z)$.}
	
	\medskip
	\noindent
	In this case, we have a contradiction to our assumption that the triple $x,y,z$ certifying $sl(G)>1$ has the smallest sum $d(x,y)+d(y,z)+d(z,x)$. For triple $c,z,y$ with shortest paths $P(c,y)\subset P(x,y)$, $P(z,y)$ and $P(c,z)=\{c\}\cup P(t,z)$, $u\in P(c,y)$ has no neighbors in $P(c,z)\cup P(z,y)$ and $d(x,y)+d(y,z)+d(z,x)> d(c,y)+d(y,z)+d(z,c)$. Here, $P(c,z)=\{c\}\cup P(t,z)$ is a shortest path since $d(x,w)=d(x,c)$ and hence $d(c,z)=d(w,z)=1+d(t,z)$.
\end{proof}

A \emph{ hole} is an induced cycle of length at least 5.  A {\sl domino} is an induced cycle on 6 vertices with one additional chord dividing it
into two cycles of length 4 each. A \emph{ house-hole-domino--free graph} (\emph{HHD-free graph}) is a graph not containing holes, houses and dominoes as induced subgraphs. As 4-chordal  graphs do not contain any induced holes and the graphs shown in Figure \ref{fig:eightforbidden4Chordal} have houses or dominoes as induced subgraphs, we have.

\begin{corollary}
	For every HHD-free graph $G$, $sl(G)\le 1$.
\end{corollary}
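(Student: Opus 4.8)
The plan is to derive the corollary as an immediate consequence of Theorem~\ref{TH:4-chordal_graph}. The first thing I would observe is that an HHD-free graph $G$ is in particular $4$-chordal: by definition a \emph{hole} is an induced cycle of length at least $5$, so forbidding holes forces every induced cycle of $G$ to have length at most $4$, which is exactly $4$-chordality. Hence Theorem~\ref{TH:4-chordal_graph} is applicable to $G$, and it suffices to verify condition (ii) for $G$.

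Next I would check that none of the forbidden graphs of Figure~\ref{fig:eightforbidden4Chordal} occurs as an induced subgraph of $G$. The crucial step is the observation that each of those configurations contains either an induced house or an induced domino. Recall that a house is exactly a building with $k=1$, i.e.\ a $C_4$ and a triangle glued along an edge (cf.\ Lemma~\ref{lm:multistory_building}), while a domino is two $C_4$'s sharing the dividing edge; the forbidden graphs are precisely assembled from such $C_4$'s, triangles, and the optional dotted chords, so that in every admissible configuration one of these two patterns appears. Since $G$ is HHD-free it contains neither houses nor dominoes as induced subgraphs, and because ``contains an induced house or domino'' is a hereditary property, no subgraph carrying such a pattern can sit inside $G$ as an induced subgraph. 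Therefore none of the graphs of Figure~\ref{fig:eightforbidden4Chordal} is an induced subgraph of $G$, which is statement (ii).

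Finally, applying the implication (ii)$\Rightarrow$(i) of Theorem~\ref{TH:4-chordal_graph} yields that $G$, and indeed each of its induced subgraphs, has slimness at most $1$; in particular $sl(G)\le 1$, as claimed.

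The main obstacle is the finite but somewhat delicate case check underlying the second paragraph: one must confirm that every one of the eight forbidden graphs in Figure~\ref{fig:eightforbidden4Chordal}---under each admissible setting of its dotted edges---genuinely carries a house or a domino as an \emph{induced} subgraph. The care is needed in ensuring the selected five- or six-vertex subset is truly induced (no stray chords), and in tracking how toggling an optional dotted edge switches between the house and the domino without creating a longer hole; this bookkeeping over the eight configurations is the one place where the argument is not entirely automatic.
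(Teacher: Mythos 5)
Your proposal is correct and follows essentially the same route as the paper: observe that hole-freeness makes $G$ $4$-chordal, note that every forbidden configuration of Figure~\ref{fig:eightforbidden4Chordal} contains an induced house or domino (hence cannot occur in an HHD-free graph), and invoke the implication (ii)$\Rightarrow$(i) of Theorem~\ref{TH:4-chordal_graph}. The paper compresses this into a single sentence and, like you, leaves the finite house-or-domino check of the figure's configurations to inspection, so your more explicit write-up matches it step for step.
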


\acknowledgements
\label{sec:ack}
We would like to thank anonymous reviewers for many useful suggestions and comments. 

\nocite{*}
\bibliographystyle{abbrvnat}
\bibliography{SlimnessOfGraphs}
\label{sec:biblio}

\end{document}